\documentclass[11pt]{article}

\usepackage{times}
\usepackage{latexsym}
\usepackage[margin=1in]{geometry}
\usepackage{graphicx}
\usepackage{amsthm}
\usepackage{amsmath}
\usepackage{amssymb}
\usepackage{amsfonts}
\usepackage[linesnumbered]{algorithm2e}
\usepackage{algpseudocode}
\usepackage{float}
\usepackage{fullpage}
\usepackage{mathptmx}
\usepackage{varwidth}
\usepackage{tabu}
\usepackage{xspace}
\usepackage{xcolor}
\usepackage{enumerate}
\usepackage{enumitem}
\newcommand{\eat}[1]{}

\newcommand{\mr}{\text{\sc mr}\xspace}
\newcommand{\sol}{\text{\sc sol}\xspace}
\newcommand{\alg}{\text{\sc alg}\xspace}
\newcommand{\opt}{\text{\sc opt}\xspace}
\newcommand{\apx}{\text{\sc apx}\xspace}
\newcommand{\mrs}{\text{\sc mrs}\xspace}

\newtheorem{theorem}{Theorem}
\newtheorem{lemma}[theorem]{Lemma}

\newtheorem{observation}[theorem]{Observation}
\newtheorem{corollary}[theorem]{Corollary}

\DeclareMathOperator*{\argmin}{argmin}

\newcommand{\bd}{\textbf{d}}
\newcommand{\bx}{\textbf{x}}
\newcommand{\by}{\textbf{y}}

\newcommand{\mst}{{\sc mst}\xspace}
\newcommand{\tsp}{{\sc tsp}\xspace}
\newcommand{\stt}{{\sc stt}\xspace}

\newcommand{\I}{{\bf I}}
\newcommand{\poly}{{\bf P}\xspace}
\newcommand{\np}{{\bf NP}}

\title{Robust Algorithms for TSP and Steiner Tree\thanks{An extended abstract of this paper appeared in the Proceedings of the $47^{th}$ {\em International Colloquium on Automata, Languages, and Programming {\rm \bf (ICALP)}}, 2020.}}
\author{
  Arun Ganesh
  \footnote{Department of Electrical Engineering and Computer Sciences, UC Berkeley. Email: \texttt{arunganesh@berkeley.edu}. Supported in part by NSF Award CCF-1535989.}
  \and
  Bruce M. Maggs
  \footnote{Department of Computer Science, Duke University, and Emerald Innovations. Email: \texttt{bmm@cs.duke.edu}. Supported in part by NSF Award CCF-1535972.}
  \and 
  Debmalya Panigrahi
  \footnote{Department of Computer Science, Duke University. Email: \texttt{debmalya@cs.duke.edu}. Supported in part by NSF grants CCF-1535972, CCF-1955703,  an NSF CAREER Award CCF-1750140, and the Indo-US Virtual Networked Joint Center on Algorithms under Uncertainty.}\\ 
  }
\date{}

\begin{document}

\maketitle

\begin{abstract}
    Robust optimization is a widely studied area in operations research, where the algorithm takes as input a range
    of values and outputs a single solution that performs well for the entire range.
    Specifically, a robust algorithm aims to minimize {\em regret}, defined as the maximum difference between the 
    solution's cost and that of an optimal solution in hindsight once the input has been realized. For graph problems 
    in \poly, such as shortest path and minimum spanning tree, robust polynomial-time algorithms that obtain a constant approximation
    on regret are known. 
    In this paper, we study robust algorithms for minimizing regret in \np-hard graph optimization problems, and give 
    constant approximations on regret for the classical traveling salesman and Steiner tree problems.
\end{abstract}

\thispagestyle{empty}
\setcounter{page}{0}
\clearpage

\section{Introduction}
\label{sec:intro}
In many graph optimization problems, the inputs are not known precisely and 
the algorithm is desired to perform well over a range of inputs. For 
instance, consider
the following situations. Suppose we are planning the
delivery route of a vehicle that must deliver goods to $n$ locations. 
Due to varying traffic conditions, the exact travel times between locations
are not known precisely, but a range of possible travel times is available
from historical data. Can we design a tour that is nearly optimal for {\em all}
travel times in the given ranges? Consider another situation where 
we are designing a telecommunication 
network to connect a set of locations. We are given cost estimates on connecting
every two locations in the network but these estimates might be off 
due to unexpected construction problems. Can we design the network
in a way that is nearly optimal for {\em all} realized construction costs?

\eat{
For instance, 
in the traveling salesman problem (\tsp), the goal is to find a tour of minimum
travel time that starts at a given location, visits a set of other locations, and 
returns to the original location. Practical applications include e.g. planning a 
delivery route of a vehicle that must deliver goods to $n$ locations. Due to varying 
traffic conditions, the exact travel times between locations
may not be known precisely, leading to the natural question: {\em given a range 
of possible travel times between every location pair, can we design a tour 
that is nearly optimal for all possible input realizations in the given ranges?}
Similarly, in the Steiner tree (\stt) problem, the goal is to design a tree of
minimum cost that connects a given set of locations called terminals. In practice, 
Steiner tree algorithms are used to e.g. design telecommunication networks. While 
planning for the construction of such a network, one may not know the precise cost 
of connecting two nodes in the network due to unexpected construction problems. In 
this case, if we instead are given a range of possible costs for each edge, 
{\em can we design a Steiner tree that is nearly optimal for all input realizations
in the given cost ranges?}

These questions have led to the field of {\em robust} graph algorithms.
Given a range of weights $[\ell_e, u_e]$ for every edge $e$, the goal is to find a solution that 
minimizes {\em regret}, defined as the maximum difference between the algorithm's cost and the optimal cost for any edge weights.
 The solution that achieves this minimum is called the {\em minimum regret solution} ($\mrs$), 
 and its regret is the {\em minimum regret} ($\mr$). In other words, one seeks a solution 
 $\sol$ that achieves, for all input realizations $\I$, $\sol(\I) \leq \opt(\I) + \mr$,
  or an approximation thereof: $\sol(\I) \leq \alpha\cdot  \opt(\I) + \beta \cdot\mr$.
 We call this an $(\alpha, \beta)$-robust algorithm.\footnote{This paper shares the notion 
 of regret and $(\alpha, \beta)$-approximation with a companion paper on universal 
algorithms for clustering submitted to the same conference~\cite{GaneshMP19b}.
Robust algorithms have previously been considered only for problems in \poly,
and as such, all results have $\alpha = 1$. Naturally, this is impossible to obtain
for any \np-hard problem such as \stt and \tsp considered in this paper.}
}

These questions have led to the field of {\em robust} graph algorithms.
Given a range of weights $[\ell_e, u_e]$ for every edge $e$, the goal is to find a solution that 
minimizes {\em regret}, defined as the maximum difference between the algorithm's cost and 
the optimal cost for any edge weights. In other words, the goal is to obtain:
$\min_{\sol} \max_{\I} (\sol(\I) - \opt(\I))$, where $\sol(\I)$ (resp. $\opt(\I)$) denotes the cost of $\sol$ (resp. the optimal solution) in instance $\I$, $\sol$ ranges over all feasible
solutions, and $\I$ ranges over all realizable inputs. We emphasize that $\sol$ is a fixed solution 
(independent of $\I$) whereas the solution determining $\opt(\I)$ is dependent on the input $\I$.
 The solution that achieves this minimum is called the {\em minimum regret solution} ($\mrs$), 
 and its regret is the {\em minimum regret} ($\mr$). In many cases, however, minimizing regret
 turns out to be \np-hard, in which case one seeks an approximation guarantee. Namely,
 a $\beta$-approximation algorithm satisfies, for all input realizations $\I$, 
 $\sol(\I) - \opt(\I) \leq \beta \cdot \mr$, i.e., $\sol(\I) \leq \opt(\I) + \beta \cdot \mr$.

It is known that minimizing regret is \np-hard for shortest path~\cite{Zielinski04} and 
minimum cut~\cite{AissiBV08} problems, and using a general theorem for 
converting exact algorithms to robust ones, $2$-approximations are
known for these problems~\cite{Conde12,KasperskiZ06}. In some cases, better results are known 
for special classes of graphs, e.g., \cite{KasperskiZ07}. Robust minimum spanning tree (\mst) has also been 
studied, although in the context of making exponential-time exact 
algorithms more practical~\cite{YamanKP01}. Moreover, robust optimization has been extensively researched for other (non-graph) problem domains in the operations 
research community, and has led to results in clustering \cite{Averbakh05, AverbakhB97,AverbakhB00,KouvelisY97}, 
linear programming \cite{InuiguchiM95,MausserL98}, and other areas \cite{Averbakh01,KasperskiZ06}.   More details can be found in the book by Kouvelis and Yu~\cite{KouvelisY96} and the survey by Aissi {\em et al.}~\cite{AISSI2009427}.


To the best of our knowledge, all previous work in polynomial-time algorithms for minimizing regret in robust graph optimization focused 
on problems in \poly. In this paper, we study robust graph algorithms for minimizing regret in \np-hard optimization problems. In particular, 
we study robust algorithms for the classical traveling salesman (\tsp) 
and Steiner tree (\stt) problems, that model e.g. the two scenarios described at the beginning
of the paper. 
As a consequence of the \np-hardness, we cannot hope to show 
guarantees of the form: $\sol(\I) \leq \opt(\I) + \beta \cdot \mr$, since for $\ell_e = u_e$
(i.e., $\mr = 0$), this would imply an exact algorithm for an \np-hard optimization problem.
Instead, we give guarantees: $\sol(\I) \leq \alpha \cdot \opt(\I) + \beta \cdot \mr$,
where $\alpha$ is (necessarily) at least as large as the best approximation guarantee for the optimization problem.
We call such an algorithm an $(\alpha, \beta)$-robust algorithm.
If both $\alpha$ and $\beta$ are constants, we call it a constant-approximation 
to the robust problem.
In this paper, our main results are constant approximation algorithms for the robust traveling
salesman and Steiner tree problems. 
We hope that our work will lead to further research in the field of robust approximation algorithms, 
particularly for minimizing regret in other \np-hard optimization problems in graph algorithms as well as in other domains.

\eat{
To the best of our knowledge, efficient robust algorithms for \np-hard optimization problems
have not been studied before our work. 
Naturally, the motivation for robust algorithms for \np-hard problems is exactly the same as 
that for problems in \poly. However, unlike
for problems in \poly, there is no black box reduction from approximation algorithms
for \np-hard problems to their robust counterparts. Indeed, replacing an exact algorithm with 
an approximate one in the reduction framework can yield arbitrarily bad solutions in general 
(we give an example in Appendix~\ref{section:deferred-tsp}). 
This makes the question of 
designing robust algorithms for \np-hard problems a particularly interesting one, and we 
initiate this line of research by studying two \np-hard graph optimization problems that 
arguably are the most natural candidates for the robust optimization framework. 
}

\subsection{Problem Definition and Results}
We first define the Steiner tree (\stt) and traveling salesman problems (\tsp).
In both problems, the input is an undirected graph $G = (V, E)$ with non-negative 
edge costs. In Steiner tree, we are also given a subset of vertices called 
{\em terminals} and the goal is to obtain a minimum cost connected subgraph of $G$ 
that spans all the terminals. In traveling salesman, the goal
is to obtain a minimum cost tour that visits every vertex in $V$\footnote{There are two common and equivalent assumptions made in the \tsp literature in order to achieve reasonable approximations. In the first assumption, the algorithms can visit vertices multiple times in the tour, while in the latter, the edges satisfy the metric property. We use the former in this paper.}. In the robust
versions of these problems, the edge costs are ranges $[\ell_e, u_e]$ 
from which any cost may realize.

Our main results are the following:
\begin{theorem}\label{thm:constantapprox}(Robust Approximations.)
There exist constant approximation algorithms for the robust traveling salesman and Steiner tree problems. 
\end{theorem}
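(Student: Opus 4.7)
The plan is to reduce the robust problem to standard deterministic approximation on the ``midpoint'' instance with edge costs $m_e := (\ell_e + u_e)/2$. I would run a known $\alpha$-approximation for the deterministic problem on $(G, m)$---the MST-based $2$-approximation for Steiner tree, or Christofides on the metric completion for TSP (legitimate because the paper permits tours that revisit vertices)---and return the resulting solution $S$, with $S(m) \leq \alpha\, \opt(m)$. The goal is to show that $S$ is automatically a good robust solution.

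\textbf{Key structural lemma.} The essential observation is that $\opt(m) \leq \tfrac12\bigl(\opt(\ell) + \opt(u)\bigr) + \mr$. To see this, note that cost is linear in edge weights, so $\mrs(m) = \tfrac12\bigl(\mrs(\ell) + \mrs(u)\bigr)$; by the defining property of $\mr$, $\mrs(\ell) \leq \opt(\ell) + \mr$ and $\mrs(u) \leq \opt(u) + \mr$; and $\opt(m) \leq \mrs(m)$ since $\mrs$ is feasible. Combining this with $\ell_e \geq 0$ (which gives $u_e \leq 2 m_e$, hence $S(\I) \leq S(u) \leq 2 S(m) \leq 2\alpha\, \opt(m)$ for every realization $\I$) and with $\opt(\ell) \leq \opt(\I)$, one obtains
\[
 S(\I) \;\leq\; \alpha\, \opt(\I) + \alpha\, \opt(u) + 2\alpha\, \mr.
\]

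\textbf{Main obstacle and resolution.} The remaining difficulty is the $\opt(u)$ term: in general $\opt(u)$ can far exceed $\opt(\I)$ when $\I$ is close to $\ell$, by much more than $\mr$, so one cannot simply absorb it. This is where problem-specific structure has to be invoked. For Steiner tree, I would argue that $\mrs$ is, up to a constant, ``not much more than'' $\opt$ on any realization by using tree-exchange: any edge in $\mrs$ with width large relative to the $\mr$ it can be attributed to can be swapped out along its fundamental cycle in $\mrs \cup \opt(\I)$, paying a width penalty that is charged against $\mr$. For TSP, an analogous cycle-exchange argument on tours gives the same kind of bound. In both cases this will yield $\opt(u) \leq O(\opt(\I) + \mr)$ and hence $S(\I) \leq O(\opt(\I)) + O(\mr)$. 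The technical heart of the proof, and the place where I expect the most care is needed, is in carrying out these exchange arguments---particularly for TSP, where the tour structure is less forgiving than the tree structure of Steiner tree---without losing more than constant factors; if a clean midpoint argument is not sufficient, a backup plan is to refine the algorithm to prefer ``narrow'' solutions, e.g., by breaking midpoint ties using $u_e$ or by blending a midpoint solution with a width-aware correction so that the returned $S$ itself avoids pathological widths.
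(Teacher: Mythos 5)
The midpoint reduction is precisely the approach the paper shows to be unworkable, and the failure mode lands on the step you yourself flag as the ``main obstacle.'' Your chain $S(\I) \le \alpha\,\opt(\I) + \alpha\,\opt(u) + 2\alpha\,\mr$ is correct, but the hoped-for bound $\opt(u) \le O(\opt(\I) + \mr)$ is false, and no exchange argument can rescue it: $\opt(u)$ is the cost of the cheapest solution when every edge is at its ceiling, which is decoupled from $\mr$ and from $\opt(\I)$ at a favorable realization $\I$. Concretely, the paper's instance in Section~\ref{section:doubletree} has fixed ``star'' edges of length $1-\epsilon$ from a hub $v'$ to $v_1,\dots,v_n$ and ``cycle'' edges with range $[0, 2-\tfrac{1}{n-1}]$; with $\epsilon = 1/n$ one gets $\opt(\ell) = O(1)$ and $\mr = O(1)$ but $\opt(u) = \Theta(n)$, so the desired bound fails by $\Theta(n)$ at $\I=\ell$. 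That instance is given in the paper exactly to show that a $2$-approximation run on midpoint weights is not $(\alpha,\beta)$-robust for any $\alpha,\beta = o(n)$: for $\epsilon > \tfrac{1}{2(n-1)}$ the star is the MST of the midpoint graph, so the midpoint heuristic commits to $\Omega(n)$ of fixed star edges that are worthless once the cycle edges collapse to $0$. The tree-exchange repair you sketch also does not target $\opt(u)$ at all---it would bound how much $\mrs(\I)$ exceeds $\opt(\I)$, which is already $\le \mr$ by definition---so the problematic term is left untouched.

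The paper's proof is structurally different and never analyzes a single fixed realization. It augments the standard LP with the ``regret constraint set'' $\sum_e d_e x_e \le \opt(\bd) + r$ for all $\bd$, builds an \emph{approximate} separation oracle certifying $\sum_e d_e x_e \le \alpha'\,\opt(\bd) + \beta' r$ (Lemma~\ref{lemma:tspsep} for TSP, Lemma~\ref{lemma:steinersep2} for Steiner tree), and rounds via the generic Theorem~\ref{thm:rounding}. Separating the regret constraints reduces to approximately solving the underlying problem on \emph{derived} weights $a_e = u_e x_e - \ell_e x_e + \ell_e$ (which depend on the current fractional $\bx$); for TSP a doubled MST on the derived instance suffices, while for Steiner tree the additive offset $\sum_e(\ell_e - \ell_e x_e)$ forces a simultaneous ``difference approximation'' $c(\alg\setminus\sol) \le O(1)\cdot c(\sol\setminus\alg)$ plus a second-objective guarantee, obtained by an alternating-phase local search (Lemma~\ref{lemma:differenceapprox2}). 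Your instinct that exchange/local-search structure matters is right for Steiner tree, but it has to be deployed inside the separation oracle on the derived instance, not to bound $\opt(u)$, which cannot be bounded.
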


\noindent{\bf Remark:}
The constants we are able to obtain for the two problems 
are very different: $(4.5, 3.75)$ for \tsp (in Section~\ref{section:tsp}) 
and $(2755, 64)$ for \stt (in Section~\ref{section:steiner}).  While we did not attempt
to optimize the precise constants, obtaining small constants for \stt comparable to
the \tsp result requires new ideas beyond our work and is an interesting open problem.



We complement our algorithmic results with lower bounds. Note that 
if $\ell_e = u_e$, we have $\mr = 0$ and thus an $(\alpha, \beta)$-robust 
algorithm gives an $\alpha$-approximation for precise inputs. So, hardness of 
approximation results yield corresponding lower bounds on $\alpha$.
More interestingly, we show that hardness of approximation results also yield lower bounds 
on the value of $\beta$ (see Section~\ref{sec:hardness} for details):
%
%
\begin{theorem}(APX-hardness.)
A hardness of approximation of $\rho$ for \tsp (resp., \stt) under $\poly\not=\np$ implies
that it is \np-hard to obtain
$\alpha \leq \rho$ (irrespective of $\beta$) and $\beta \leq \rho$ (irrespective of $\alpha$)
for robust \tsp (resp., robust \stt).
\end{theorem}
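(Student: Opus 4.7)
The theorem combines two independent hardness claims, and the plan is to prove each by a separate reduction from a $\rho$-hard exact instance.

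For the $\alpha$ bound, I would use the trivial reduction: given an exact instance $(G,c)$, form the robust instance with $\ell_e = u_e = c_e$ on every edge. Only one realization exists, so $\mr = 0$, and any $(\alpha,\beta)$-robust algorithm outputs a solution $\sol$ satisfying $\sol(\I) \leq \alpha \cdot \opt(\I) + \beta \cdot 0 = \alpha \cdot \opt(\I)$ in that realization. This is an $\alpha$-approximation for the exact problem, so $\rho$-hardness of approximation immediately forces $\alpha \geq \rho$.

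For the $\beta$ bound, the strategy is to exhibit a realization $\I^*$ in which $\opt(\I^*) = 0$---so the $\alpha \cdot \opt(\I^*)$ term drops out---while keeping $\mr$ at most $\opt$ and arranging that $\sol(\I^*)$ equals the exact cost of the algorithm's output. Starting from a $\rho$-hard exact instance $(G,c)$ with optimum $\opt$, I would build a robust instance on the multigraph $G'$ obtained by \emph{duplicating every edge}: each $e \in E(G)$ is replaced by two parallel copies $e^{(1)},e^{(2)}$, both with range $[0,c_e]$, while the terminal set (for \stt) or the vertex set (for \tsp) is unchanged. The key device is the \emph{mirror} of any feasible solution $\sol$ in $G'$: swap each used copy $e^{(i)} \in \sol$ with its twin $e^{(3-i)}$ to obtain $\sol'$, which is again a feasible solution of the same combinatorial type and is edge-disjoint from $\sol$. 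For \stt this is immediate (no Steiner tree can contain both copies of $e$, since that would form a $2$-cycle); for \tsp the same holds after a canonicalization step ensuring $\sol$ uses at most one copy per edge, which is without loss of generality because replacing two copies by a single copy with doubled multiplicity preserves the tour and its cost at $\I^*_{\sol}$.

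Three observations then conclude the argument: (i) $\mr \leq \opt$, since routing the exact optimum $T^*$ through the copies $e^{(1)}$ produces a robust solution whose cost in any realization is at most $\sum_{e \in T^*} c_e = \opt$, hence regret $\leq \opt$; (ii) in the adversarial realization $\I^*_{\sol}$ that sets every edge used by $\sol$ to $c_e$ and every other edge to $0$, the mirror satisfies $\sol'(\I^*_{\sol}) = 0$, so $\opt(\I^*_{\sol}) = 0$; and (iii) at the same realization $\sol(\I^*_{\sol})$ equals the exact cost $c(\pi(\sol))$ of the natural projection $\pi(\sol)$ of $\sol$ back to $G$. Combining these with the $(\alpha,\beta)$-robust guarantee at $\I^*_{\sol}$,
\[
c(\pi(\sol)) \;=\; \sol(\I^*_{\sol}) \;\leq\; \alpha \cdot \opt(\I^*_{\sol}) + \beta \cdot \mr \;\leq\; \beta \cdot \opt,
\]
so $\pi(\sol)$ is a polynomial-time-computable $\beta$-approximate solution to the exact instance, and $\rho$-hardness forces $\beta \geq \rho$. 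The main obstacle is engineering the reduction so that the adversarially chosen realization admits a zero-cost optimum; the mirror/edge-duplication device handles this uniformly for both \tsp and \stt, and the resulting multigraph is harmless because parallel edges neither ease \tsp (repeated traversals are already allowed) nor \stt (duplicate edges can always be reduced to the cheaper copy), so exact $\rho$-hardness on simple graphs transfers.
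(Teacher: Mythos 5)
Your $\alpha$ argument is identical to the paper's (it states this observation directly before the theorem: setting $\ell_e = u_e$ makes $\mr = 0$, so an $(\alpha,\beta)$-robust algorithm is an $\alpha$-approximation). Your $\beta$ argument uses a genuinely different reduction from the paper's. The paper adds an apex vertex $v^*$ with a fixed zero-cost edge to some $v$ and \emph{special} edges from $v^*$ to every other vertex with range $[0,\infty]$; it then argues that no $(\alpha,\beta)$-robust $\sol$ can use a special edge (else it pays $\infty$ in some realization), and in the realization where all specials are $0$ there is a zero-cost spanning tree so $\opt = 0$. Your construction duplicates every edge with range $[0,c_e]$ and uses a ``mirror'' of $\sol$ to exhibit a zero-cost optimum. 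Both are aimed at the same punchline, but the paper's $[0,\infty]$ device is doing real work: it \emph{forces} $\sol$ to stay on the fixed-cost edges, which is exactly what your argument lacks.

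Specifically, the gap is in your claim that $\opt(\I^*_\sol) = 0$. The mirror is only defined, and only has zero cost, when $\sol$ uses at most one of $\{e^{(1)}, e^{(2)}\}$ for every $e$. For \stt you can patch this by noting that pruning a connected subgraph to a tree never increases cost in any realization, so the pruned solution inherits the $(\alpha,\beta)$-robustness; a tree never contains both copies. For \tsp the proposed canonicalization does \emph{not} work: if $\sol$ uses $e^{(1)}$ $k_1\geq 1$ times and $e^{(2)}$ $k_2 \geq 1$ times and you replace it by $\sol''$ using $e^{(1)}$ $(k_1+k_2)$ times, then $\sol''$ is not guaranteed $(\alpha,\beta)$-robust (in the realization $d_{e^{(1)}} = c_e$, $d_{e^{(2)}}=0$ its cost strictly exceeds that of $\sol$), so you cannot invoke the robust guarantee for $\sol''$; and if you instead apply the guarantee for the original $\sol$ at $\I^*_{\sol''}$, the quantity you bound is $\sol(\I^*_{\sol''}) = k_1 c_e + \dots$, which undercounts $c(\pi(\sol)) = (k_1+k_2)c_e + \dots$. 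Nor can you stick with $\I^*_\sol$: there both copies get cost $c_e$, and if (say) $e$ is a bridge that $\sol$ crosses via both copies, every tour pays at least $2c_e$, so $\opt(\I^*_\sol) > 0$ and the $\alpha\cdot\opt$ term does not vanish. The paper's construction avoids this entirely because the adversarial edges have upper bound $\infty$, which makes it impossible for a robust $\sol$ to touch them; your range $[0,c_e]$ gives the adversary no such leverage. (A repair in the spirit of the paper would be to set the second copy's range to $[0,\infty]$ rather than $[0,c_e]$.) Finally, note that the paper proves a more general statement for any covering problem satisfying three structural properties, whereas your reduction is tailored to \tsp and \stt.
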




\eat{
There are two steps involved in any approximation algorithm that 
uses an LP relaxation: computing an optimal fractional solution and a rounding 
algorithm that preserves the optimality of the solution up to an approximation
factor loss. Typically, the first step is routine, either using a black box 
LP solver for a polynomial-size relaxation or using a simple separation oracle
for an exponential-size relaxation, and the ingenuity of the algorithm mostly 
resides in rounding the fractional solution. Interestingly, the situation is 
different for robust algorithms: we give a generic rounding algorithm that 
applies to a broad range of problems including \stt and \tsp, but most of the 
technical work lies in being able to obtain a separation oracle for the LP
relaxation.
}

\subsection{Our Techniques}
We now give a sketch of our techniques. Before doing so, we note that for problems in \poly with linear objectives, it is known that running an exact algorithm using weights $\frac{\ell_e + u_e}{2}$ gives a $(1, 2)$-robust solution~\cite{Conde12,KasperskiZ06}. One might hope that a similar result can be obtained for \np-hard problems by replacing the exact algorithm with an approximation algorithm in the above framework. Unfortunately, we show in Section~\ref{section:tsp} that this is not true in general. In particular, we give a robust \tsp instance where using a 2-approximation for \tsp with weights $\frac{\ell_e + u_e}{2}$ gives a solution that is {\bf not} $(\alpha, \beta)$-robust for {\em any} $\alpha = o(n), \beta = o(n)$. 
More generally, a black-box approximation run on a fixed realization could output a solution including edges that have small weight relative to $\opt$ for that realization (so including these edges does not violate the approximation guarantee), but these edges could have large weight relative to $\mr$ and $\opt$ in other realizations, ruining the robustness guarantee. This establishes a qualitative difference between robust approximations for problems in \poly considered earlier and \np-hard problems being considered in this paper, and demonstrates the need to develop new techniques for the latter class of problems.

\smallskip
\noindent
\textbf{LP relaxation.} We denote the input graph $G = (V, E)$.
For each edge $e\in E$, the input is a range $[\ell_e, u_e]$ where the actual edge weight
$d_e$ can realize to any value in this range. The robust version of a graph optimization problem is 
is then described by the LP
$$\min\{r : {\bf x} \in P; \sum_{e \in E} d_e x_e \leq \opt(\bd) + r,\ \forall \bd\},$$
where $P$ is the standard polytope for the optimization problem, 
and $\opt(\bd)$ denotes the cost of an optimal solution
when the edge weights are $\bd = \{d_e: e\in E\}$.
%
%
That is, this is the standard LP for the problem, but with the additional constraint that the fractional solution $\bx$ must have regret at most $r$ for any realization of edge weights. We call the additional constraints the {\em regret constraint set}. 
 Note that setting $\bx$ to be the indicator vector of \mrs and $r$ to $\mr$ gives a feasible solution to the LP; thus, the LP optimum is at most $\mr$. 

\smallskip
\noindent
\textbf{Solving the LP.}
We assume that the constraints in $P$ are separable in polynomial time 
(e.g., this is true for most standard optimization problems including \stt and \tsp).
So, designing the separation oracle comes down to separating the regret constraint set,
which requires checking that:
$$\max_{\bd} \left[\sum_{e \in E} d_ex_e - \opt(\bd)\right] =$$
$$\max_{\bd} \max_{\sol} \left[\sum_{e \in E} d_ex_e - \sol(\bd)\right] =  \max_{\sol} \max_{\bd} \left[\sum_{e \in E} d_ex_e - \sol(\bd)\right] \leq r.$$
Thus, given a fractional solution $\bx$, we need to find an integer solution $\sol$ and a weight vector $\bd$ that maximizes the regret of $\bx$ given by $\sum_{e \in E} d_ex_e - \sol(\bd)$. 
Once $\sol$ is fixed, finding $\bd$ that maximizes the regret is simple:  If $\sol$ does not include an edge $e$, then to maximize $\sum_{e \in E} d_e x_e - \sol(\bd)$, we set $d_e = u_e$; else if $\sol$ includes $e$, we set $d_e = \ell_e$. Note that in these two cases, edge $e$ contributes $u_e x_e$ and $\ell_e x_e - \ell_e$ respectively to the regret. The above maximization thus becomes: 
\begin{equation}
    \label{eq:omitted}
    \max_{\sol} \left[\sum_{e\notin \sol} u_e x_e + \sum_{e\in \sol} (\ell_e x_e - \ell_e)\right]
    = \sum_{e\in E} u_e x_e - \min_{\sol} \sum_{e\in \sol} (u_e x_e - \ell_e x_e + \ell_e).
\end{equation}
Thus, $\sol$ is exactly the optimal solution with edge weights $a_e := u_ex_e - \ell_ex_e + \ell_e$.
(For reference, we define the {\em derived} instance of the \stt problem as one with edge weights $a_e$.)

Now, if we were solving a problem in \poly, we would simply need to solve the problem on the derived instance. Indeed, we will show later that this yields an alternative technique for obtaining robust algorithms for problems in \poly, and recover existing results in~\cite{KasperskiZ06}.  However, we cannot hope to find an 
optimal solution to an \np-hard problem. Our first compromise is that we settle for 
an {\em approximate} separation oracle. More precisely, our goal is to show that there exists some 
fixed constants $\alpha', \beta' \geq 1$ such that if 
$\sum_e d_e x_e > \alpha' \cdot \opt(\bd) + \beta' \cdot r$ for some $\bd$, 
then we can find $\sol, \bd'$ such that $\sum_e d_e' x_e > \sol(\bd') + r$. Since the
LP optimum is at most $\mr$, we can then obtain an $(\alpha', \beta')$-robust {\em fractional} solution
using the standard ellipsoid algorithm. 

For \tsp, we show that the above guarantee can be achieved by the classic \mst-based $2$-approximation
on the derived instance.
The details appear in Section~\ref{section:tsp}. 
Although \stt also admits a $2$-approximation based on the \mst solution, this turns out to
be insufficient for the above guarantee. Instead, we use a different approach here. 
We note that the regret of the fractional solution against any fixed solution $\sol$
(i.e., the argument over which Eq.~\eqref{eq:omitted} maximizes)
can be expressed as the following difference:
$$\sum_{e \notin \sol} (u_ex_e - \ell_e x_e + \ell_e) - \sum_{e \in E} (\ell_e - \ell_e x_e)
= \sum_{e \notin \sol} a_e - \sum_{e \in E} b_e, \text{ where } b_e := \ell_e - \ell_e x_e.$$ 
The first term is the weight of edges in the derived instance that are {\em not} in \sol.
The second term corresponds to a new \stt instance with different edge weights $b_e$. 
It turns out that the overall problem now reduces to showing the following 
approximation guarantees on these two \stt instances ($c_1$ and $c_2$ are constants):\\
%
%
$$
\text{(i)} \sum_{e \in \alg \setminus \sol} a_e \leq c_1 \cdot \sum_{e \in \sol \setminus \alg} a_e
\qquad \text{and} \qquad
\text{(ii)} \sum_{e \in \alg} b_e \leq c_2 \cdot \sum_{e \in \sol} b_e.
$$
%
%
%
Note that guarantee (i) on the derived instance is an unusual ``difference approximation'' that is stronger than usual approximation guarantees. Moreover, we need these approximation bounds to {\em simultaneously} hold, i.e., hold for the same \alg. Obtaining these dual approximation bounds simultaneously forms the most technically challenging part of our work, and is given in Section~\ref{section:steiner}.


\eat{
For \tsp, obtaining the above guarantee is relatively straightforward. Suppose $\sum_e d_e x_e > 2 \opt(\bd) + r$ for some $\bd$. Note that $\opt(\bd)$ contains a spanning tree, and for $\sol$ defined by doubling this spanning tree, $\sum_e d_e x_e > \sol(\bd) + r$. Since doubled spanning tree solutions use every edge either 0 or 2 times, we show that we can write the problem of finding the doubled spanning tree solution with maximum regret as an instance of \mst with suitably chosen weights. This gives us an approximate separation oracle as desired. Details of this algorithm appear in Section~\ref{section:tsp} and Appendix~\ref{section:deferred-tsp}.

For \stt, the connection to \mst is more fragile. Typically, one finds an \mst on the metric closure of the terminals, which gives a constant approximation to the optimal Steiner tree. But, when graph edges have variable lengths, it is unclear how to perform a metric closure. In particular, the shortest path between two terminals in the ``best'' and ``worst'' outcomes might be different. One idea is to perhaps use robust shortest path algorithms as a subroutine, but even if the metric closure were canonical, the resulting spanning tree does not necessarily use every graph edge 0 or 2 times. This prevents us from being able to encode the problem of maximizing regret for \stt as an instance of \mst, as we did for robust \tsp above.

Instead, we use a stronger notion of approximation on the derived instance as the basis for our separation oracle for robust \stt. Note that the regret of a solution corresponds to the total weight of edges \textit{not} in the solution $\sum_{e \notin \sol} (u_ex_e - \ell_e x_e + \ell_e)$ minus an additive offset $\sum_{e \in \sol} (\ell_e - \ell_e x_e)$. For now, ignore the additive offset, and focus on the weights of edges outside the solution. Standard approximation algorithms do not give approximation guarantees on the total weight of edges not in the solution. Nevertheless, we show that an approximation algorithm that achieves the stronger notion of a ``difference approximation'', i.e. finds a solution $\alg$ such that $c(\alg \setminus \opt) \leq \gamma \cdot c(\opt \setminus \alg)$ for some constant $\gamma$, suffices to get an approximation on the weights of edges not in the solution. We also show that a local search algorithm given in \cite{GGKMSSV17} satisfies this stricter condition. Along the way, we also add to the results in \cite{GGKMSSV17} by showing their algorithm gives a $(4+\epsilon)$-approximation algorithm for Steiner tree.
The proof is given in Appendix~\ref{section:localsearch}.

However, this approximation guarantee might still be ruined by the additive offset $\sum_e (\ell_e - \ell_e x_e)$. Indeed, handling the additive offset is the most technically challenging part of the paper. At a high level, the idea is to obtain a bi-objective Steiner tree solution, where in addition to the edge weights given in the derived instance, we are also optimizing a second set of edge weights $\ell_e - \ell_e x_e$ corresponding to the additive offset. Our goal is to obtain a single solution that approximates both objectives within constant factors, but with the additional restriction that one of the approximations must obey the stricter difference approximation condition. To this end, we use local search in {\em alternating phases}, where each phase optimizes a single objective. The challenge, of course, is that the current phase might undo the improvements of the previous phase given that they are optimizing for different objectives. The key property we show is that if the local search is constrained to use ``greedy'' edge swaps, then the current phase preserves a fraction of the benefit from the previous phase, thereby leading to convergence overall. Many details are needed to make this argument formal, and these appear in Section~\ref{section:steiner}.
}

\smallskip
\noindent
\textbf{Rounding the fractional solution.}
After applying our approximate separation oracles, we have a fractional solution $\bx$ such that for all edge weights $\bd$, we have $\sum_e d_e x_e \leq \alpha' \cdot \opt(\bd) + \beta' \cdot \mr$. Suppose that, ignoring the regret constraint set, the LP we are using has integrality gap at most $\delta$ for precise inputs. Then a natural rounding approach is to search for an integer solution \alg that has minimum regret with respect to the specific solution $\delta \bx$, i.e., \alg satisfies:
\begin{equation}
    \label{eq:rounding}
    \alg = \argmin_{\sol} \max_{\bd} \left[\sol(\bd) - \delta \sum_{e \in E} d_e x_e\right].
\end{equation}
 Since the integrality gap is at most $\delta$, we have $\delta \cdot \sum_{e \in E} d_e x_e \geq \opt(\bd)$ for any $\bd$. This implies that:
 $$\mrs(\bd) - \delta \cdot \sum_{e \in E} d_e x_e  \leq \mrs(\bd) - \opt(\bd) \leq \mr.$$  
 Hence, the regret of \mrs with respect to $\delta x$ is at most $\mr$.  Since \alg has minimum regret with respect to $\delta \bx$, \alg's regret is also at most $\mr$.  Note that $\delta \bx$ is a $(\delta\alpha', \delta\beta')$-robust solution.  Hence, $\alg $ is a $(\delta\alpha', \delta\beta' + 1)$-robust solution. 

If we are solving a problem in \poly, finding $\alg$ that satisfies Eq.~\eqref{eq:rounding} is easy. So, using an integral LP formulation (i.e., integrality gap of 1), we get a $(1, 2)$-robust algorithm overall for these problems. This exactly matches the results in \cite{KasperskiZ06}, although we are using a different set of techniques. Of course, for \np-hard problems, finding a solution  $\alg$ that satisfies Eq.~\eqref{eq:rounding} is \np-hard as well. It turns out, however, that we can design a generic rounding algorithm that gives the following guarantee:

\begin{theorem}\label{thm:generalrounding}
There exists a rounding algorithm that takes as input an $(\alpha, \beta)$-robust fractional solution to \stt (resp. \tsp) and outputs a $(\gamma \delta \alpha, \gamma \delta \beta + \gamma)$-robust integral solution, where $\gamma$ and $\delta$ are respectively the best approximation factor and integrality gap for (classical) \stt (resp., \tsp).
\end{theorem}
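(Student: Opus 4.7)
The plan is to approximately solve Eq.~\eqref{eq:rounding} by reducing the problem of minimizing regret against the fractional reference $\delta\bx$ to a standard min-weight instance of the underlying classical problem, and then apply the $\gamma$-approximation to this instance.

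The key observation is that for any fixed integer solution $\sol$, the inner maximization in Eq.~\eqref{eq:rounding} decouples across edges. Setting $y_e := \delta x_e$, each $e\in\sol$ contributes $u_e(1-y_e)$ if $y_e\le 1$ and $\ell_e(1-y_e)$ otherwise (selecting $d_e$ according to the sign of $1-y_e$), while each $e\notin\sol$ contributes $-\ell_e y_e$. Summing and collecting terms yields
\[
f(\sol) \;:=\; \max_{\bd}\Bigl[\sol(\bd) - \delta\sum_{e\in E} d_e x_e\Bigr] \;=\; \sum_{e\in\sol} w_e \;-\; \sum_{e\in E} \ell_e y_e,
\qquad w_e := u_e - (u_e-\ell_e)\min(y_e,1),
\]
where the subtracted term is independent of $\sol$. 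Since $w_e\ge \ell_e\ge 0$ and the feasibility polytope is unchanged, minimizing $f(\sol)$ over integral $\sol$ is precisely a classical instance of the underlying problem. I run the $\gamma$-approximation on the weights $\{w_e\}$ and let $\alg$ denote its output.

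The main obstacle is that the additive offset $\sum_e \ell_e y_e$ prevents the $\gamma$-approximation on $\sum_e w_e$ from translating directly into a $\gamma$-approximation on $f$. Comparing $\alg$ to the feasible solution $\mrs$ gives $\sum_{e\in\alg}w_e\le \gamma\sum_{e\in\mrs}w_e$, which on subtracting the offset rearranges to $f(\alg)\le \gamma\cdot f(\mrs) + (\gamma-1)\sum_e \ell_e y_e$. Here $f(\mrs)\le \mr$ because $\mrs(\bd)-\opt(\bd)\le \mr$ for every $\bd$ and the integrality gap gives $\opt(\bd)\le \delta\sum_e d_e x_e$. The remaining offset $\delta\sum_e \ell_e x_e$ is controlled by invoking the $(\alpha,\beta)$-robustness of $\bx$ at the realization $d_e\equiv\ell_e$: denoting the optimum at this realization by $O$, we get $\delta\sum_e\ell_e x_e\le \delta\alpha\cdot O+\delta\beta\cdot\mr$, and $O\le \opt(\bd)$ for every $\bd$ by monotonicity of $\opt$ in the edge weights. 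Combining these bounds yields $f(\alg)\le (\gamma-1)\delta\alpha\cdot\opt(\bd) + \bigl(\gamma+(\gamma-1)\delta\beta\bigr)\cdot\mr$ for any $\bd$.

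To conclude, by the definition of $f$ we have $\alg(\bd)\le f(\alg) + \delta\sum_e d_e x_e$ for every realization $\bd$, and $\delta\sum_e d_e x_e \le \delta\alpha\cdot\opt(\bd) + \delta\beta\cdot\mr$ by $(\alpha,\beta)$-robustness of $\bx$. Substituting the bound on $f(\alg)$ and summing, the coefficients collapse exactly to $\alg(\bd)\le \gamma\delta\alpha\cdot\opt(\bd) + (\gamma\delta\beta+\gamma)\cdot\mr$, establishing the desired $(\gamma\delta\alpha,\gamma\delta\beta+\gamma)$-robustness.
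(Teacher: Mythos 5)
Your proof is correct and takes essentially the same approach as the paper: you construct the same auxiliary classical instance with weights $w_e = \max\{u_e(1-\delta x_e),\, \ell_e(1-\delta x_e)\} + \delta\ell_e x_e$ (algebraically identical to your $u_e - (u_e-\ell_e)\min(\delta x_e,1)$), run the $\gamma$-approximation on it, and chain the approximation guarantee, the integrality-gap bound $f(\mrs)\le\mr$, and the $(\alpha,\beta)$-robustness of $\bx$. The only cosmetic difference is how the offset $(\gamma-1)\delta\sum_e\ell_e x_e$ is absorbed --- the paper uses $\ell_e\le d_e$ directly to fold it into $\delta\sum_e d_e x_e$ before invoking robustness once, whereas you invoke robustness at the realization $d_e\equiv\ell_e$ plus monotonicity of $\opt$; both yield the same constants.
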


We remark that while we stated this rounding theorem for \stt and \tsp here, we actually give a more general version (Theorem~\ref{thm:rounding}) in Section~\ref{sec:general} that applies to a broader class of covering problems including set cover, survivable network design, etc. and might be useful in future research in this domain.

\subsection{Related Work}

We have already discussed the existing literature in robust optimization for minimizing regret. 
Other robust variants of graph optimization have also been studied in the literature. In the \textit{robust combinatorial optimization} model proposed by Bertsimas and Sim~\cite{BertsimasS2003}, edge costs are given as ranges as in this paper, but instead of optimizing for all realizations of costs within the ranges, the authors consider a model where at most $k$ edge costs can be set to their maximum value and the remaining are set to their minimum value. The objective is to minimize the maximum cost over all realizations. 
In this setting, there is no notion of regret and an approximation algorithm for the standard problem translates to an approximation algorithm for the robust problem with the same approximation factor. 
 
In the \textit{data-robust model}~\cite{DhamdhereGRS05}, the input includes a polynomial number of explicitly defined ``scenarios'' for edge costs, with the goal of finding a solution that is approximately optimal for all given scenarios. 
That is, in the input one receives a graph and a polynomial number of scenarios $\bd^{(1)}, \bd^{(2)} \ldots \bd^{(k)}$ and the goal is to find $\alg$ whose maximum cost across all scenarios is at most some approximation factor times $\min_{\sol} \max_{i \in [k]} \sum_{e \in \sol} d_{e}^{(i)}$. In contrast, in this paper, we have exponentially many scenarios and look at the maximum of $\alg(\bd) - \opt(\bd)$ rather than $\alg(\bd)$. 
A variation of this is the \textit{recoverable robust model}~\cite{ChasseinG15}, where after seeing the chosen scenario, the algorithm is allowed to ``recover'' by making a small set of changes to its original solution.

Dhamdhere {\em et al.}~\cite{DhamdhereGRS05} also studies the \textit{demand-robust model}, where edge costs are fixed but the different scenarios specify different connectivity requirements of the problem. 
The algorithm now operates in two phases: In the first phase, the algorithm builds a partial solution $T'$ and then one of the scenarios (sets of terminals) $T_i$ is revealed to the algorithm. In the second phase, the algorithm then adds edges to $T'$ to build a solution $T$, but must pay a multiplicative cost of $\sigma_k$ on edges added in the second phase. The demand-robust model was inspired by a two-stage stochastic optimization model studied in, e.g., \cite{SwamyS06,ShmoysS06,SwamyS12,
DhamdhereGRS05,FeigeJMM07,KhandekarKMS13,GuptaNR14,GuptaNR16, GuptaPRS04,CharikarCP05} where the scenario is chosen according to a distribution rather than an adversary. 

Another related setting to the data-robust model is that of {\em robust network design}, introduced to model uncertainty in the demand matrix of network design problems (see the survey by Chekuri~\cite{Chekuri07}). This included the well-known VPN conjecture (see, e.g., \cite{GuptaKKRY01}), which was eventually settled in \cite{GoyalOS13}.
In all these settings, however, the objective is to minimize the maximum cost over all realizations, whereas in this paper, our goal is to minimize the maximum {\em regret} against the optimal solution. 
\subsection{Roadmap}

We present the general rounding algorithm for robust problems in Section~\ref{sec:general}. In Section~\ref{section:tsp}, we use this 
rounding algorithm to give a robust algorithm for the 
Traveling Salesman problem. 
Section~\ref{section:localsearch} gives a local search algorithm
for the Steiner Tree problem. Both the local search algorithm and the rounding 
algorithm from Section~\ref{sec:general} are then used to give a
robust algorithm for the Steiner Tree problem in Section~\ref{section:steiner}.
The hardness results for robust problems appear in Section~\ref{sec:hardness}.
Finally, we conclude with some interesting directions of future work in 
Section~\ref{sec:conclusion}.
\section{A General Rounding Algorithm for Robust Problems}\label{sec:general}

In this section we give the rounding algorithm of Theorem~\ref{thm:generalrounding}, which is a corollary of the following, more general theorem:

\begin{theorem}\label{thm:rounding}
Let $\mathcal{P}$ be an optimization problem defined on a set system $\mathcal{S} \subseteq 2^{E}$ that seeks to find the set $S\in \mathcal{S}$ that minimizes $\sum_{e \in S} d_e$, i.e., the sum of the weights of elements in $S$. In the robust version of this optimization problem, we have $d_e\in [\ell_e, u_e]$ for all $e\in E$.

Consider an LP formulation of $\mathcal{P}$ (called $\mathcal{P}$-LP) given by: $\{\min \sum_{e \in E} d_e x_e: \bx \in X, \bx \in  [0, 1]^E\}$, where $X$ is a polytope containing the indicator vector $\chi_{S}$ of all $S \in \mathcal{S}$ and not containing $\chi_{S}$ for any $S \notin \mathcal{S}$.
 The corresponding LP formulation for the robust version (called $\mathcal{P}_{\rm robust}$-LP) is given by: $\{\min r: \bx \in X, \bx \in  [0, 1]^E, \sum_{e \in E} d_e x_e \leq \opt(\bd) + r ~\forall \bd\}$.

Now, suppose we have the following properties:
\begin{itemize}
\item There is a $\gamma$-approximation algorithm for $\mathcal{P}$.
\item The integrality gap of $\mathcal{P}$-LP is at most $\delta$.
\item There is a feasible solution $\bx^*$ to $\mathcal{P}$-LP that satisfies:
$\forall \bd: \sum_{e \in E}d_e x_e^* \leq \alpha \cdot \opt(\bd) + \beta \cdot \mr$.
\end{itemize}
Then, there exists an algorithm that outputs a $(\gamma \delta \alpha, \gamma \delta \beta + \gamma)$-robust $\sol$ for $\mathcal{P}$.
\end{theorem}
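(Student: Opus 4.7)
The plan is to reduce the rounding task to a single instance of $\mathcal{P}$ with carefully chosen edge weights, apply the given $\gamma$-approximation on that instance, and then combine the resulting approximation loss with the robustness and integrality-gap hypotheses on $\bx^*$.

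First I will formalize the regret of an integer solution $\sol$ against the scaled fractional solution $\delta \bx^*$:
\[
R(\sol) \;:=\; \max_{\bd} \Bigl[\sol(\bd) - \delta \sum_{e \in E} d_e x^*_e\Bigr].
\]
Each edge contributes $d_e\bigl(\mathbb{1}[e\in\sol] - \delta x^*_e\bigr)$ to the maximand, so (assuming $\delta x^*_e \leq 1$, treated below) the coefficient of $d_e$ has fixed sign and the adversary optimally sets $d_e = u_e$ for $e\in\sol$ and $d_e = \ell_e$ for $e\notin\sol$. Collecting terms yields
\[
R(\sol) \;=\; \sum_{e\in\sol} c_e \;-\; C, \qquad c_e := u_e - \delta x^*_e(u_e - \ell_e), \qquad C := \delta \sum_{e\in E} \ell_e x^*_e,
\]
so the integer solution minimizing $R$ is exactly an optimum of $\mathcal{P}$ under the non-negative edge weights $c_e$.

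Next I will run the $\gamma$-approximation for $\mathcal{P}$ on weights $c_e$, producing an integer $\alg$ with $\sum_{e\in\alg} c_e \leq \gamma \sum_{e\in\mrs} c_e$. To bound the right-hand side, I will invoke the integrality gap: feasibility of $\bx^*$ in $\mathcal{P}$-LP gives $\delta\sum_{e} d_e x^*_e \geq \opt(\bd)$ for every $\bd$, whence $R(\mrs) \leq \max_{\bd}[\mrs(\bd) - \opt(\bd)] \leq \mr$, and therefore
\[
R(\alg) \;=\; \sum_{e\in\alg} c_e - C \;\leq\; \gamma\bigl(R(\mrs) + C\bigr) - C \;\leq\; \gamma\,\mr + (\gamma - 1)C.
\]
To control $C$ against $\opt(\bd)$ for an arbitrary $\bd$, I will use $\ell_e \leq d_e$ together with the robustness hypothesis on $\bx^*$: $C = \delta\sum_e \ell_e x^*_e \leq \delta\sum_e d_e x^*_e \leq \delta[\alpha\,\opt(\bd) + \beta\,\mr]$.

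Finally, for every realization $\bd$ I will write $\alg(\bd) = \bigl[\alg(\bd) - \delta\sum_{e} d_e x^*_e\bigr] + \delta\sum_{e} d_e x^*_e \leq R(\alg) + \delta[\alpha\,\opt(\bd) + \beta\,\mr]$, and substitute the bounds on $R(\alg)$ and $C$ to obtain $\alg(\bd) \leq \gamma\delta\alpha\,\opt(\bd) + (\gamma\delta\beta + \gamma)\,\mr$, precisely the $(\gamma\delta\alpha,\gamma\delta\beta + \gamma)$-robustness claimed. The main obstacle is the assumption $\delta x^*_e \leq 1$ used in the sign argument: when $\delta x^*_e > 1$ the coefficient of $d_e$ in $R(\sol)$ flips sign and $c_e$ can turn negative, so the reduction to a standard $\mathcal{P}$ instance needs adjustment. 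I expect to resolve this either by capping $\delta x^*_e$ at $1$ (which only tightens $\sum_e d_e x^*_e$ and hence preserves the robustness enjoyed by the fractional vector) or by declaring any edge with $\delta x^*_e > 1$ as mandatory and running the $\gamma$-approximation on the residual instance, so that the accounting above still carries through.
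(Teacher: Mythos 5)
Your overall plan is exactly the paper's: reduce the rounding step to one instance of $\mathcal{P}$ with modified edge weights, run the $\gamma$-approximation, and chain the integrality-gap and robustness hypotheses on $\bx^*$. The algebra after that point (the bound on $R(\mrs)$, the $\gamma$-approximation step, the control of $C$, and the final chain) matches the paper line for line. However, you have correctly identified the one place where your derivation is not airtight, and the gap is genuine: your formula $c_e = u_e - \delta x^*_e(u_e-\ell_e)$ is only the adversary's optimal contribution on $e\in\sol$ when $1-\delta x^*_e \geq 0$. Since $\delta$ can exceed $1$, this is not guaranteed by $x^*_e\le 1$, and for such edges your expression for $R(\sol)$ is simply wrong (not merely possibly-negative), so the reduction to a $\mathcal{P}$-instance with weights $c_e$ collapses.

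Neither of your two proposed repairs goes through as stated. Capping $\delta x^*_e$ at $1$ does preserve the robustness upper bound, as you say, but it destroys the \emph{other} inequality you need, namely that the scaled fractional cost still dominates $\opt(\bd)$: the integrality-gap hypothesis gives $\opt(\bd) \le \delta\sum_e d_e x^*_e$, and after capping you would need $\opt(\bd) \le \sum_e d_e \min(\delta x^*_e,1)$, which only decreases the right-hand side and is not implied (the capped vector is not a $\delta$-scaling of an LP-feasible point, so the integrality gap says nothing about it). Without that, your bound $R(\mrs) \le \mr$ fails. The ``mandatory edges'' repair is worse: in the generality of an arbitrary set system $\mathcal{S}$ there is no notion of contracting or forcing an element, and $\mathcal{S}$ may contain no set including all elements with $\delta x^*_e > 1$, so there is no well-defined ``residual instance'' to hand to the $\gamma$-approximation.

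The clean fix, which is what the paper does, is to not commit to the sign at all: for $e\in\sol$ the adversary's best contribution is $\max\{u_e(1-\delta x^*_e),\ \ell_e(1-\delta x^*_e)\}$, so the correct per-element weight is
\[
w_e \;=\; \max\{u_e(1-\delta x^*_e),\ \ell_e(1-\delta x^*_e)\} \;+\; \delta\ell_e x^*_e \;=\; \max\{c_e,\ \ell_e\},
\]
which is always nonnegative (it equals $c_e$ when $\delta x^*_e \le 1$ and equals $\ell_e$ when $\delta x^*_e > 1$), so it is a legitimate $\mathcal{P}$-instance without any case split. With $w_e$ in place of $c_e$, the identity $R(\sol)=\sum_{e\in\sol}w_e - C$ holds unconditionally and every subsequent inequality in your write-up survives unchanged.
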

\begin{proof}
The algorithm is as follows: Construct an instance of $\mathcal{P}$ which uses the same set system $\mathcal{S}$ and where element $e$ has weight $\max\{u_e(1 - \delta x_e^*), \ell_e (1 - \delta x_e^*)\} + \delta \ell_e x_e^*$. Then, use the $\gamma$-approximation algorithm for $\mathcal{P}$ on this instance to find an integral solution $S$, and output it.

Given a feasible solution $S$ to $\mathcal{P}$, note that:

$$\max_\bd[\sum_{e \in S} d_e - \delta \sum_{e \in E} d_e x_e^*] = \sum_{e \in S} \max\{u_e(1 - \delta x_e^*), \ell_e(1 - \delta x_e^*)\} - \sum_{e \notin S}\delta \ell_e x_e^*$$
$$= \sum_{e \in S} [\max\{u_e(1 - \delta x_e^*), \ell_e(1 - \delta x_e^*)\} + \delta \ell_e x_e^*] - \sum_{e \in E} \delta \ell_e x_e^*.$$

Now, note that since $S$ was output by a $\gamma$-approximation algorithm, for any feasible solution $S'$:
$$\sum_{e \in S} [\max\{u_e(1 - \delta x_e^*), \ell_e(1 - \delta x_e^*)\} + \delta \ell_e x_e^*] \leq \gamma \sum_{e \in S'} [\max\{u_e(1 - \delta x_e^*), \ell_e(1 - \delta x_e^*)\} + \delta \ell_e x_e^*] \implies$$
\begin{align*}
& \sum_{e \in S} [\max\{u_e(1 - \delta x_e^*), \ell_e(1 - \delta x_e^*)\} + \delta \ell_e x_e^*] - \gamma \sum_{e \in E} \delta \ell_ex_e^*\\
\leq & \gamma [\sum_{e \in S'} [\max\{u_e(1 - \delta x_e^*), \ell_e(1 - \delta x_e^*)\} + \delta \ell_e x_e^*] - \sum_{e \in E} \delta \ell_e x_e^*]\\
=&\gamma \max_\bd[\sum_{e \in S'} d_e - \delta \sum_{e \in E} d_e x_e^*].
\end{align*}

Since $\mathcal{P}$-LP has integrality gap $\delta$, for any fractional solution $\bx$, $\forall \bd: \opt(\bd) \leq \delta \sum_{e \in E} d_e x_e$. Fixing $S'$ to be the set of elements used in the minimum regret solution then gives:

$$\max_\bd[\sum_{e \in S'} d_e - \delta \sum_{e \in E} d_e x_e^*] \leq \max_\bd[\mrs(\bd) - \opt(\bd)] = \mr.$$

Combined with the previous inequality, this gives:
\begin{align*}
&\sum_{e \in S} [\max\{u_e(1 - \delta x_e^*), \ell_e(1 - \delta x_e^*)\} + \delta \ell_e x_e^*] - \gamma \sum_{e \in E} \delta \ell_ex_e^* \leq \gamma \mr \\
\implies&\sum_{e \in S} [\max\{u_e(1 - \delta x_e^*), \ell_e(1 - \delta x_e^*)\} + \delta \ell_e x_e^*] - \sum_{e \in E} \delta \ell_ex_e^* \leq \gamma \mr + (\gamma - 1) \sum_{e \in E} \delta \ell_ex_e^* \\
\implies&\max_\bd[\sum_{e \in S} d_e - \delta \sum_{e \in E} d_e x_e^*] \leq \gamma \mr + (\gamma - 1) \sum_{e \in E} \delta \ell_ex_e^*.
\end{align*}

This implies:
\begin{align*}
\forall \bd: \sol(\bd) = \sum_{e \in S} d_e &\leq \delta \sum_{e \in E} d_e x_e^* + \gamma \mr + (\gamma - 1) \sum_{e \in E} \delta \ell_ex_e^* \\
&\leq  \delta \sum_{e \in E} d_e x_e^* + \gamma \mr + (\gamma - 1) \sum_{e \in E} \delta d_ex_e^*\\
&=\gamma \delta \sum_{e \in E} d_e x_e^* + \gamma \mr \\
&\leq \gamma \delta [\alpha \opt(\bd) + \beta \mr ] + \gamma \mr \\
&= \gamma\delta\alpha \cdot \opt (\bd) + (\gamma \delta \beta + \gamma) \cdot \mr.\\
\end{align*}
i.e., $\sol$ is $(\gamma \delta \alpha, \gamma \delta \beta + \gamma)$-robust as desired.
\end{proof}
\section{Algorithm for the Robust Traveling Salesman Problem}
\label{section:tsp}

In this section, we give a robust algorithm for the traveling salesman problem:

\begin{theorem}\label{thm:tsp-main}
There exists a $(4.5, 3.75)$-robust algorithm for the traveling salesman problem.
\end{theorem}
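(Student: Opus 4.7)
The plan is to apply the general rounding theorem (Theorem~\ref{thm:rounding}) to \tsp, combined with an approximate separation oracle for the regret constraint based on the \mst-based 2-approximation mentioned in the paper's technique overview.

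First, I set up the robust \tsp LP over the standard subtour-elimination polytope $X$ together with the regret constraint $\sum_e d_e x_e \leq \opt(\bd) + r$ for all realizations $\bd$. Since the indicator vector of the MRS tour lies in $X$ with $r = \mr$, the LP optimum is at most $\mr$, and the polytope constraints $x \in X$ are separable in polynomial time, so the real task is to separate the regret constraints approximately.

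Next, I design the approximate separation oracle by restricting candidate integer solutions $\sol$ to doubled spanning trees. For a fixed tree $T$, the worst-case $\bd$ sets $d_e = u_e$ for $e \notin T$ and $d_e = \ell_e$ for $e \in T$; plugging in, the regret of $x$ against the doubled-$T$ solution becomes
$$\sum_e u_e x_e \;-\; \sum_{e\in T}\bigl((u_e - \ell_e)x_e + 2\ell_e\bigr),$$
so maximizing over $T$ reduces to finding a minimum spanning tree with derived edge weights $w_e := (u_e - \ell_e)x_e + 2\ell_e$, which is polynomial-time. If the oracle certifies threshold $r$, then for every $\bd$, $\sum_e d_e x_e \leq 2\,\mst(\bd) + r \leq 2\,\opt(\bd) + r$ via $\mst(\bd) \leq \opt(\bd)$. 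I also verify that the MRS tour itself passes the oracle at $r = \mr$: indeed, $\mathrm{MRS}(\bd) - 2\mst(\bd) \leq \opt(\bd) + \mr - 2\mst(\bd) \leq \mr$ using $\opt(\bd) \leq 2\mst(\bd)$. Hence the ellipsoid algorithm with this oracle returns a fractional $\bx^*$ satisfying $\sum_e d_e x_e^* \leq 2\,\opt(\bd) + \mr$ for every $\bd$, i.e., a $(2,1)$-robust fractional solution.

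Finally, I invoke Theorem~\ref{thm:rounding} with $\mathcal{P} = \tsp$, $\alpha = 2$, $\beta = 1$, approximation ratio $\gamma = 3/2$ (Christofides, which applies because \tsp with revisits is equivalent to metric \tsp via metric closure), and integrality gap $\delta = 3/2$ for the subtour LP on metric \tsp. This yields an integral algorithm that is $(\gamma\delta\alpha,\ \gamma\delta\beta + \gamma) = (\tfrac{9}{4}\cdot 2,\ \tfrac{9}{4} + \tfrac{3}{2}) = (9/2, 15/4) = (4.5, 3.75)$-robust.

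The main obstacle is ensuring that the Christofides step in Theorem~\ref{thm:rounding} can be applied to the particular weights $\max\{u_e(1-\delta x_e^*), \ell_e(1-\delta x_e^*)\} + \delta\ell_e x_e^*$ constructed by the rounding procedure, since these need not satisfy the triangle inequality. This is handled by taking the metric closure of the derived weights before invoking Christofides: because we are in the revisits-allowed \tsp model, any tour on the metric closure pulls back to a tour of the same realized cost in the original graph, so both the $3/2$ approximation factor and the $3/2$ integrality gap of the subtour LP carry through to the robust setting.
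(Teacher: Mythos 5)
Your proposal is correct and follows essentially the same approach as the paper: a separation oracle restricted to doubled spanning trees with derived weights $(u_e-\ell_e)x_e + 2\ell_e$ yields a $(2,1)$-robust fractional solution, which Theorem~\ref{thm:rounding} with $\gamma=\delta=3/2$ rounds to a $(4.5, 3.75)$-robust tour. The paper handles the triangle-inequality issue by showing the revisits-allowed subtour LP has the same integrality gap as the metric subtour LP (via integrality of the shortest-path polytope), which is the same idea as your metric-closure argument; and the paper shows the LP optimum is at most $\mr$ by noting the oracle's constraints are implied by the original regret constraints (since a doubled spanning tree is a feasible tour), which your explicit check that MRS passes the oracle at $r=\mr$ also establishes.
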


Recall that we consider the version of the problem where we are allowed to use edges multiple times in \tsp.
We recall that any \tsp tour must contain a spanning tree, and an Eulerian walk on a doubled \mst is a 2-approximation algorithm for \tsp (known as the ``double-tree algorithm''). One might hope that since we have a $(1, 2)$-robust algorithm for robust \mst, one could take its output and apply the double-tree algorithm to get a $(2, 4)$-robust solution to robust TSP. Unfortunately, we show in Section~\ref{section:doubletree} that this algorithm is not $(\alpha, \beta)$-robust for any $\alpha = o(n), \beta = o(n)$. Nevertheless, we are able to leverage the connection to \mst to arrive at a $(4.5, 3.75)$-robust algorithm for \tsp, given in Section~\ref{section:tsporacle}.

\subsection{Failure of Double-Tree Algorithm}\label{section:doubletree}
The black-box reduction of \cite{KasperskiZ06} for turning exact algorithms into $(1, 2)$-robust algorithms simply uses the exact algorithm to find the optimal solution when all $d_e$ are set to $\frac{\ell_e + u_e}{2}$ and outputs this solution (see \cite{KasperskiZ06} for details on its analysis). We give an example of a robust \tsp instance where applying the double-tree algorithm to the $(1, 2)$-robust \mst generated by this algorithm does not give a robust \tsp solution. Since the doubling of this \mst is a 2-approximation for \tsp when all $d_e$ are set to $\frac{\ell_e + u_e}{2}$, this example will also show that using an approximation algorithm instead of an exact algorithm in the black-box reduction fails to give any reasonable robustness guarantee as stated in Section~\ref{sec:intro}. 

Consider an instance of robust \tsp with vertices $V = \{v', v_1 \ldots v_n\}$, where there is a ``type-1'' edge from $v'$ to $v_i$ with length $1 - \epsilon$ and where there is a ``type-2'' edge from $v_i$ to $v_{i+1}$ for all $i$, as well as from $v_n$ to $v_1$, with length in the range $[0, 2 - \frac{1}{n-1}]$. 

Consider $\mrs$, which uses $n-1$ type-2 edges and two type-1 edges to connect $v'$ to the rest of the tour\footnote{We do not prove this is $\mrs$: Even if it is not, it suffices to upper bound $\mr$ by this solution's regret.}. Its regret is maximized in the realization where all the type-2 edges it is using have length $2 - \frac{1}{n-1}$ and the type-2 edge it is not using has length 0. In this realization, the optimum solution uses the type 2-edge with length 0, the two type-1 edges adjacent to this type-2 edge once, and then the other $n-2$ type-1 edges twice. So $\mrs$ has cost $(n-1)(2 - \frac{1}{n-1})+2(1-\epsilon) \leq 2(n-1)$ whereas $\opt$ has cost $2(n-1)(1-\epsilon)$. Then, the regret of this solution is at most $n\epsilon$.

When all edge costs are set to $\frac{\ell_e + u_e}{2}$, as long as $\epsilon > \frac{1}{2(n-1)}$ the minimum spanning tree of the graph is a star centered at $v'$, i.e., all the length $1-\epsilon$ edges. So the $(1, 2)$-approximation algorithm outputs this tree for \mst. Doubling this tree gives a solution to the robust \tsp instance that costs $2n(1-\epsilon)$ in all realizations of demands. 

Consider the realization $\bd$ where all type-2 edges have length 0. The minimum regret solution costs $2-2\epsilon$ and is also the optimal solution. If the double-tree solution is $(\alpha, \beta)$-robust we get that:

$$2n(1-\epsilon) \leq \alpha \cdot \opt(\bd) + \beta \cdot \mr \leq \alpha \cdot (2 - 2\epsilon) + \beta n\epsilon.$$ 

Setting $\epsilon$ to e.g. $1/n$ gives that one of $\alpha, \beta$ is $\Omega(n)$. 

\begin{figure}
\begin{center}
\textbf{Minimize } $r$ \textbf{subject to}
\end{center}
\begin{equation}\label{eq:rrtsp}
\begin{array}{lll}
& \forall \emptyset \neq S \subset V:  & \sum_{u \in S, v \in V \backslash S} y_{uv} \geq 2 \\
& \forall u \in V: & \sum_{v \neq u} y_{uv} = 2 \\
&\forall \emptyset \neq S \subset V, u \in S, v \in V \backslash S:& \sum_{e \in \delta(S)} x_{e, u, v} \geq y_{uv}\\
&\forall \bd:& \sum_{e \in E} d_e x_e \leq \opt(\bd) + r\\
&\forall u,v \in V, u \neq v:& 0 \leq y_{uv} \leq 1\\
&\forall e \in E, u, v \in V, v \neq u:& 0 \leq x_{e, u, v} \leq 1\\
&\forall e \in E:& x_e \leq 2
\end{array}
\end{equation}
\caption{The Robust TSP Polytope}
\label{fig:tsplp}
\end{figure}

\subsection{LP Relaxation}

We use the LP relaxation of robust traveling salesman in Figure~\ref{fig:tsplp}. This is the standard subtour LP (see e.g. \cite{Vygen_newapproximation}), but augmented with variables specifying the edges used to visit each new vertex, as well as with the regret constraint set. Integrally, $y_{uv}$ is 1 if splitting the tour into subpaths at each point where a vertex is visited for the first time, there is a subpath from $u$ to $v$ (or vice-versa). That is, $y_{uv}$ is 1 if between the first time $u$ is visited and the first time $v$ is visited, the tour only goes through vertices that were already visited before visiting $u$. $x_{e,u,v}$ is 1 if on this subpath, the edge $e$ is used. We use $x_e$ to denote $\sum_{u, v \in V} x_{e,u,v}$ for brevity. We discuss in this subsection why the constraints other than the regret constraint set in \eqref{eq:rrtsp} are identical to the standard \tsp polytope. This discussion may be skipped without affecting the readability of the rest of the paper.

The standard LP for \tsp is the \textit{subtour LP} (see e.g. \cite{Vygen_newapproximation}), which is as follows:

\begin{equation}\label{eq:subtourlp}
\begin{array}{lll}
 \min& \sum_{(u, v) \in E} c_{uv} y_{uv}&\\
\text{s.t. }& \forall \emptyset \neq S \subset V: & \sum_{(u, v) \in \delta(S)} y_{uv} \geq 2 \\
& \forall u \in V: & \sum_{(u, v) \in E} y_{uv} = 2 \\
&\forall (u, v) \in E:& 0 \leq y_{uv} \leq 1
\end{array}
\end{equation}
where $\delta(S)$ denotes the set of edges with one endpoint in $S$.  Note that because the graph is undirected, the order of $u$ and $v$ in terms such as $(u,v)$, $c_{uv}$, and $y_{uv}$ is immaterial, e.g., there is no distinction between edge $(u,v)$ and edge $(v,u)$ and $y_{uv}$ and $y_{vu}$ denote the same variable. This LP is written for the problem formulation where the triangle inequality holds, and thus we only need to consider tours that are cycles that visit every vertex exactly once. We are concerned, however, with the formulation where the triangle inequality does not necessarily hold, but tours can revisit vertices and edges multiple times. To modify the subtour LP to account for this formulation, we instead let $y_{uv}$  be an indicator variable for whether our solution connects $u$ to $v$ using some path in the graph. Using this definition for $y_{uv}$, the subtour LP constraints then tell us that we must buy a set of paths such that a set of edges directly connecting the endpoints of the paths would form a cycle visiting every vertex exactly once. Then, we introduce variables $x_{e, u, v}$ denoting that we are using the edge $e$ on the path from $u$ to $v$. For ease of notation, we let $x_e = \sum_{u, v \in V} x_{e, u, v}$ denote the number of times a fractional solution uses the edge $e$ in paths. We can then use standard constraints from the canonical shortest path LP to ensure that in an integer solution $y_{uv}$ is set to 1 only if for some path from $u$ to $v$, all edges $e$ on the path have $x_{e, u, v}$ set to 1. 

Lastly, note that the optimal tour does not use an edge more than twice. Suppose a tour uses the edge $e = (u, v)$ thrice. By fixing a start/end vertex for the tour, we can split the tour into $e, P_1, e, P_2, e, P_3$ where $P_1$ is the part of the tour between the first and second use of $e$, $P_2$ is the part of the tour between the second and third use of $e$, and $P_3$ is the part of the tour after the third use of $e$. Because the tour starts and ends at the same vertex ($u$ or $v$), and each of the three uses of edge $e$ goes from $u$ to $v$ or vice versa, the number of $P_1$, $P_2$, and $P_3$ that go from $u$ to $v$ or vice-versa (as opposed to going from $u$ to $u$ or $v$ to $v$) must be odd, and hence not zero. Without loss of generality, we can assume $P_1$ goes from $u$ to $v$. Then, the tour $\overline{P_1}, P_2, e, P_3$, where $\overline{P_1}$ denotes the reversal of $P_1$, is a valid tour and costs strictly less than the original tour. So any tour using an edge more than twice is not optimal. This lets us add the constraint $x_e \leq 2$ to the LP without affecting the optimal solution.

This gives the formulation for \tsp without triangle inequality but with repeated edges allowed:

\begin{equation}\label{eq:subtourlp2}
\begin{array}{lll}
 \min &\sum_{e \in E} c_e x_{e}&\\
\text{s.t. }
& \forall \emptyset \neq S \subset V: & \sum_{u \in S,v \in V \backslash S} y_{uv} \geq 2 \\
& \forall u \in V: & \sum_{v \neq u} y_{uv} = 2 \\
&\forall  \emptyset \neq S \subset V, u \in S, v \in V \backslash S:& \sum_{e \in \delta(S)} x_{e, u, v} \geq y_{uv}\\
&\forall u,v \in V, v \neq u:& 0 \leq y_{uv} \leq 1\\
&\forall e \in E, u, v \in V, v \neq u:& 0 \leq x_{e, u, v} \leq 1\\
&\forall e \in E:& x_e \leq 2
\end{array}
\end{equation}

By integrality of the shortest path polytope, if we let $p_{uv}$ denote the length of the shortest path from $u$ to $v$, then $\sum_{e \in E, u, v \in V} c_e x_{e, u, v} \geq \sum_{u, v \in V}p_{uv}y_{uv}$.  In particular, if we fix the value of $y_{uv}$ the optimal setting of $x_{e, u, v}$ values is to set $x_{e, u, v}$ to $y_{uv}$ for every $e$ on the shortest path from $u$ to $v$. So  \eqref{eq:subtourlp2} without the triangle inequality assumption is equivalent to \eqref{eq:subtourlp} with the triangle inequality assumption. In particular, the integrality gap of \eqref{eq:subtourlp2} is the same as the integrality gap of \eqref{eq:subtourlp}, which is known to be at most 3/2 \cite{Wolsey80}. Then, adding a variable $r$ for the fractional solution's regret and the regret constraint set gives \eqref{eq:rrtsp}.

\subsection{Approximate Separation Oracle}\label{section:tsporacle}
We now describe the separation oracle \textsc{RRTSP-Oracle} used to separate \eqref{eq:rrtsp}. All constraints except the regret constraint set can be separated in polynomial time by solving a min-cut problem. Recall that exactly separating the regret constraint set involves finding an ``adversary'' $\sol$ that maximizes $\max_{\bd} [\sum_{e \in E} d_ex_e - \sol(\bd)]$, and seeing if this quantity exceeds $r$. However, since TSP is \np-hard, finding this solution in general is \np-hard. Instead, we will only consider a solution $\sol$ if it is a walk on some spanning tree $T$, and find the one that maximizes $\max_{\bd} [\sum_{e \in E} d_ex_e - \sol(\bd)]$. 

Fix any $\sol$ that is a walk on some spanning tree $T$. For any $e$, if $e$ is not in $T$, the regret of $\bx, \by$ against $\sol$ is maximized by setting $e$'s length to $u_e$. If $e$ is in $T$, then $\sol$ is paying $2d_e$ for that edge whereas the fractional solution pays $d_ex_e \leq 2d_e$, so to maximize the fractional solution's regret, $d_e$ should be set to $\ell_e$. This gives that the regret of fractional solution $\bx$ against any $\sol$ that is a spanning tree walk on $T$ is

$$\sum_{e \in T}(\ell_ex_e - 2\ell_e) + \sum_{e \notin T} u_ex_e = \sum_{e \in E} u_ex_e - \sum_{e \in T}(u_ex_e - (\ell_ex_e - 2\ell_e)).$$

The quantity $\sum_{e \in E} u_ex_e$ is fixed with respect to $T$, so finding the spanning tree $T$ that maximizes this quantity is equivalent to finding $T$ that minimizes $\sum_{e \in T}(u_ex_e - (\ell_ex_e - 2\ell_e))$. But this is just an instance of the minimum spanning tree problem where edge $e$ has weight $u_ex_e - (\ell_ex_e - 2\ell_e)$, and thus we can find $T$ in polynomial time. This gives the following lemma:

\begin{lemma}\label{lemma:tspsep}
For any instance of robust traveling salesman there exists an algorithm \textsc{RRTSP-Oracle} that given a solution $(\bx, \by, r)$ to \eqref{eq:rrtsp} either:
\begin{itemize}
\item Outputs a separating hyperplane for \eqref{eq:rrtsp}, or
\item Outputs ``Feasible'', in which case $(\bx, \by)$ is feasible for the (non-robust) TSP LP and $\forall \bd: \sum_{e \in E} d_e x_e \leq 2 \cdot \opt (\bd) + r$.
\end{itemize}
\end{lemma}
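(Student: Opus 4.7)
The plan is to build the oracle in two layers. First, I would separate the non-regret constraints of \eqref{eq:rrtsp}. The degree and box constraints are explicit. The cut constraints $\sum_{u\in S,\, v\in V\setminus S} y_{uv} \geq 2$ can be separated by a global min-cut on the graph with edge capacities $y_{uv}$, and for each pair $(u,v)$ the connectivity constraints $\sum_{e\in\delta(S)} x_{e,u,v}\geq y_{uv}$ can be separated by a $(u,v)$ min-cut on capacities $x_{e,u,v}$. If any of these is violated, output it and halt; otherwise $(\bx,\by)$ satisfies the standard TSP LP, discharging the first bullet of the conclusion.

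For the regret constraint set, the true maximum-regret adversary is NP-hard to compute, so I restrict attention to adversaries $\sol$ that are Eulerian walks on some spanning tree $T$, whose cost is exactly $2\sum_{e\in T} d_e$. For a fixed $T$, the regret of $\bx$ against such a walk is
$$\sum_{e\in E} d_e x_e - 2\sum_{e\in T} d_e \;=\; \sum_{e\notin T} d_e x_e + \sum_{e\in T} d_e(x_e-2);$$
since $x_e\geq 0$ outside $T$ and $x_e-2\leq 0$ inside $T$ (using the constraint $x_e\leq 2$), the worst $\bd$ is determined edge by edge as $d_e=u_e$ for $e\notin T$ and $d_e=\ell_e$ for $e\in T$. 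Substituting, the maximum regret over all such adversaries equals
$$\sum_{e\in E} u_e x_e \;-\; \min_{T}\sum_{e\in T}\bigl(u_e x_e - \ell_e x_e + 2\ell_e\bigr),$$
so the best $T^\star$ is a minimum spanning tree under the non-negative weights $w_e := u_e x_e - \ell_e x_e + 2\ell_e$, computable in polynomial time.

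If the regret of $\bx$ against the double-tree walk on $T^\star$ exceeds $r$, output the corresponding regret constraint as the separating hyperplane; otherwise declare ``Feasible''. To verify the approximation guarantee, fix any $\bd$ and let $T_\bd$ be an MST under $\bd$. Our check implies
$$\sum_{e\in E} d_e x_e - 2\sum_{e\in T_\bd} d_e \;\leq\; r,$$
and since every TSP tour contains a spanning tree we have $\opt(\bd)\geq \sum_{e\in T_\bd} d_e$, which yields $\sum_e d_e x_e \leq 2\,\opt(\bd) + r$ as required. The only subtle step is the MST reduction; once we observe that a double-tree walk uses each tree edge exactly twice and that $x_e\leq 2$, the sign analysis forces the worst-case $\bd$ and the reduction is mechanical. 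I do not anticipate any further obstacles.
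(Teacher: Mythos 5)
Your proof is correct and follows essentially the same approach as the paper: restrict adversaries to doubled spanning-tree walks, show the worst-case $\bd$ is $u_e$ off-tree and $\ell_e$ on-tree, reduce to an MST computation with weights $u_e x_e - \ell_e x_e + 2\ell_e$, and conclude via the fact that any tour contains a spanning tree. The only cosmetic difference is that you verify the ``Feasible'' guarantee directly (using the MST $T_\bd$ under $\bd$) while the paper argues by contrapositive; you also correctly invoke $x_e \leq 2$ where the paper's prose says ``at most 1,'' which appears to be a typo since $x_e = \sum_{u,v} x_{e,u,v}$ is only bounded by $2$.
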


\begin{figure}
\fbox{\begin{minipage}{.98\textwidth}
\textsc{RRTSP-Oracle}($G(V, E)$, $\{(\ell_e, u_e)\}_{e \in E}$, $(\bx, \by, r)$)

\begin{algorithm}[H]
\KwData{Undirected graph $G(V, E)$, lower and upper bounds on edge lengths $\{(\ell_e, u_e)\}_{e \in E}$, solution ($\bx = \{x_{e, u, v}\}_{e \in E, u, v \in V},\by = \{y_{uv}\}_{u, v \in V}, r$) to \eqref{eq:rrtsp}}
Check all constraints of \eqref{eq:rrtsp}, \textbf{return} any violated constraint that is found\;
$G' \leftarrow $ copy of $G$ where $e$ has weight $u_ex_e - (\ell_ex_e - 2\ell_e)$\;
$T' \leftarrow $ minimum spanning tree of $G'$\;
\eIf{$\sum_{e \in T'}(\ell_ex_e - 2\ell_e) + \sum_{e \notin T'} u_ex_e > r$} {
\textbf{return} $\sum_{e \in T'}(\ell_ex_e - 2\ell_e) + \sum_{e \notin T'} u_ex_e \leq r$\;
}{ 
\textbf{return} ``Feasible''
}
\end{algorithm}
\end{minipage}}
\caption{Separation Oracle for \eqref{eq:rrtsp}.}
\label{fig:tsporacle}
\end{figure}

\begin{proof}[Proof of Lemma~\ref{lemma:tspsep}]
\textsc{RRTSP-Oracle} is given in Figure~\ref{fig:tsporacle}. All inequalities except the regret constraint set can be checked exactly by \textsc{RRTSP-Oracle}. Consider the tree $T'$ computed in \textsc{RRTSP-Oracle} and $\bd'$ with  $d_e' = \ell_e$ for $e \in T'$ and $d_e' = u_e$ for $e \notin T'$. The only other violated inequality \textsc{RRTSP-Oracle} can output is the inequality $\sum_{e \in T'}(\ell_ex_e - 2\ell_e) + \sum_{e \notin T'} u_ex_e \leq r$ in line 5, which is equivalent to $\sum_{e \in E} d_e' x_e \leq 2\sum_{e \in T} d_e' + r$. Since $2\sum_{e \in T} d_e'$ is the cost of a tour in realization $\bd'$ (the tour that follows a DFS on the spanning tree $T$), this inequality is implied by the inequality $\sum_{e \in E} d_e' x_e \leq \opt(\bd') + r$ from the regret constraint set. Furthermore, \textsc{RRTSP-Oracle} only outputs this inequality when it is actually violated. 

So, it suffices to show that if there exists $\bd$ such that $\sum_{e \in E} d_e x_e > 2 \opt(\bd) + r$ then \textsc{RRTSP-Oracle} outputs a violated inequality on line 5. Since $\opt(\bd)$ visits all vertices, it contains some spanning tree $T$, such that $\opt(\bd) \geq \sum_{e \in T} d_e$. Combining these inequalities gives 

$$\sum_{e \in E}d_ex_e > 2 \sum_{e \in T}d_e + r.$$

Since all $x_e$ are at most 1, setting $d_e = \ell_e$ for $e \in T$ and $d_e = u_e$ otherwise can only increase $\sum_{e \in E}d_ex_e - 2 \sum_{e \in T}d_e$, so

$$\sum_{e \in T}\ell_ex_e + \sum_{e \notin T}u_ex_e > 2 \sum_{e \in T}\ell_e + r \implies \sum_{e \in E}u_ex_e  - \sum_{e \in T}(u_ex_e - (\ell_ex_e - 2\ell_e)) > r.$$

Then \textsc{RRTSP-Oracle} finds a minimum spanning tree $T'$ on $G'$, i.e. $T'$ such that

$$\sum_{e \in T'}(u_ex_e - (\ell_ex_e - 2\ell_e)) \leq \sum_{e \in T}(u_ex_e - (\ell_ex_e - 2\ell_e)).$$

which combined with the previous inequality gives

$$\sum_{e \in E}u_ex_e  - \sum_{e \in T'}(u_ex_e - (\ell_ex_e - 2\ell_e)) > r \implies \sum_{e \in T'}(\ell_ex_e - 2\ell_e) + \sum_{e \notin T'}u_ex_e > r.$$
\end{proof}

By using the ellipsoid method with separation oracle \textsc{RRTSP-Oracle} and the fact that \eqref{eq:rrtsp} has optimum at most \mr, we get a $(2, 1)$-robust fractional solution. Applying Theorem~\ref{thm:generalrounding} as well as the fact that the TSP polytope has integrality gap $3/2$ (see e.g. \cite{Vygen_newapproximation}) and the TSP problem has a $3/2$-approximation gives Theorem~\ref{thm:tsp-main}.

\section{A Local Search Algorithm for Steiner Tree}\label{section:localsearch}
In this section, we describe a local search algorithm for the Steiner tree, given in \cite{GGKMSSV17}. We show that the algorithm is $4$-approximate, but as with many local search algorithms could run in superpolynomial time in the worst case. Standard tricks can be used to modify this algorithm into a polynomial time $(4+\epsilon)$-approximation. This algorithm will serve as a primitive in the algorithms we design in Section~\ref{section:steiner}.

The local moves considered by the algorithm are all \textit{path swaps}, defined follows: If the current Steiner tree is $T$, the algorithm can pick any two vertices $u, v$ in $T$ such that there exists a path from $u$ to $v$ where all vertices except $u$ and $v$ on the path are not in $T$ (and thus all edges on the path are not in $T$). The algorithm may take any path $f$ from $u$ to $v$ of this form.  It suffices to consider only the shortest path of this form.  The path $f$ is added to $T$, inducing a cycle.  The algorithm then picks a subpath $a$ in the cycle and removes it from $T$, while maintaining that $T$ is a feasible Steiner tree.  It suffices to consider only maximal subpaths. These are just the subpaths formed by splitting the cycle at every vertex with degree at least 3 in $T \cup \{f\}$.  Let $n$ denote the number of nodes and $m$ the number of edges in the graph.  Since there are at most $\genfrac{(}{)}{0pt}{1}{n}{2}$ pairs of vertices $u, v$, and a shortest path $f$ between $u$ and $v$ can be found in $O(m + n\log n)$ time, and all maximal subpaths in the induced cycle in $T \cup \{f\}$ can be found in $O(n)$ time, if there is a move that improves the cost of $T$, we can find it in polynomial time. 

We will use the following observation to show the approximation ratio.

\begin{observation}\label{obs:treevertices}
For any tree the fraction of vertices with degree at most 2 is strictly greater than $1/2$.
\end{observation}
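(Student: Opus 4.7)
The plan is to use the handshake lemma together with the fact that a tree on $n$ vertices has exactly $n-1$ edges, giving a total degree of $2n-2$. First I would dispose of the trivial cases $n=1$ and $n=2$ (where every vertex has degree at most $2$, so the fraction is $1$). For the main case $n \geq 3$, the key move is to bound the number $H$ of ``high-degree'' vertices (those of degree at least $3$) from above, since the vertices of degree at most $2$ are exactly the complement.

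Let $L$, $M$, $H$ denote the number of vertices of degree $1$, degree $2$, and degree at least $3$, respectively, so that $L + M + H = n$. By the handshake lemma,
\[
L + 2M + \sum_{v \,:\, \deg(v)\geq 3} \deg(v) \;=\; 2(n-1).
\]
Using $\sum_{v \,:\, \deg(v)\geq 3} \deg(v) \geq 3H$ and rewriting $L + 2M + 3H = n + (M + 2H)$, I would conclude $M + 2H \leq n - 2$, hence $H \leq (n-2)/2$. Consequently, the fraction of vertices of degree at most $2$ is
\[
\frac{L+M}{n} \;=\; \frac{n-H}{n} \;\geq\; 1 - \frac{n-2}{2n} \;=\; \frac{n+2}{2n} \;>\; \frac{1}{2},
\]
which is the claimed strict inequality.

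There is essentially no obstacle here; the only thing to watch is that the inequality is \emph{strict}, which is why I use the bound $\deg(v) \geq 3$ for high-degree vertices (giving the $+2$ in the numerator) rather than something weaker, and why the trivial cases $n \leq 2$ must be handled separately so that the bound $(n+2)/(2n)$ is not vacuous.
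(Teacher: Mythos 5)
Your proof is correct, and it is in essence the same argument as the paper's: both bound the number of vertices of degree at least $3$ using the fact that a tree on $n$ vertices has total degree $2n-2$. The paper packages this as a Markov bound on the random variable $X = \deg(v) - 1$ for a uniformly random vertex ($\mathbb{E}[X] = \frac{n-2}{n} < 1$, so $\Pr[X \geq 2] < 1/2$), whereas you unroll Markov by hand via the handshake lemma, obtaining the explicit bound $H \leq (n-2)/2$; these are the same inequality. One small remark on your aside: the bound $(n+2)/(2n) > 1/2$ is not vacuous for $n \leq 2$; the actual reason $n = 1$ needs separate treatment (both in your argument and, implicitly, in the paper's, since Markov requires $X \geq 0$) is that the lone vertex has degree $0$, so it falls outside your partition $L + M + H = n$ and makes $X$ negative. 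For $n = 2$ the general argument already goes through.
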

\begin{proof}
This follows from a Markov bound on the random variable $X$ defined as the degree of a uniformly random vertex minus one. A tree with $n$ vertices has average degree $\frac{2n-2}{n} < 2$, so $\mathbb{E}[X] < 1$. In turn, the fraction of vertices with degree 3 or greater is $\Pr[X \geq 2] < 1/2$.
\end{proof}

\begin{theorem}\label{thm:localsearch}
Let $A$ be a solution to an instance of Steiner tree such that no path-swap reduces the cost of $A$. Then $A$ is a 4-approximation.
\end{theorem}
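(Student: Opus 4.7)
I plan to prove $c(A) \leq 4\,c(O)$, where $O$ is an optimum Steiner tree, by a charging argument that extracts a collection of valid swap paths from $O$ and applies local optimality of $A$. Since $V_A \cap V_O$ contains all terminals, the first step is to decompose $O$ into its ``bridges'' with respect to $V_A$---the maximal connected subtrees of $O$ whose internal vertices all lie in $V_O \setminus V_A$. A bridge that is a simple path is already a valid swap path; a tree-bridge with more than two attachment points in $V_A$ is further decomposed by doubling its edges and taking an Eulerian tour, which splits into simple sub-paths $f_j$ between consecutive attachments. Each $f_j$ has endpoints in $V_A$ and interior outside $V_A$, so each $f_j$ is a valid swap path, and because each edge of $O$ is used at most twice across all such sub-paths, $\sum_j c(f_j) \leq 2\,c(O)$.

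Next, for each $f_j$ with endpoints $u_j, v_j$, adding $f_j$ to $A$ induces a cycle $C_j$, and local optimality implies $c(a) \leq c(f_j)$ for every maximal sub-path $a$ of the $A$-portion of $C_j$ whose removal leaves a feasible Steiner tree. I would pick $a_j$ to be the heaviest such maximal sub-path of $P_A(u_j, v_j)$, the unique $A$-path from $u_j$ to $v_j$. The main step is then to show that the family $\{a_j\}$ covers $A$ with total weight at least $c(A)/2$. Combined with the previous bound, this chain gives $c(A)/2 \leq \sum_j c(a_j) \leq \sum_j c(f_j) \leq 2\,c(O)$, hence $c(A) \leq 4\,c(O)$.

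The hardest part will be the coverage step. Observation~\ref{obs:treevertices} enters here: applied to the skeleton tree obtained from $A$ by suppressing degree-$2$ Steiner vertices, it ensures that strictly more than half of the vertices have degree at most $2$, which bounds the branching and hence the number of maximal sub-paths into which each $P_A(u_j, v_j)$ fragments. Turning this degree bound into the inequality $\sum_j c(a_j) \geq c(A)/2$ requires careful combinatorial accounting: for each edge $e \in A$, I would identify which bridges' paths $P_A(u_j, v_j)$ traverse $e$, and use the degree bound to argue that a heaviest maximal sub-path containing $e$ is selected often enough. This charging is where the final factor of $4$ arises, with the Euler-style doubling contributing one factor of $2$ and the averaging across maximal sub-paths contributing another. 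The most delicate book-keeping issue I anticipate is handling $V_A$-vertices that are not in $V_O$ and branch vertices of $A$ that appear inside some $P_A(u_j, v_j)$, since these control how finely the $A$-path splits into maximal sub-paths and therefore how tightly Observation~\ref{obs:treevertices} can be applied.
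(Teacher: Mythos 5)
Your decomposition of $O$ into bridges and the Euler-tour doubling to get $\sum_j c(f_j) \leq 2\,c(O)$ matches the paper, and so does the observation that local optimality gives $c(a) \leq c(f_j)$ for every removable maximal subpath $a$ of $P_A(u_j,v_j)$. The gap is in your coverage step, and I don't think it can be repaired in the form you state it. Fixing, for each $f_j$, the \emph{single} heaviest removable subpath $a_j$ gives no guarantee that $\sum_j c(a_j) \geq c(A)/2$: nothing prevents many different $f_j$'s from all selecting the same heavy subpath $a^*$, while the remaining subpaths of $A$---which can carry almost all of $c(A)$ if $a^*$ is only moderately heavier than the rest---are never selected by anyone. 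Your sketch (``argue that a heaviest maximal sub-path containing $e$ is selected often enough'') is exactly the part that would need a real argument, and a per-$f_j$ greedy choice is the wrong primitive for it.

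The paper avoids this by working with the bipartite adjacency structure rather than a greedy choice. It partitions $A$ into subpaths $\mathcal{A}$ and $F$ into subpaths $\mathcal{F}$ as you do, defines $N(a) = \{f \in \mathcal{F} : A \setminus a \cup f \text{ is feasible}\}$, and proves the Hall-type expansion bound $|N(X)| \geq \tfrac12 |X|$ for every $X \subseteq \mathcal{A}$. Observation~\ref{obs:treevertices} is used exactly here---applied to the tree $T'$ obtained by contracting the components of $A$ minus the paths in $X$, to show that more than half of those components have degree at most $2$ and hence contain a vertex of $F$---and not to bound how finely $P_A(u_j,v_j)$ fragments. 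The expansion bound plus the Hall extension (Fact 15 of \cite{GGKMSSV17}) yields a fractional assignment $\alpha:\mathcal{A}\times\mathcal{F}\to\mathbb{R}^+$ with $\sum_f \alpha(a,f) = 1$ for each $a$ and $\sum_a \alpha(a,f) \leq 2$ for each $f$, which makes the accounting automatic: $c(A) = \sum_a c(a) = \sum_{a,f}\alpha(a,f)c(a) \leq \sum_{a,f}\alpha(a,f)c(f) \leq 2\sum_f c(f) \leq 4c(O)$. Every $a \in \mathcal{A}$ is counted with weight exactly $1$ because its mass is \emph{spread} over $N(a)$, which is precisely what your heaviest-per-$f_j$ selection fails to do. If you want to salvage your direction of charging, you would in effect have to re-derive this fractional assignment, so I'd recommend switching to the paper's Hall-style argument rather than trying to prove the coverage claim directly.
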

\begin{proof}
Consider any other solution $F$ to the Steiner tree instance. 
We partition the edges of $A$ into the subpaths such that these subpaths' endpoints are (i) vertices with degree 3 or larger in $A$, or (ii) vertices in and $A$ and $F$ (which might also have degree 3 or larger in $A$).  Besides the endpoints, all other vertices in each subpath have degree 2 and are in $A$ but not in $F$. Note that a vertex may appear as the endpoint of more than one subpath. Note also that the set of vertices in $F$ includes the terminals, which, without loss of generality includes all leaves in $A$. This along with condition (i) for endpoints ensures the partition into subpaths is well-defined, i.e., if a subpath ends at a leaf of $A$, that leaf is in $F$.  

We also decompose $F$ into subpaths, but some edges may be contained in two of these subpaths. To decompose $F$ into subpaths, we first partition the edges of $F$ into maximal connected subgraphs of $F$ whose leaves are vertices in $A$ (including terminals) and whose internal vertices are not in $A$. Note that some vertices may appear in more than one subgraph, e.g., an internal vertex in $F$ that is in $A$ becomes a leaf in multiple subgraphs.  Since these subgraphs are not necessarily paths, we next take any DFS walk on each of these subgraphs starting from one of their leaves (that is, one of the vertices in $A$). We take the route traversed by the DFS walk, and split it into subpaths at every point where the DFS walk reaches a leaf. This now gives a set of subpaths in $F$ such that each subpaths' endpoints are vertices in $A$, no other vertices on a subpath are in $A$, and no edge appears in more than two subpaths. Let $\mathcal{A}$ and $\mathcal{F}$ denote the set of subpaths we decomposed $A$ and $F$ into, respectfully.

For $a \in \mathcal{A}$ let $N(a) \subseteq \mathcal{F}$ be the set of subpaths $f$ in $\mathcal{F}$ such that $A \setminus a \cup f$ is a feasible Steiner tree, i.e. $f$ can be swapped for $a$, and let $N(X) = \cup_{a \in X} N(a)$. We will show that for any $X \subseteq \mathcal{A}$, $|N(X)| \geq \frac{1}{2}|X|$. By an extension of Hall's Theorem (Fact 15 in \cite{GGKMSSV17}) this implies the existence of a weight function $\alpha: \mathcal{A} \times \mathcal{F} \rightarrow \mathbb{R}^+$ such that:
\begin{enumerate}
    \item $\alpha(a, f) > 0$ only if $f$ can be swapped for $a$
    \item For all subpaths $a \in \mathcal{A}$, $\sum_{f \in N(a)} \alpha(a, f) = 1$.
    \item For all subpaths $f \in \mathcal{F}$, $\sum_{a \in N^{-1}(f)} \alpha(a, f) \leq 2$.
\end{enumerate}  

This weight function then gives:

$$c(A) = \sum_{a \in \mathcal{A}} c(a) = \sum_{a \in \mathcal{A}} \sum_{f \in N(a)} c(a) \alpha(a, f) \leq \sum_{f \in \mathcal{F}} \sum_{a \in N^{-1}(f)} c(f) \alpha(a, f) \leq \sum_{f \in \mathcal{F}} 2c(f) \leq 4c(F),$$
where $N^{-1}(f) = \{a \in \mathcal{A}: f \in N(a)\}$.  The first inequality holds by the assumption in the lemma statement that no swaps reduce the cost of $A$, so for $a \in \mathcal{A}$ and $f \in N(a)$, $c(a) \leq c(f)$. The last inequality follows by the fact that every edge in $F$ appears in at most two subpaths in $\mathcal{F}$.

We now turn towards proving that for any $X \subseteq \mathcal{A}$, $|N(X)| \geq \frac{1}{2}|X|$. Fix any $X \subseteq \mathcal{A}$.   Suppose that we remove all of the edges on paths $a \in X$ from $A$, and also remove all vertices on these paths except their endpoints.  After removing these nodes and edges, we are left with $|X|+1$ connected components.  Let $T'$ be a tree with $|X|+1$ vertices, one for each of these connected components, with an edge between any pair of vertices in $T'$ whose corresponding components are connected by a subpath $a \in X$. Consider any vertices of $T'$ with degree at most 2. We claim the corresponding component contains a vertex in $F$. Let $V'$ denote the set of vertices in the corresponding component that are endpoints of subpaths in $\mathcal{A}$. There must be at least one such vertex in $V'$.  Furthermore, it is not possible that all of the vertices in $V'$ are internal vertices of $A$ with degree at least 3, since at most two subpaths leave this component and there are no cycles in $A$. The only other option for endpoints is vertices in $F$, so this component must contain some vertex in $F$.

Applying Observation~\ref{obs:treevertices}, strictly more than $(|X|+1)/2$ (i.e., at least $|X|/2 + 1$) of the components have degree at most 2, and by the previous argument contain a vertex in $F$. These vertices are connected by $F$, and since each subpath in $\mathcal{F}$ does not have internal vertices in $A$, no subpath in $\mathcal{F}$ passes through more than two of these components. Hence, at least $|X|/2$ of the subpaths in $\mathcal{F}$ have endpoints in two different components because at least $|X|/2$ edges are required to connect $|X|/2+1$ vertices. In turn, any of these $|X|/2$ paths $f$ could be swapped for one of the subpaths $a \in X$ that is on the path between the components containing $f$'s endpoints. This shows that $|N(X)| \geq |X|/2$ as desired.
\end{proof}
\section{Algorithm for the Robust Steiner Tree Problem}\label{section:steiner}

\begin{figure}
\begin{center}
 \textbf{Minimize } $r$ \textbf{subject to}
\begin{eqnarray} 
\label{eq:steiner}\forall S \subset V \text{ such that } \emptyset \subset S \cap T \subset T: &\sum_{e \in \delta(S)} x_e \geq 1\\
\label{eq:range}\forall \bd \text{ such that } d_e \in [\ell_e, u_e]: & \sum_{e \in E} d_e x_e \leq \opt(\bd) + r\\
\forall e \in E: & x_e \in [0, 1] 
\end{eqnarray}
\end{center}
\caption{The Robust Steiner Tree Polytope}
\label{fig:rrst}
\end{figure}

In this section, our goal is to find a fractional solution to the LP in Fig.~\ref{fig:rrst} for robust Steiner tree. By Theorem~\ref{thm:generalrounding} and known approximation/integrality gap results for Steiner Tree, this will give the following theorem:

\begin{theorem}\label{thm:steiner-main}
There exists a $(2755, 64)$-robust algorithm for the Steiner tree problem.
\end{theorem}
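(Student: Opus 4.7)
The overall plan is to invoke Theorem~\ref{thm:generalrounding} with $\mathcal{P}$ instantiated as Steiner tree and $\mathcal{P}$-LP as the standard undirected-cut relaxation in Figure~\ref{fig:rrst}. It suffices to produce a fractional solution $\bx^\ast$ to this LP that is $(\alpha,\beta)$-robust in the sense $\sum_e d_e x_e^\ast \le \alpha\cdot\opt(\bd)+\beta\cdot\mr$ for every realization $\bd$, since the rounding then yields a $(\gamma\delta\alpha,\gamma\delta\beta+\gamma)$-robust integral tree. We will take $\gamma=4+\epsilon$ from the local-search algorithm of Theorem~\ref{thm:localsearch} and use the known constant upper bound $\delta$ on the integrality gap of this LP, so the reduction is only a constant-factor blow-up, and careful numerical tracking of all constants at the end will produce the stated pair $(2755,64)$.

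To find $\bx^\ast$ I would apply the ellipsoid method to the LP. The cut constraints~\eqref{eq:steiner} have a standard polynomial-time separation via global min-cut, so the work is in approximately separating the regret constraints~\eqref{eq:range}. For a candidate integral tree $\sol$, the worst-case $\bd$ greedily sets $d_e=\ell_e$ on $e\in\sol$ and $d_e=u_e$ otherwise, which rewrites the maximum regret of $\bx$ against $\sol$ as $\sum_{e\notin\sol}a_e-\sum_{e\in E}b_e$, where $a_e=u_e x_e-\ell_e x_e+\ell_e$ and $b_e=\ell_e-\ell_e x_e$. Separating therefore reduces to finding a single Steiner tree $\alg$ that, compared against the unknown benchmark $\sol=\mrs$, simultaneously satisfies the difference-style approximation $\sum_{e\in\alg\setminus\sol}a_e\le c_1\sum_{e\in\sol\setminus\alg}a_e$ and the standard multiplicative approximation $\sum_{e\in\alg}b_e\le c_2\sum_{e\in\sol}b_e$ for constants $c_1,c_2$. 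Any such $\alg$ can be used as an integral witness that either certifies a violation of~\eqref{eq:range} or declares $\bx$ to be $(\alpha,\beta)$-robust for appropriate constants.

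The heart of the proof, and the step I expect to be the main obstacle, is producing a single $\alg$ that achieves both (i) the difference approximation on the $a$-objective and (ii) the multiplicative approximation on the $b$-objective. My plan is to invoke the path-swap local search of Section~\ref{section:localsearch} in alternating phases, each phase greedily improving one of the two objectives by at least a $(1+\epsilon)$-factor so that both the number of iterations and the possible damage to the inactive objective can be bounded. The Hall-type matching analysis of Theorem~\ref{thm:localsearch} must be strengthened to the difference form for objective $a$, by charging each of $\alg$'s path-swap segments only against the portion of $\sol$'s corresponding segment that lies outside $\alg$; and a potential argument must show that a phase-$b$ improvement does not inflate $\sum_{e\in\alg\setminus\sol}a_e$ by more than a constant fraction of its previous drop, and symmetrically for the other direction. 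A careful amortization then yields simultaneous convergence to a single tree $\alg$ that is constant-competitive in both senses.

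Once the simultaneous $(c_1,c_2)$-approximation is in hand, plugging back through the reformulation gives the approximate separation oracle; combined with the a priori upper bound $r\le\mr$ on the LP optimum, the ellipsoid method produces $\bx^\ast$ with $\alpha,\beta=O(1)$. Applying Theorem~\ref{thm:generalrounding} with $\gamma=4+\epsilon$ and the integrality-gap bound $\delta$ and substituting the explicit values of $c_1,c_2,\alpha,\beta$ then yields a $(2755,64)$-robust integral Steiner tree, proving Theorem~\ref{thm:steiner-main}.
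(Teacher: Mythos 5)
Your overall architecture --- formulate the LP in Fig.~\ref{fig:rrst}, build an approximate separation oracle by finding a single tree that simultaneously achieves a difference approximation on the $a$-weights and a multiplicative approximation on the $b$-weights via an alternating-phase local search, then invoke Theorem~\ref{thm:generalrounding} --- matches the paper's plan. However, there is a concrete error in the rounding step: you set $\gamma = 4+\epsilon$ from the local-search algorithm, but Theorem~\ref{thm:generalrounding} requires a $\gamma$-approximation for \emph{classical} Steiner tree to round the fractional solution, and the paper uses the Byrka et al.\ $(\ln 4 + \epsilon)\approx 1.39$-approximation there, not the local-search $4$-approximation. The local search (and its strengthening to a difference approximation in Lemma~\ref{lemma:differenceapprox}) is used only inside the separation oracle; substituting $\gamma=4+\epsilon$ in the rounding would inflate both final constants by roughly a factor $4/\ln 4 \approx 2.9$, and you would not land on $(2755,64)$.

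The technical heart --- proving that alternating forward/backward phases converge to one tree satisfying both guarantees --- is stated too loosely to close the argument. You propose that each phase ``greedily improv[es] one of the two objectives by at least a $(1+\epsilon)$-factor,'' but the paper's phases are threshold-driven: the forward phase runs until $c'(\alg)$ exceeds $(4\Gamma'+\kappa)\chi'$ or $c(\alg)$ has dropped by $\rho/2$, and the backward phase runs until $c'(\alg)$ falls below $4\Gamma'\chi'$. The actual convergence mechanism is a bound on \emph{exchange rates} (Lemma~\ref{lemma:greedyswap}): in the forward phase each swap trades a decrease in $c$ for an increase in $c'$ at rate roughly $c'(\sol\setminus\alg)/c(\alg\setminus\sol)$, whereas in the backward phase the reverse rate is smaller by a factor $\Omega(1/\Gamma)$ because $c'(\alg\setminus\sol)\geq(4\Gamma'-1)c'(\sol)$ holds throughout. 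Making this dominate requires the exponential bound $e^{\zeta' T}-1$ in Lemma~\ref{lemma:backwardphase} precisely because $c(\alg\setminus\sol)$ drifts as the backward phase proceeds; your ``constant fraction of its previous drop'' sketch does not account for this drift, so the proposal as written does not supply the mechanism that makes the amortization go through.
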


It is well-known that the standard Steiner tree polytope admits an exact separation oracle
(by solving the $s, t$-min-cut problem using edge weights $x_e$ for all $s, t \in T$) so it is sufficient to find an approximate separation oracle for the regret constraint set. We present the main ideas of the separation oracle here, and defer the details to Section~\ref{section:differenceapprox2}.

First, we create the derived instance of the Steiner tree problem which is a copy $G'$ of the input graph $G$ with edge weights $u_e x_e +\ell_e -\ell_e x_e$. As noted earlier, the optimal Steiner tree $T^*$ on the derived instance maximizes the regret of the fractional solution $\bx$. However, since Steiner tree is \np-hard, we cannot hope to exactly find $T^*$. We need a Steiner tree $\hat{T}$ such that the regret caused by it can be bounded against that caused by $T^*$. The difficulty is that the regret corresponds to the total weight of edges \textit{not} in the Steiner tree (plus an offset that we will address later), whereas standard Steiner tree approximations give guarantees in terms of the total weight of edges in the Steiner tree. 
%
We overcome this difficulty by requiring a stricter notion of ``difference approximation'' -- that the weight of edges $\hat{T} \setminus T^*$ be bounded against those in $T^* \setminus \hat{T}$. Note that this condition ensures that not only is the weight of edges in $\hat{T}$ bounded against those in $T^*$, but also that the weight of edges {\em not in $\hat{T}$} is bounded against that of edges {\em not in $T^*$}. We show the following lemma to obtain the difference approximation:

\begin{lemma}\label{lemma:differenceapprox}
For any $\epsilon > 0$, there exists a polynomial-time algorithm for the Steiner tree problem such that if $\opt$ denotes the set of edges in the optimal solution and $c(S)$ denotes the total weight of edges in $S$, then for any input instance of Steiner tree, the output solution $\alg$ satisfies $c(\alg \setminus \opt) \leq (4 + \epsilon) \cdot c(\opt \setminus \alg)$.
\end{lemma}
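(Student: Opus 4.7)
My plan is to run the local search algorithm of Section~\ref{section:localsearch} (pruning $\alg$ to be a minimal Steiner tree at the end, and using multiplicatively-improving swaps to guarantee polynomial termination) and sharpen the analysis of Theorem~\ref{thm:localsearch} from a $4$-approximation on total cost to a $4$-approximation on the symmetric-difference cost $c(\alg \setminus \opt)$ versus $c(\opt \setminus \alg)$, with a standard $(1+O(\epsilon))$ slack from approximate local optimality yielding the stated $(4+\epsilon)$ bound.

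I will reuse the subpath decompositions $\mathcal{A}, \mathcal{F}$ of $\alg, \opt$ from the proof of Theorem~\ref{thm:localsearch}, and make two structural observations. First, every edge in a multi-edge subpath $a \in \mathcal{A}$ has at least one internal vertex of $a$ as an endpoint; such internal vertices lie in $\alg$ but not in $\opt$, so every such edge is in $\alg \setminus \opt$. The analogous statement holds for multi-edge subpaths of $\mathcal{F}$. Second, every edge $e \in \alg \cap \opt$ is a singleton subpath in both $\mathcal{A}$ and $\mathcal{F}$: both endpoints of $e$ lie in $\alg \cap \opt$ and so trigger endpoint condition~(ii) of the $\mathcal{A}$-decomposition, and in the $\opt$-decomposition the maximal connected subgraph containing $e$ is $\{e\}$ itself (any extension would place an $\alg$-vertex internally, violating the decomposition rule). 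Let $\mathcal{A}^*, \mathcal{F}^*$ be obtained from $\mathcal{A}, \mathcal{F}$ by deleting these $\alg \cap \opt$ singletons; then $\mathcal{A}^*$ partitions $\alg \setminus \opt$, and each edge of $\opt \setminus \alg$ appears in at most two subpaths of $\mathcal{F}^*$.

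Next I will check that the Hall-style matching argument of Theorem~\ref{thm:localsearch} still goes through on $\mathcal{A}^*, \mathcal{F}^*$. The inequality $|N(X)| \geq |X|/2$ holds for any $X \subseteq \mathcal{A}^* \subseteq \mathcal{A}$, and $N(X) \subseteq \mathcal{F}^*$ because any $f \in \mathcal{F} \setminus \mathcal{F}^*$ satisfies $f \subseteq \alg$; since $\alg$ is minimal, $\alg \setminus a \cup f = \alg \setminus a$ is infeasible for every $a \in \mathcal{A}^*$, so $f \notin N(a)$. The extension of Hall's theorem then yields weights $\alpha: \mathcal{A}^* \times \mathcal{F}^* \to \mathbb{R}_{\geq 0}$ with $\sum_f \alpha(a,f) = 1$ and $\sum_a \alpha(a,f) \leq 2$. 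The additional ingredient is a strengthened local-optimality inequality: the swap $\alg \mapsto (\alg \cup f) \setminus a$ has new cost $c(\alg) + c(f \setminus \alg) - c(a)$, so local optimality gives $c(a) \leq c(f \setminus \alg)$ (rather than the weaker $c(a) \leq c(f)$ used in Section~\ref{section:localsearch}). Combining these,
\[
c(\alg \setminus \opt) = \sum_{a \in \mathcal{A}^*} c(a) = \sum_{a, f} \alpha(a,f) \, c(a) \leq \sum_{a, f} \alpha(a,f) \, c(f \setminus \alg) \leq 2 \sum_{f \in \mathcal{F}^*} c(f \setminus \alg) \leq 4 \, c(\opt \setminus \alg).
\]

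The main technical obstacle will be the second structural observation and the inclusion $N(X) \subseteq \mathcal{F}^*$: both rely on isolating $\alg \cap \opt$ edges as singletons in the decompositions, and the latter requires the minimality of $\alg$ as a Steiner tree (otherwise some $f \subseteq \alg$ would spuriously enter $N(a)$, for which local optimality does not give a useful bound). The $\epsilon$ in the statement accommodates running local search with multiplicatively-improving swaps for polynomial time, which weakens $c(a) \leq c(f \setminus \alg)$ to $c(a) \leq (1+O(\epsilon)) c(f \setminus \alg)$ and propagates cleanly through the chain above.
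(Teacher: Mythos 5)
Your proposal is correct in its main thrust but takes a substantially more laborious route than the paper does. The paper never re-opens the decomposition argument of Theorem~\ref{thm:localsearch}: it takes the local optimum $\alg$, changes the cost of every edge in $\alg \cap \opt$ to $0$, and observes that $\alg$ remains locally optimal under the modified cost $c'$ (the zeroing lowers any reachable $F$'s cost by $c(F \cap \alg \cap \opt) \le c(\alg \cap \opt)$, which is exactly how much $\alg$'s cost dropped). Theorem~\ref{thm:localsearch} is then applied as a black box to $c'$, giving $c(\alg \setminus \opt) = c'(\alg) \le 4\,c'(\opt) = 4\,c(\opt \setminus \alg)$. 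Your version instead re-proves the Hall-style matching on the restricted decompositions $\mathcal{A}^*, \mathcal{F}^*$. Your structural observations --- that each edge of $\alg \cap \opt$ is a singleton subpath in both decompositions, that $\mathcal{A}^*$ therefore partitions $\alg \setminus \opt$ while $\mathcal{F}^*$ double-covers $\opt \setminus \alg$, and that $N(\mathcal{A}^*) \subseteq \mathcal{F}^*$ via minimality of $\alg$ --- are all correct. Two small remarks. First, the ``strengthened'' inequality $c(a) \le c(f \setminus \alg)$ buys you nothing beyond $c(a) \le c(f)$: every $f \in \mathcal{F}^*$ already satisfies $f \cap \alg = \emptyset$ (multi-edge subpaths have no interior vertices in $\alg$, and single-edge subpaths in $\alg \cap \opt$ are exactly the ones you removed), so $c(f \setminus \alg) = c(f)$ on the index set you sum over. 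Second, the minimality of $\alg$ is not an extra burden you are taking on; the proof of Theorem~\ref{thm:localsearch} already assumes WLOG that all leaves of $\alg$ are terminals, so the paper's route leans on the same assumption implicitly. What the paper's route buys is that nothing in the decomposition needs to be re-examined.

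The genuine gap is in the running-time clause. You write that ``standard tricks'' plus multiplicatively-improving swaps guarantee polynomial termination, but the paper explicitly cautions that the usual discretization trick --- rounding costs to multiples of $\frac{\epsilon}{kn} c(\apx)$ so the plain local search converges in $\frac{kn}{\epsilon}$ steps --- does \emph{not} suffice here: the resulting additive $\epsilon \cdot c(\opt)$ slack can dwarf the target quantity $c(\opt \setminus \alg)$, destroying the difference-approximation guarantee. You do name the correct remedy (only accept swaps that decrease $c(\alg)$ by at least an $\epsilon/4$ fraction of $c(a)$), but you do not argue convergence for it. The paper's argument is a potential-function one: it shows that whenever $c(\alg \setminus \opt) > (4+\epsilon)\,c(\opt \setminus \alg)$ there is an acceptable swap improving $c(\alg)$ by at least $\frac{\epsilon}{4n^2}\,c(\alg \setminus \opt)$, and (via Observation~\ref{obs:swap}) that the potential $c(\alg \setminus \opt) - c(\opt \setminus \alg)$ decreases by exactly the same amount as $c(\alg)$, so it shrinks geometrically and the algorithm stops after polynomially many swaps. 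Without something like this, ``polynomial termination'' is asserted rather than established, and the $(4+\epsilon)$ constant is not yet justified.
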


\begin{proof}[Proof]
The algorithm we use is the local search algorithm described in Section~\ref{section:localsearch}, which provides a 4-approximation $\alg$, i.e., $c(\alg) \leq 4 \cdot c(\opt)$.  Suppose that the cost of each edge $e \in \alg \cap \opt$ is now changed from its initial value to 0.  After this change, $\alg$ remains locally optimal because for every feasible solution $F$ that can be reached by making a local move from $\alg$, the amount by which the cost of $\alg$ has decreased by setting edge costs to zero is at least the amount by which $F$ has decreased.  Hence no local move causes a decrease in cost.  Thus, $\alg$ remains a 4-approximation, which implies that $c(\alg \setminus \opt) \leq 4 \cdot c(\opt \setminus \alg)$. 

We also need to show that the algorithm converges in polynomially many iterations. The authors in \cite{GGKMSSV17} achieve this convergence by discretizing all the edge costs to the nearest multiple of $\frac{\epsilon}{kn} c(\apx)$ for an initial solution $\apx$ such that $c(\opt) \leq c(\apx) \leq k c(\opt)$ (e.g., a simple way to do so is to start with a solution formed by the union of shortest paths between terminals, and then remove edges which cause cycles arbitrarily. This solution has cost between $c(\opt)$ and $n^2 \cdot c(\opt)$. See Section B.3 of \cite{GGKMSSV17} for more details). This guarantees that the algorithm converges in $\frac{kn}{\epsilon}$ iterations, at an additive $\epsilon \cdot c(\opt)$ cost. For a standard approximation algorithm this is not an issue, but for an algorithm that aims for a guarantee of the form $c(\alg \setminus \opt) \leq O(1) \cdot c(\opt \setminus \alg)$ an additive $\epsilon \cdot c(\opt)$ might be too much. 

We modify the algorithm as follows to ensure that it converges in polynomially many iterations: We only consider swapping out $a$ for $f$ if the decrease in cost is at least $\epsilon/4$ times the cost of $a$, and we always choose the swap of this kind that decreases the cost by the largest amount\footnote{Note that $c(a) - c(f)$ and $\frac{c(a) - c(f)}{c(a)}$ are both maximized by maximizing $c(a)$ and minimizing $c(f)$. Any path $f$ from $u$ to $v$ that we consider adding is independent of the paths $a$ we can consider removing, since $f$ by definition does not intersect with our solution. So in finding a swap satisfying these conditions if one exists, it still suffices to only consider swaps between shortest paths $f$ and longest paths $a$ in the resulting cycles as before.}. 

We now show the algorithm converges. Later in Section~\ref{section:differenceapprox2}, we will prove two claims, so for brevity's sake we will not include the proofs of the claims here. The first claim is that as long as $c(\alg \setminus \opt) > (4+\epsilon)c(\opt \setminus \alg)$, there is a swap between $\alg$ and $\opt$ where decrease in cost is at least $\epsilon/4$ times the cost of the path being swapped out, and is at least $\epsilon/4n^2$ times $c(\alg \setminus \opt)$ (the proof follows similarly to Lemma~\ref{lemma:greedyswap} in Section~\ref{section:differenceapprox2}). The second claim is that in any swap the quantity $c(\alg \setminus \opt) - c(\opt \setminus \alg)$ decreases by the same amount that $c(\alg)$ does (see Observation~\ref{obs:swap} in Section~\ref{section:differenceapprox2}). 

So, we use $c(\alg \setminus \opt) - c(\opt \setminus \alg)$ as a potential to bound the number of swaps. This potential is initially at most $n \max_e c_e$, is always at least $\min_e c_e$ as long as $c(\alg \setminus \opt) > c(\opt \setminus \alg)$, and each swap decreases it multiplicatively by at least a factor of $(1-\epsilon/4n^2)$ as long as $c(\alg \setminus \opt) > (4+\epsilon)c(\opt \setminus \alg)$. Thus the algorithm only needs to make $\frac{\log (n \max_e c_e / \min_e c_e)}{-\log (1-\epsilon/4n^2)}$ swaps to arrive at a solution that is a $(4+\epsilon)$-approximation, which is a polynomial in the input size.
\end{proof}

\subsection{Special Case where the Lower Bounds on All Edge Lengths Are $\ell_e = 0$}

Recall that the regret caused by $T$ is not exactly the weight of edges not in $T$, but includes a fixed offset of $\sum_{e \in E} (\ell_e - \ell_e x_e)$. If $\ell_e = 0$ for all edges, i.e., the offset is $0$, then we can recover a robust algorithm from Lemma~\ref{lemma:differenceapprox} alone with much better constants than in Theorem~\ref{thm:steiner-main}:

\begin{lemma}\label{lemma:steinersep1}
For any instance of robust Steiner tree for which all $\ell_e = 0$, for every $\epsilon > 0$ there exists an algorithm \textsc{RRST-Oracle-ZLB} which, given a solution $(\bx, r)$ to the LP in Fig.~\ref{fig:rrst}, either:
\begin{itemize}
\item Outputs a separating hyperplane for the LP in Fig.~\ref{fig:rrst}, or
\item Outputs ``Feasible'', in which case $\bx$ is feasible for the (non-robust) Steiner tree LP and $\forall \bd: \sum_{e \in E} d_e x_e \leq \opt (\bd) + (4 + \epsilon)r$.
\end{itemize}
\end{lemma}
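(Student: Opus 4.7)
The plan is to separate the standard Steiner tree cut constraints~\eqref{eq:steiner} exactly via $s,t$-min-cut (as noted at the start of the section) and to approximately separate the regret constraint set~\eqref{eq:range} using Lemma~\ref{lemma:differenceapprox}. The hypothesis $\ell_e = 0$ makes the reduction clean: the additive offset $\sum_e(\ell_e - \ell_e x_e)$ vanishes, so the regret of $\bx$ against any fixed integer Steiner tree $\sol$ is maximized by the realization $d_e = u_e$ off $\sol$ and $d_e = 0$ on $\sol$, giving value $\sum_{e \notin \sol} u_e x_e$. The worst-case adversary is therefore the Steiner tree $T^\ast$ that minimizes $\sum_{e \in \sol} u_e x_e$, which I cannot compute exactly.

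Instead, I would invoke Lemma~\ref{lemma:differenceapprox} on the derived instance with edge weights $u_e x_e$ to obtain a tree $\hat{T}$ satisfying $c(\hat{T} \setminus T^\ast) \leq (4+\epsilon)\,c(T^\ast \setminus \hat{T})$. The oracle's rule is then: if $\sum_{e \notin \hat{T}} u_e x_e > r$, output this as the violated regret inequality (witnessed by $\bd$ with $d_e = u_e$ off $\hat{T}$ and $d_e = 0$ on $\hat{T}$, for which $\hat{T}(\bd) = 0$ and so $\sum_e d_e x_e > \hat{T}(\bd) + r \geq \opt(\bd) + r$); otherwise return ``Feasible''.

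The main obstacle is proving that the ``Feasible'' verdict really certifies $\sum_e d_e x_e \leq \opt(\bd) + (4+\epsilon)r$ for every $\bd$, despite our only having an approximate adversary. The key identity is
\begin{equation*}
\sum_{e \notin T^\ast} u_e x_e - \sum_{e \notin \hat{T}} u_e x_e \;=\; \sum_{e \in \hat{T} \setminus T^\ast} u_e x_e - \sum_{e \in T^\ast \setminus \hat{T}} u_e x_e \;\leq\; (3+\epsilon)\sum_{e \in T^\ast \setminus \hat{T}} u_e x_e \;\leq\; (3+\epsilon)\,r,
\end{equation*}
where the first inequality uses Lemma~\ref{lemma:differenceapprox} and the second uses $\sum_{e \in T^\ast \setminus \hat{T}} u_e x_e \leq \sum_{e \notin \hat{T}} u_e x_e \leq r$. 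Combined with $\sum_{e \notin \hat{T}} u_e x_e \leq r$, this yields $\sum_{e \notin T^\ast} u_e x_e \leq (4+\epsilon)r$. To finish, for any realization $\bd$ I would let $\opt_\bd$ denote its optimum; using $\ell_e = 0$ and $x_e \in [0,1]$,
\begin{equation*}
\sum_e d_e x_e - \opt(\bd) \;=\; \sum_{e \notin \opt_\bd} d_e x_e - \sum_{e \in \opt_\bd} d_e(1 - x_e) \;\leq\; \sum_{e \notin \opt_\bd} u_e x_e \;\leq\; \sum_{e \notin T^\ast} u_e x_e \;\leq\; (4+\epsilon)\,r,
\end{equation*}
which is exactly the required bound.
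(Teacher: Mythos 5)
Your proof is correct and establishes the same $(4+\epsilon)$ bound, but it is organized differently from the paper's argument. The paper proceeds by contrapositive: it assumes there is some realization $\bd$ and tree $\sol$ with $\sum_e d_e x_e > \sol(\bd) + (4+\epsilon)r$, passes to the worst-case realization $\bd^*$ for that $\sol$, partitions $E$ into the four pieces $E_0 = E \setminus (\sol \cup T')$, $E_S = \sol \setminus T'$, $E_T = T' \setminus \sol$, $E_{ST} = \sol \cap T'$, and combines $c(E_0) + c(E_T) > (4+\epsilon)r$ with the difference approximation $(4+\epsilon)\,c(E_S) \geq c(E_T)$ to conclude $\sum_{e \notin T'} u_e x_e > r$, so the oracle does output a violated inequality. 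You instead argue directly: you first identify the true worst-case adversary $T^*$ (the optimal Steiner tree on the derived instance with weights $u_e x_e$), use the symmetric-difference identity $\sum_{e \notin T^*}u_e x_e - \sum_{e \notin \hat T}u_e x_e = \sum_{e \in \hat T\setminus T^*}u_e x_e - \sum_{e \in T^*\setminus \hat T}u_e x_e$ together with Lemma~\ref{lemma:differenceapprox} and the containment $T^*\setminus\hat T \subseteq E\setminus\hat T$ to bound $\sum_{e\notin T^*}u_e x_e \leq (4+\epsilon)r$ under the ``Feasible'' verdict, and then observe that the regret against any $\bd$ is at most $\sum_{e\notin T^*}u_e x_e$. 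A minor advantage of your route is that you apply Lemma~\ref{lemma:differenceapprox} exactly as stated (against the optimum $T^*$ of the derived instance), whereas the paper invokes it against an arbitrary $\sol$ and implicitly relies on the fact that the local-search analysis behind that lemma actually works against any feasible tree, not just the optimum. Both arguments use the same algorithm and the same key lemma, so the difference is one of presentation and bookkeeping rather than substance.
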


\begin{figure}
\fbox{\begin{minipage}{.98\textwidth}
\textsc{RRST-Oracle-ZLB}($G(V, E)$, $\{u_e\}_{e \in E}$, $(\bx, r)$)

\begin{algorithm}[H]

\KwData{Undirected graph $G(V, E)$, upper bounds on edge lengths $\{u_e\}_{e \in E}$, solution ($\bx = \{x_e\}_{e \in E}, r$) to the LP in Fig.~\ref{fig:rrst}}
Check all constraints of the LP in Fig.~\ref{fig:rrst} except the regret constraint set, \textbf{return} any violated constraint that is found\;
$G' \leftarrow $ copy of $G$ where $e$ has cost $u_ex_e$ \;
$T' \leftarrow $ output of algorithm from Lemma~\ref{lemma:differenceapprox} on $G'$\;
\eIf{$\sum_{e \notin T'} u_ex_e > r$}{
\textbf{return} $\sum_{e \notin T'} u_ex_e \leq r$\;
}{
\textbf{return} ``Feasible''\;
}
\end{algorithm}
\end{minipage}}
\caption{Separation Oracle for the LP in Fig.~\ref{fig:rrst} when $\ell_e = 0,~\forall e$}
\label{fig:storacle}
\end{figure}

\textsc{RRST-Oracle-ZLB} is given in Fig.~\ref{fig:storacle}. Via the ellipsoid method this gives a $(1, 4+\epsilon)$-robust fractional solution. Using Theorem~\ref{thm:generalrounding}, the fact that the integrality gap of the LP we use is 2 \cite{Vazirani01}, and that there is a $(\ln 4+\epsilon) \approx 1.39$-approximation for Steiner tree \cite{ByrkaGRS10}, with appropriate choice of $\epsilon$ we get the following corollary:

\begin{corollary}\label{cor:steineralg1}
There exists a $(2.78, 12.51)$-robust algorithm for Steiner tree when $\ell_e = 0$ for all $e \in E$.
\end{corollary}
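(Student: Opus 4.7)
The plan is to combine the approximate separation oracle of Lemma~\ref{lemma:steinersep1} with the generic rounding procedure of Theorem~\ref{thm:generalrounding}.

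First, I would run the ellipsoid method on the LP in Figure~\ref{fig:rrst} using \textsc{RRST-Oracle-ZLB} as the separation oracle. Because $(\chi_{\mrs}, \mr)$ is feasible for that LP, its optimum is at most $\mr$, and the standard analysis of ellipsoid with an approximate oracle produces in polynomial time a pair $(\bx^*, r^*)$ with $r^* \leq \mr$ for which \textsc{RRST-Oracle-ZLB} returns ``Feasible''. By the guarantee of Lemma~\ref{lemma:steinersep1}, this $\bx^*$ lies in the ordinary Steiner-tree polytope and satisfies
$$\sum_{e \in E} d_e x_e^* \;\leq\; \opt(\bd) + (4+\epsilon)\, r^* \;\leq\; \opt(\bd) + (4+\epsilon)\, \mr$$
for every realization $\bd$. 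In the language of Theorem~\ref{thm:generalrounding}, $\bx^*$ is thus a feasible LP solution with robustness parameters $(\alpha, \beta) = (1, 4+\epsilon)$.

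Next, I would invoke Theorem~\ref{thm:generalrounding} with $\mathcal{P}$ being Steiner tree, taking $\gamma = \ln 4 + \epsilon'$ to be the approximation ratio of Byrka et al.~\cite{ByrkaGRS10} and $\delta = 2$ to be the integrality gap of the standard Steiner-tree LP~\cite{Vazirani01}. The theorem then outputs an integral Steiner tree that is
$$\bigl(\,2(\ln 4 + \epsilon'),\; 2(\ln 4 + \epsilon')(4+\epsilon) + (\ln 4 + \epsilon')\,\bigr)\text{-robust}.$$
Since $2\ln 4 \approx 2.7726$ and $9\ln 4 \approx 12.4766$, choosing $\epsilon, \epsilon' > 0$ sufficiently small makes both coordinates at most $(2.78, 12.51)$, as claimed.

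The argument is essentially a bookkeeping assembly: all of the heavy lifting has been done elsewhere, in the difference approximation of Lemma~\ref{lemma:differenceapprox}, the approximate oracle \textsc{RRST-Oracle-ZLB}, and the generic rounding of Theorem~\ref{thm:generalrounding}. The only mild subtlety worth a sentence of verification is that running ellipsoid with an approximate oracle really does return a feasible $\bx^*$ of the promised form; this is standard, because the oracle rejects a candidate only by producing a concrete violated inequality that is implied (up to the $(4+\epsilon)$ slack on $r$) by the true Steiner-tree and regret constraints, so ellipsoid's final iterate satisfies them all.
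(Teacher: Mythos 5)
Your proposal is exactly the paper's argument: run ellipsoid with \textsc{RRST-Oracle-ZLB} to obtain a $(1,4+\epsilon)$-robust fractional solution, then apply Theorem~\ref{thm:generalrounding} with $\gamma = \ln 4 + \epsilon'$ and $\delta = 2$. The arithmetic $\gamma\delta\alpha = 2\ln 4 \approx 2.7726 < 2.78$ and $\gamma\delta\beta + \gamma = 9\ln 4 \approx 12.4766 < 12.51$ checks out, and the bookkeeping is correct.
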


\begin{proof}[Proof of Lemma~\ref{lemma:steinersep1}]
All inequalities except the regret constraint set can be checked exactly by \textsc{RRST-Oracle-ZLB}. Consider the tree $T'$ computed in \textsc{RRST-Oracle-ZLB} and $\bd'$ with  $d_e' = 0$ for $e \in T'$ and $d_e' = u_e$ for $e \notin T'$. The only other violated inequality \textsc{RRST-Oracle-ZLB} can output is the inequality $\sum_{e \notin T'} u_e x_e \leq r$ in line 5, which is equivalent to $\sum_{e \in E} d_e' x_e \leq T'(\bd') + r$, an inequality from the regret constraint set. Furthermore, \textsc{RRST-Oracle-ZLB} only outputs this inequality when it is actually violated. So, it suffices to show that if there exists $\bd, \sol$ such that $\sum_{e \in E} d_e x_e > \sol(\bd) + (4 + \epsilon)r$ then \textsc{RRST-Oracle-ZLB} outputs a violated inequality on line 5, i.e., finds Steiner tree $T'$ such that $\sum_{e \notin T'} u_e x_e > r$.

Suppose there exists $\bd, \sol$ such that $\sum_{e \in E} d_e x_e > \sol(\bd) + (4 + \epsilon)r$. Let $\bd^*$ be the vector obtained from $\bd$ by replacing $d_e$ with $u_e$ for edges not in $\sol$ and with $0$ for edges in $\sol$. Replacing $\bd$ with $\bd^*$ can only increase $\sum_{e \in E} d_e x_e - \sol(\bd)$, i.e.:

\begin{equation}\label{eq:dstarreg}
\sum_{e \notin \sol} u_e x_e = \sum_{e \in E} d_e^* x_e > \sol(\bd^*) + (4 + \epsilon)r = (4 + \epsilon)r.
\end{equation}

Consider the graph $G'$ made by $\textsc{RRST-Oracle-ZLB}$. We'll partition the edges into four sets, $E_0, E_S, E_T, E_{ST}$ where $E_0$ = $E \setminus (\sol \cup T')$, $E_S = \sol \setminus T'$, $E_T = T' \setminus \sol$, $E_{ST} = \sol \cap T'$. Let $c(E')$ for $E' = E_0, E_S, E_T, E_{ST}$ denote $\sum_{e \in E'} u_ex_e$, i.e., the total cost of the edge set $E'$ in $G'$. Since $\bd^*$ has $d_e = 0$ for $e \in \sol$, from \eqref{eq:dstarreg} we get that $c(E_0) + c(E_{T}) > (4 + \epsilon) r$.

Now note that $\sum_{e \notin T'} u_e x_e = c(E_0) + c(E_{S})$. Lemma~\ref{lemma:differenceapprox} gives that $(4 + \epsilon)c(E_{S}) \geq c(E_{T})$. Putting it all together, we get that:

$$\sum_{e \notin T'} u_e x_e = c(E_0) + c(E_S) \geq c(E_0) + \frac{c(E_T)}{4 + \epsilon} \geq \frac{c(E_0) + c(E_T)}{4 + \epsilon} > \frac{(4+\epsilon)r}{4 + \epsilon} = r.$$
\qedhere
\end{proof}

\subsection{General Case for Arbitrary Lower Bounds on Edge Lengths}

In general, the approximation guarantee given in Lemma~\ref{lemma:differenceapprox} alone does not suffice because of the offset of $\sum_{e \in E} (\ell_e - \ell_e x_e)$. We instead rely on a stronger notion of approximation formalized in the next lemma that provides simultaneous approximation guarantees on two sets of edge weights: $c_e = u_e x_e  - \ell_e x_e + \ell_e$ and $c_e' = \ell_e - \ell_e x_e$. The guarantee on $\ell_e - \ell_e x_e$ can then be used to handle the offset.

\begin{lemma}\label{lemma:differenceapprox2}
Let $G$ be a graph with terminals $T$ and two sets of edge weights $c$ and $c'$. Let $\sol$ be any Steiner tree connecting $T$. Let $\Gamma' > 1$, $\kappa > 0$, and $0 < \epsilon < \frac{4}{35}$ be fixed constants.
Then there exists a constant $\Gamma$ (depending on $\Gamma', \kappa, \epsilon$) and an algorithm that obtains a collection of Steiner trees $\alg$, at least one of which (let us call it $\alg_i$) satisfies:
\begin{itemize}
    \item $c(\alg_i \setminus \sol) \leq 4\Gamma \cdot c(\sol \setminus \alg_i),$ and
    \item $c'(\alg_i) \leq (4\Gamma' +\kappa + 1 + \epsilon) \cdot c'(\sol)$.
\end{itemize}
\end{lemma}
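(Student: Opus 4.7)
My plan is to extend the single-objective local search of Lemma~\ref{lemma:differenceapprox} to an alternating two-objective scheme whose sequence of intermediate Steiner trees forms the output collection $\alg$. For a fixed (but unknown to the algorithm) target $\sol$, the goal is to show that some intermediate tree $\alg_i$ in the collection satisfies both required bounds. The algorithm alternates \emph{$c$-phases} and \emph{$c'$-phases}: each phase performs only those path swaps whose improvement on the phase's own objective exceeds a prescribed multiplicative threshold (``greedy'' swaps, in the sense used in the proof of Lemma~\ref{lemma:differenceapprox}), and the phase terminates once no such swap remains. The current tree is appended to $\alg$ at each phase boundary.

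Within a single phase, Theorem~\ref{thm:localsearch} together with the weight-zeroing trick from Lemma~\ref{lemma:differenceapprox} (set $c$-weights of edges in $T\cap\sol$ to $0$; local optimality is preserved) shows that at the end of a $c$-phase the current tree $T$ satisfies $c(T\setminus\sol)\le 4\Gamma\cdot c(\sol\setminus T)$, where $\Gamma$ absorbs the slack from the greedy threshold, and at the end of a $c'$-phase the current tree satisfies $c'(T)\le 4\Gamma'\cdot c'(\sol)$ with analogous slack. The task is to exhibit a phase boundary at which \emph{both} bounds hold simultaneously, even though the two phases are optimizing different objectives and the algorithm cannot inspect $\sol$.

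To bound the number of phases and produce such a boundary, I would track a potential $\Phi(T)=c(T)+\mu\,c'(T)$ for an appropriately chosen $\mu$. A greedy swap on one objective reduces its target by at least a constant fraction of that objective's current value; crucially, because the swap was deemed greedy with respect to only one objective, it can increase the other objective by only a controlled amount that is dominated by the decrease in the greedy objective. Hence $\Phi$ strictly decreases in every phase, bounding the total phase count. The key invariant to establish is that a $c'$-phase begun right after a successful $c$-phase can damage the difference-approximation $c(T\setminus\sol)\le 4\Gamma\cdot c(\sol\setminus T)$ by only a controlled multiplicative amount before it itself terminates; the additive slacks $\kappa+1+\epsilon$ in the $c'$-bound and the factor $\Gamma$ (depending on $\Gamma',\kappa,\epsilon$) in the $c$-bound are set precisely to absorb the accumulated degradation summed as a geometric series across the alternation, which is where the bound $\epsilon<4/35$ enters.

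The main obstacle is exactly this ``preserved progress'' claim: formally bounding the cumulative $c$-damage that a sequence of greedy $c'$-swaps can inflict on a tree that begins a $c'$-phase with the difference-approximation guarantee on $c$ (and the symmetric statement). Standard local-search analyses yield only absolute approximation ratios, not differential control across two objectives, so the proof must charge each greedy $c'$-swap's $c$-cost increase against the definite $c'$-cost decrease it provides, and then sum via the decreasing potential $\Phi$. Once this charging is in hand, choosing $\Gamma$ large enough to absorb the accumulated slack and appending the tree to $\alg$ at the end of every phase ensures that, by the time $\Phi$ can no longer decrease, some $\alg_i$ already in the collection must satisfy both required inequalities against $\sol$.
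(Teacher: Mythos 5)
Your high-level plan — alternate two local-search phases, one improving $c$ and one improving $c'$, and collect the tree at each phase boundary — matches the paper's structure, but the two mechanisms you propose for making it work are both substantively different from the paper's and both have gaps as stated.

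First, the swap-selection rule. You choose swaps by a \emph{single-objective} criterion: a $c$-phase performs a swap if and only if it improves $c$ by a prescribed multiplicative threshold, as in the proof of Lemma~\ref{lemma:differenceapprox}. You then assert that such a swap ``can increase the other objective by only a controlled amount that is dominated by the decrease in the greedy objective.'' This is false in general: nothing in a single-objective greedy criterion constrains the $c'$-cost of the path being swapped in, so a swap that improves $c$ substantially can blow up $c'$ arbitrarily. The paper's algorithm handles this by making a genuinely bi-criteria choice: among candidate swaps it selects the one minimizing the \emph{rate of exchange} $\frac{c'(f)-c'(a)}{c(a)-c(f)}$, and it generates the candidate set via a bi-criteria approximate shortest-path subroutine (Lemma~\ref{lemma:apxsp}) that explicitly caps $c'(f)$. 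Lemma~\ref{lemma:greedyswap} then bounds that optimal rate of exchange in terms of $c(\alg\setminus\sol)$ and $c'(\sol\setminus\alg)$; this bound is the crux of the whole argument and does not follow from single-objective greediness.

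Second, the phase termination rule. Your $c$-phase stops ``once no such swap remains.'' Even if the swap selection were fixed, letting a $c$-phase run to exhaustion gives no bound whatsoever on $c'$ at its end — so the tree you append to $\alg$ at that boundary need not satisfy the $c'$ guarantee, and the subsequent $c'$-phase has no reason to bring $c'$ back down to $O(1)\cdot c'(\sol)$ while preserving the $c$-difference bound. The paper instead ends the forward phase by an explicit threshold test on $c'$: it stops as soon as $c'(\alg)$ exceeds $(4\Gamma'+\kappa)\chi'$ (or, alternatively, once $c(\alg)$ has already dropped by a targeted amount $\rho/2$). This requires guessing $\chi'\approx c'(\sol)$ and $\rho \approx c(\alg^{(i)}\setminus\sol)$, which the paper does by enumerating powers of $(1+\epsilon)$ — another concrete ingredient missing from your sketch.

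Third, the convergence potential. You propose $\Phi(T)=c(T)+\mu c'(T)$ for a fixed $\mu$, but for $\Phi$ to decrease across a backward ($c'$-improving) phase you need $\mu$ to dominate the rate of exchange $\frac{\text{increase in }c}{\text{decrease in }c'}$ over that phase. The paper's bound on this rate (Lemma~\ref{lemma:greedyswap} applied with roles of $c,c'$ swapped) depends on the ratio $c(\sol\setminus\alg)/c'(\alg\setminus\sol)$, which varies with the current tree and is not bounded by a fixed constant a priori; a fixed $\mu$ will generally fail. The paper instead tracks $\Phi(i)=c(\alg^{(i)}\setminus\sol)-c(\sol\setminus\alg^{(i)})$, which by Observation~\ref{obs:swap} changes by exactly $c(\alg^{(i)})-c(\alg^{(i+2)})$, and the per-phase-pair decrease is shown to be a constant fraction of $c(\alg^{(i+1)}_\rho\setminus\sol)$ (Corollary~\ref{cor:swapgain}), giving geometric decay and hence a polynomial bound on the number of phase pairs. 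This potential is tied directly to the target inequality $c(\alg\setminus\sol)\le 4\Gamma\cdot c(\sol\setminus\alg)$, which $\Phi=c+\mu c'$ is not.

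In short, you have located the correct obstacle — charging the $c$-damage of the $c'$-phase against its $c'$-progress — but the mechanism you propose to overcome it (single-objective greedy thresholds plus a fixed linear potential) does not actually bound the cross-objective damage. The essential missing ideas are: (i) select swaps by minimum rate of exchange among candidates generated by a bi-criteria constrained-shortest-path subroutine, so that Lemma~\ref{lemma:greedyswap}-type bounds apply; (ii) terminate phases by an explicit $c'$-threshold, not by local-search exhaustion; and (iii) track progress via $c(\alg\setminus\sol)-c(\sol\setminus\alg)$ through Observation~\ref{obs:swap} rather than a fixed linear combination $c+\mu c'$.
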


The fact that Lemma~\ref{lemma:differenceapprox2} generates multiple solutions (but only polynomially many) is fine because as long as we can show that one of these solutions causes sufficient regret, our separation oracle can just iterate over all solutions until it finds one that causes sufficient regret. 

We give a high level sketch of the proof of Lemma~\ref{lemma:differenceapprox2} here, and defer the full details to Section~\ref{section:differenceapprox2}. The algorithm uses the idea of \textit{alternate minimization}, alternating between a ``forward phase'' and a ``backward phase''. The goal of each forward phase/backward phase pair is to keep $c'(\alg)$ approximately fixed while obtaining a net decrease in $c(\alg)$. In the forward phase, the algorithm greedily uses local search, choosing swaps that decrease $c$ and increase $c'$ at the best ``rate of exchange'' between the two costs (i.e., the maximum ratio of decrease in $c$ to increase in $c'$), until $c'(\alg)$ has increased past some upper threshold. Then, in the backward phase, the algorithm greedily chooses swaps that decrease $c'$ while increasing $c$ at the best rate of exchange, until $c'(\alg)$ reaches some lower threshold, at which point we start a new forward phase. 

We guess the value of $c'(\sol)$ (we can run many instances of this algorithm and generate different solutions based on different guesses for this purpose) and set the upper threshold for $c'(\alg)$ appropriately so that we satisfy the approximation guarantee for $c'$. For $c$ we show that as long as $\alg$ is not a $4\Gamma$-difference approximation with respect to $c$ then a forward/backward phase pair reduces $c(\alg)$ by a non-negligible amount (of course, if $\alg$ is a $4\Gamma$-difference approximation then we are done). This implies that after enough iterations, $\alg$ must be a $4\Gamma$-difference approximation as $c(\alg)$ can only decrease by a bounded amount. To show this, we claim that while $\alg$ is not a $4\Gamma$-difference approximation, for sufficiently large $\Gamma$ the following bounds on rates of exchange hold:

\begin{itemize} 
\item For each swap in the forward phase, the ratio of decrease in $c(\alg)$ to increase in $c'(\alg)$ is at least some constant $k_1$ times $\frac{c(\alg \setminus \sol)}{c'(\sol \setminus \alg)}$.

\item For each swap in the backward phase, the ratio of increase in $c(\alg)$ to decrease in $c'(\alg)$ is at most some constant $k_2$ times $\frac{c(\sol \setminus \alg)}{c'(\alg \setminus \sol)}$. 
\end{itemize} 
Before we discuss how to prove this claim, let us see why this claim implies that a forward phase/backward phase pair results in a net decrease in $c(\alg)$. If this claim holds, suppose we set the lower threshold for $c'(\alg)$ to be, say, $101 c'(\sol)$. That is, throughout the backward phase, we have $c'(\alg) > 101 c'(\sol)$. This lower threshold lets us rewrite our upper bound on the rate of exchange in the backward phase in terms of the lower bound on rate of exchange for the forward phase:

$$k_2 \frac{c(\sol \setminus \alg)}{c'(\alg \setminus \sol)} \leq k_2 \frac{c(\sol \setminus \alg)}{c'(\alg) - c'(\sol)} \leq k_2 \frac{c(\sol \setminus \alg)}{100 c'(\sol)} \leq  k_2 \frac{c(\sol \setminus \alg)}{100 c'(\sol \setminus \alg)} $$
$$\leq k_2 \frac{1}{4\Gamma} \frac{c(\alg \setminus \sol)}{100 c'(\sol \setminus \alg)} = \frac{k_2}{400\Gamma k_1} \cdot k_1\frac{c(\alg \setminus \sol)}{c'(\sol \setminus \alg)}.$$

In other words, the bound in the claim for the rate of exchange in the forward phase is larger than the bound for the backward phase by a multiplicative factor of $\Omega(1) \cdot \Gamma$. While these bounds depend on $\alg$ and thus will change with every swap, let us make the simplifying assumption that through one forward phase/backward phase pair these bounds remain constant. Then, the change in $c(\alg)$ in one phase is just the rate of exchange for that phase times the change in $c'(\alg)$, which by definition of the algorithm is roughly equal in absolute value for the forward and backward phase. So this implies that the decrease in $c(\alg)$ in the forward phase is $\Omega(1) \cdot \Gamma$ times the increase in $c(\alg)$ in the backward phase, i.e., the net change across the phases is a non-negligible decrease in $c(\alg)$ if $\Gamma$ is sufficiently large. Without the simplifying assumption, we can still show that the decrease in $c(\alg)$ in the forward phase is larger than the increase in $c(\alg)$ in the backward phase for large enough $\Gamma$ using a much more technical analysis. In particular, our analysis will show there is a net decrease as long as:
\begin{equation}\label{eq:maincond}
\min\left\{\frac{4\Gamma - 1}{8\Gamma} ,\frac{(4\Gamma - 1)(\sqrt{\Gamma}-1)(\sqrt{\Gamma}-1-\epsilon)\kappa}{16(1+\epsilon)\Gamma^2}\right\} - (e^{\zeta' (4\Gamma'+\kappa+1+\epsilon)} - 1) > 0,
\end{equation}
where 
$$\zeta' = \frac{4(1+\epsilon)\Gamma'}{(\sqrt{\Gamma'}-1)(\sqrt{\Gamma'}-1-\epsilon)(4\Gamma'-1)(4\Gamma-1)}.$$ 
Note that for any positive $\epsilon, \kappa, \Gamma'$, there exists a sufficiently large value of $\Gamma$ for \eqref{eq:maincond} to hold, since as $\Gamma \rightarrow \infty$, we have $\zeta' \rightarrow 0$, so that 
$$(e^{\zeta' (4\Gamma'+\kappa+1+\epsilon)} - 1) \rightarrow 0 \text{ and }$$
$$\min\left\{\frac{4\Gamma - 1}{8\Gamma} ,\frac{(4\Gamma - 1)(\sqrt{\Gamma}-1)(\sqrt{\Gamma}-1-\epsilon)\kappa}{16(1+\epsilon)\Gamma^2}\right\} \rightarrow \min\{1/2, \kappa/(4+4\epsilon)\}.$$ 
So, the same intuition holds: as long as we are willing to lose a large enough $\Gamma$ value, we can make the increase in $c(\alg)$ due to the backward phase negligible compared to the decrease in the forward phase, giving us a net decrease.

It remains to argue that the claimed bounds on rates of exchange hold. Let us argue the claim for $\Gamma = 4$, although the ideas easily generalize to other choices of $\Gamma$. We do this by generalizing the analysis giving Lemma~\ref{lemma:differenceapprox}. This analysis shows that if $\alg$ is a locally optimal solution, then 
$$c(\alg \setminus \sol) \leq 4\cdot c(\sol \setminus \alg),$$
i.e., $\alg$ is a 4-difference approximation of $\sol$. The contrapositive of this statement is that if $\alg$ is not a 4-difference approximation, then there is at least one swap that will improve it by some amount.  We modify the approach of~\cite{GGKMSSV17} by weakening the notion of locally optimal.  In particular, suppose we define a solution $\alg$ to be ``approximately'' locally optimal if at least half of the (weighted) swaps between paths $a$ in $\alg \setminus \sol$ and paths $f$ in $\sol \setminus \alg$ satisfy $c(a) \leq 2c(f)$ (as opposed to $c(a) \leq c(f)$ in a locally optimal solution; the choice of 2 for both constants here implies $\Gamma = 4$). Then a modification of the analysis of the local search algorithm, losing an additional factor of 4, shows that if $\alg$ is approximately locally optimal, then 
$$c(\alg \setminus \sol) \leq 16 \cdot c(\sol \setminus \alg).$$   The contrapositive of this statement, however, has a stronger consequence than before: if $\alg$ is not a 16-difference approximation, then a weighted half of the swaps satisfy $c(a) > 2c(f)$, i.e. reduce $c(\alg)$ by at least 
$$c(a) - c(f) > c(a) - c(a)/2 = c(a)/2.$$
The decrease in $c(\alg)$ due to all of these swaps together is at least $c(\alg \setminus \sol)$ times some constant. In addition, since a swap between $a$ and $f$  increases $c'(\alg)$ by at most $c'(f)$, the total increase in $c'$ due to these swaps is at most $c'(\sol \setminus \alg)$ times some other constant. An averaging argument then gives the rate of exchange bound for the forward phase in the claim, as desired. The rate of exchange bound for the backward phase follows analogously.

This completes the algorithm and proof summary, although more detail is needed to formalize these arguments. Moreover, we also need to show that the algorithm runs in polynomial time. These details are given in Section~\ref{section:differenceapprox2}.

We now formally define our separation oracle \textsc{RRST-Oracle} in Fig.~\ref{fig:storacle2} and prove that it is an approximate separation oracle in the lemma below:

\begin{figure}
\noindent
\fbox{\begin{minipage}{.98\textwidth}
\textsc{RRST-Oracle}($G(V, E)$, $\{[\ell_e, u_e]\}_{e \in E}$, $(\bx, r)$)\newline
\begin{algorithm}[H]
\KwData{Undirected graph $G(V, E)$, lower and upper bounds on edge lengths $\{[\ell_e, u_e]\}_{e \in E}$, solution ($\bx = \{x_e\}_{e \in E}, r$) to the LP in Fig.~\ref{fig:rrst}}
Check all constraints of the LP in Fig.~\ref{fig:rrst} except regret constraint set, \textbf{return} any violated constraint that is found\;
$G' \leftarrow $ copy of $G$ where $c_e = u_ex_e - \ell_ex_e + \ell_e$, $c_e' = \ell_e - \ell_e x_e$\;
$\alg \leftarrow $ output of algorithm from Lemma~\ref{lemma:differenceapprox2} on $G'$\;
\For{$\alg_i \in \alg$}{
\If{$\sum_{e \notin \alg_i} u_ex_e + \sum_{e \in \alg_i} \ell_ex_e - \sum_{e \in \alg_i} \ell_e > r$}{
\textbf{return} $\sum_{e \notin \alg_i} u_ex_e + \sum_{e \in \alg_i} \ell_ex_e - \sum_{e \in \alg_i} \ell_e \leq r$\;
}
}
\textbf{return} ``Feasible''\;
\end{algorithm}
\end{minipage}}
\caption{Separation Oracle for LP in Fig.~\ref{fig:rrst}}
\label{fig:storacle2}
\end{figure}

\begin{lemma}\label{lemma:steinersep2}
Fix any $\Gamma' > 1, \kappa > 0, 0 < \epsilon < 4/35$ and let $\Gamma$ be the constant given in Lemma~\ref{lemma:differenceapprox2}. Let $\alpha = (4\Gamma'+\kappa+2+\epsilon)4\Gamma+1$ and $\beta = 4\Gamma$. Then there exists an algorithm \textsc{RRST-Oracle} that given a solution $(\bx, r)$ to the LP in Fig.~\ref{fig:rrst} either:
\begin{itemize}
\item Outputs a separating hyperplane for the LP in Fig.~\ref{fig:rrst}, or
\item Outputs ``Feasible'', in which case $\bx$ is feasible for the (non-robust) Steiner tree LP and 
$$\forall \bd: \sum_{e \in E} d_e x_e \leq \alpha \cdot \opt (\bd) + \beta \cdot r.$$
\end{itemize}
\end{lemma}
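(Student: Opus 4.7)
The plan follows the template of Lemma~\ref{lemma:steinersep1}, generalized to handle the additive offset $c'(\sol)$ that was identically zero in the special case $\ell_e = 0$. Since line~1 of \textsc{RRST-Oracle} exactly checks every LP constraint outside the regret constraint set, it suffices to prove the contrapositive: if some $\bd$ satisfies $\sum_e d_e x_e > \alpha\cdot\opt(\bd) + \beta r$, then the condition tested on line~5 holds for at least one $\alg_i$ in the collection. The first step, identical in form to the reduction used in Lemma~\ref{lemma:steinersep1}, is to let $\sol$ be the optimal Steiner tree under $\bd$, replace $\bd$ by the adversarial realization $\bd^*$ with $d_e^* = u_e$ for $e\notin\sol$ and $d_e^* = \ell_e$ for $e\in\sol$ (which only increases the regret $\sum_e d_e x_e - \sol(\bd)$, while $\sol(\bd^*)\leq\sol(\bd)$), and rewrite the hypothesis as
\[ R_\sol \;>\; (\alpha-1)\,O + \beta\, r, \]
where $O := \sum_{e\in\sol}\ell_e$ and $R_\sol := \sum_{e\notin\sol}u_e x_e - c'(\sol)$.

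Next, let $\alg_i$ be the element of the collection guaranteed by Lemma~\ref{lemma:differenceapprox2} (applied to this $\sol$) to satisfy $c(\alg_i\setminus\sol)\leq 4\Gamma\cdot c(\sol\setminus\alg_i)$ and $c'(\alg_i)\leq \gamma_1\cdot c'(\sol)$, where $\gamma_1 := 4\Gamma'+\kappa+1+\epsilon$. Since the oracle iterates over every element of the (polynomially sized) collection, it suffices to show that the line-5 condition holds for this specific $\alg_i$. Partition $E$ into $A := \alg_i\cap\sol$, $B := \alg_i\setminus\sol$, $C := \sol\setminus\alg_i$, $D := E\setminus(\alg_i\cup\sol)$, and write $U_X := \sum_{e\in X}u_e x_e$ for $X\in\{B,C,D\}$. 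A direct rewrite gives $R_\sol = U_B + U_D - c'(\sol)$ and $R_{\alg_i} = U_C + U_D - c'(\alg_i)$, so it is equivalent to show $R_{\alg_i} > r$.

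The core calculation combines the two bullets of Lemma~\ref{lemma:differenceapprox2}. Using the identity $c_e = u_e x_e + c'_e$ together with $c'(C)\leq c'(\sol)\leq O$, the difference approximation yields $U_B \leq c(B) \leq 4\Gamma\cdot c(C) = 4\Gamma(U_C + c'(C)) \leq 4\Gamma U_C + 4\Gamma O$. Substituting into $U_B + U_D > (\alpha-1)O + \beta r$ (which follows from the displayed inequality above and $c'(\sol)\geq 0$) and plugging in the defining identities $\alpha - 1 = 4\Gamma(\gamma_1+1)$ and $\beta = 4\Gamma$ gives $4\Gamma U_C + U_D > 4\Gamma(\gamma_1 O + r)$; since $4\Gamma \geq 1$ and $U_D\geq 0$, this simplifies to $U_C + U_D > \gamma_1 O + r$. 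The $c'$-approximation then supplies $c'(\alg_i)\leq \gamma_1\, c'(\sol) \leq \gamma_1 O$, so $R_{\alg_i}= U_C + U_D - c'(\alg_i) > r$, as required.

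The main obstacle, and the reason both bullets of Lemma~\ref{lemma:differenceapprox2} are needed, is the interaction between the $-c'(\alg_i)$ and $-c'(\sol)$ offsets in $R_{\alg_i}$ and $R_\sol$. In the zero-lower-bound case both offsets vanish and the difference approximation alone turns $R_\sol > \beta r$ into $R_{\alg_i} > r$; in the general case the $c'$-approximation is needed to tame the $\alg_i$-side offset, and the slack lost in this conversion is precisely what dictates the (large) value of $\alpha$ in the statement.
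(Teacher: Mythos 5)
Your proof is correct and follows essentially the same strategy as the paper: pass to the adversarial realization $\bd^*$, invoke Lemma~\ref{lemma:differenceapprox2} to get the difference approximation on $c$ and the multiplicative approximation on $c'$, and then run the algebra on the four-way edge partition. The only differences are notational (your $U_B,U_C,U_D$ bookkeeping and the cleaner identities $c_e=u_ex_e+c'_e$ and $\alpha-1=4\Gamma(\gamma_1+1)$ make the same calculation a bit more transparent than the paper's ``isolate $c(E_T)$'' presentation).
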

\begin{proof}
It suffices to show that if there exists $\bd, \sol$ such that 
$$\sum_{e \in E} d_e x_e > \alpha \cdot \sol(\bd) + \beta \cdot r, \text { i.e., } \sum_{e \in E} d_ex_e - \alpha \cdot \sol(\bd) > \beta \cdot r$$ 
then \textsc{RRST-Oracle} outputs a violated inequality on line 6, i.e., the algorithm finds a Steiner tree $T'$ such that 
$$\sum_{e \notin T'} u_ex_e + \sum_{e \in T'} \ell_ex_e - \sum_{e \in T'} \ell_e > r.$$ 
Notice that in the inequality 
$$\sum_{e \in E} d_ex_e - \alpha \cdot \sol(\bd) > \beta \cdot r,$$ 
replacing $\bd$ with $\bd'$ where $d_e' = \ell_e$ when $e \in \sol$ and $d_e' = u_e$ when $e \notin \sol$ can only increase the left hand side. So replacing $\bd$ with $\bd'$ and rearranging terms, we have
$$\sum_{e \in \sol} \ell_ex_e + \sum_{e \notin \sol} u_ex_e > \alpha \sum_{e \in \sol} \ell_e + \beta \cdot r = \sum_{e \in \sol} \ell_e + \left[(\alpha-1) \sum_{e \in \sol} \ell_e + \beta \cdot r\right].$$
In particular, the regret of the fractional solution against $\sol$ is at least $(\alpha-1) \sum_{e \in \sol} \ell_e + \beta \cdot r$. 

Let $T'$ be the Steiner tree satisfying the conditions of Lemma~\ref{lemma:differenceapprox2} with $c_e = u_ex_e - \ell_ex_e + \ell_e$ and $c_e' = \ell_e - \ell_e x_e$. Let $E_0$ = $E \setminus (\sol \cup T')$, $E_S = \sol \setminus T'$, and $E_T = T' \setminus \sol$. Let $c(E')$ for $E' = E_0, E_S, E_T$ denote $\sum_{e \in E'} (u_ex_e - \ell_ex_e + \ell_e)$, i.e., the total weight of the edges $E'$ in $G'$. Now, note that the regret of the fractional solution against a solution using edges $E'$ is:

\begin{align*}
\sum_{e \notin E'} u_ex_e + \sum_{e \in E'} \ell_ex_e - \sum_{e \in E'} \ell_e &= \sum_{e \notin E'} (u_ex_e - \ell_ex_e + \ell_e) - \sum_{e \in E}(\ell_e - \ell_ex_e) \\
&= c(E \setminus E') - \sum_{e \in E}(\ell_e - \ell_ex_e).
\end{align*}

Plugging in $E' = \sol$, we then get that:

$$c(E_0) + c(E_T) - \sum_{e \in E}(\ell_e - \ell_ex_e) > (\alpha - 1) \sum_{e \in \sol} \ell_e + \beta \cdot r.$$

Isolating $c(E_T)$ then gives:

\begin{align*}
c(E_T) &> (\alpha - 1) \sum_{e \in \sol} \ell_e + \beta \cdot r - \sum_{e \in E_0} (u_ex_e  - \ell_e x_e + \ell_e) + \sum_{e \in E}(\ell_e - \ell_ex_e) \\
&=(\alpha - 1) \sum_{e \in \sol} \ell_e + \beta \cdot r - \sum_{e \in E_0} u_ex_e + \sum_{e \notin E_0}(\ell_e - \ell_ex_e).
\end{align*}

Since $\beta = 4\Gamma$, Lemma~\ref{lemma:differenceapprox2} along with an appropriate choice of $\epsilon$ gives that $c(E_{T}) \leq \beta c(E_{S})$, and thus:

$$c(E_S) > \frac{1}{\beta} \left[ (\alpha - 1) \sum_{e \in \sol} \ell_e + \beta \cdot r - \sum_{e \in E_0} u_ex_e + \sum_{e \notin E_0}(\ell_e - \ell_ex_e) \right].$$

Recall that our goal is to show that $c(E_0) + c(E_S) - \sum_{e \in E} (\ell_e - \ell_e x_e) > r$, i.e., that the regret of the fractional solution against $T'$ is at least $r$. Adding $c(E_0)- \sum_{e \in E} (\ell_e - \ell_e x_e)$ to both sides of the previous inequality, we can lower bound $c(E_0) + c(E_S) - \sum_{e \in E} (\ell_e - \ell_e x_e)$ as follows:

\begin{align*}
&c(E_0) + c(E_S) - \sum_{e \in E} (\ell_e - \ell_e x_e) \\
&> \frac{1}{\beta} \left[ (\alpha - 1) \sum_{e \in \sol} \ell_e + \beta \cdot r - \sum_{e \in E_0} u_ex_e + \sum_{e \notin E_0}(\ell_e - \ell_ex_e) \right]\\
&\qquad +\sum_{e \in E_0} (u_e x_e  - \ell_e x_e + \ell_e) - \sum_{e \in E} (\ell_e - \ell_e x_e)\\
&= r + \frac{\alpha-1}{\beta} \sum_{e \in \sol} \ell_e + \frac{1}{\beta}\sum_{e \notin E_0}(\ell_e - \ell_ex_e) +\frac{\beta - 1}{\beta} \sum_{e \in E_0} u_e x_e - \sum_{e \notin E_0} (\ell_e - \ell_e x_e)\\
&\geq r + \frac{\alpha-1-\beta}{\beta} \sum_{e \in \sol} \ell_e + \frac{1}{\beta}\sum_{e \notin E_0}(\ell_e - \ell_ex_e) +\frac{\beta - 1}{\beta} \sum_{e \in E_0} u_e x_e - \sum_{e \in E_T} (\ell_e - \ell_e x_e) \geq r.
\end{align*}
%
Here, the last inequality holds because by our setting of $\alpha$, we have 
$$\frac{\alpha - 1 - \beta}{\beta} = 4\Gamma' + \kappa + 1 + \epsilon,$$ 
and thus Lemma~\ref{lemma:differenceapprox2} gives that $$\sum_{e \in E_T} (\ell_e - \ell_e x_e) \leq \frac{\alpha - 1 - \beta}{\beta}\sum_{e \in \sol} (\ell_e - \ell_e x_e) \leq \frac{\alpha - 1 - \beta}{\beta}\sum_{e \in \sol} \ell_e.$$
\end{proof}

By using Lemma~\ref{lemma:steinersep2} with the ellipsoid method and the fact that the LP optimum is at most $\mr$, we get an $(\alpha, \beta)$-robust fractional solution. Then, Theorem~\ref{thm:generalrounding} and known approximation/integrality gap results give us the following theorem, which with appropriate choice of constants gives Theorem~\ref{thm:steiner-main}:

\begin{theorem}\label{thm:steiner}
Fix any $\Gamma' > 1, \kappa > 0, 0 < \epsilon < 4/35$ and let $\Gamma$ be the constant given in Lemma~\ref{lemma:differenceapprox2}. Let $\alpha = (4\Gamma'+\kappa+2+\epsilon)4\Gamma+1$ and $\beta = 4\Gamma$. Then there exists a polynomial-time $(2\alpha \ln 4 + \epsilon, 2 \beta \ln 4 + \ln 4+\epsilon)$-robust algorithm for the Steiner tree problem.
\end{theorem}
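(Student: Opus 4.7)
The plan is to assemble Theorem~\ref{thm:steiner} from three pieces already in hand: the approximate separation oracle \textsc{RRST-Oracle} of Lemma~\ref{lemma:steinersep2}, the ellipsoid method, and the generic rounding result Theorem~\ref{thm:generalrounding}. At this stage the heavy lifting has already been done in Lemmas~\ref{lemma:differenceapprox2} and~\ref{lemma:steinersep2}; the remaining work is a bookkeeping composition, together with a verification that the polynomial runtime and constants line up.

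First, I would run the ellipsoid method on the LP of Fig.~\ref{fig:rrst} using \textsc{RRST-Oracle} as the separation oracle. The pair $(\chi_{\mrs},\mr)$ is feasible for this LP, so its optimum is at most $\mr$. Whenever the ellipsoid method queries a candidate $(\bx,r)$, the oracle either returns a genuine separating hyperplane (so the ellipsoid shrinks) or declares ``Feasible'', in which case Lemma~\ref{lemma:steinersep2} certifies that $\bx$ is in the (non-robust) Steiner tree polytope and that $\sum_e d_e x_e \le \alpha\cdot \opt(\bd)+\beta\cdot r$ for every $\bd$. The standard ellipsoid-with-approximate-oracle analysis then yields, in polynomial time, a fractional $\bx^*$ with witness $r^*\le\mr$ that is $(\alpha,\beta)$-robust in the sense required by Theorem~\ref{thm:generalrounding}.

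Next, I would feed $\bx^*$ into Theorem~\ref{thm:generalrounding}, instantiated for the Steiner tree problem with $\gamma=\ln 4+\epsilon'$ (the Byrka et al.\ approximation ratio~\cite{ByrkaGRS10}) and $\delta=2$ (the known bound on the integrality gap of the Steiner tree cut LP~\cite{Vazirani01}). The theorem outputs an integer Steiner tree that is $\bigl(\gamma\delta\alpha,\,\gamma\delta\beta+\gamma\bigr)$-robust, i.e., $\bigl(2(\ln 4+\epsilon')\alpha,\, 2(\ln 4+\epsilon')\beta+(\ln 4+\epsilon')\bigr)$-robust. Picking $\epsilon'\le \min\{\epsilon/(2\alpha),\,\epsilon/(2\beta+1)\}$ (which is permitted because $\alpha,\beta$ are fixed constants once $\Gamma',\kappa,\epsilon,\Gamma$ are) absorbs the $\epsilon'$ slack into the additive $\epsilon$ of the theorem statement, producing exactly the claimed $(2\alpha\ln 4+\epsilon,\, 2\beta\ln 4+\ln 4+\epsilon)$-robust algorithm.

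I do not expect a real obstacle at this step — all the genuine difficulty sits upstream in Lemma~\ref{lemma:differenceapprox2} (the simultaneous difference-approximation via alternating local search) and its use in Lemma~\ref{lemma:steinersep2}. The only items that need a small amount of care are: (i) verifying polynomial runtime end-to-end, which follows because \textsc{RRST-Oracle} enumerates only the polynomially many candidate trees produced by Lemma~\ref{lemma:differenceapprox2} and checks each in polynomial time, and the rounding step is a single invocation of a polynomial-time Steiner tree approximation; and (ii) checking the arithmetic of the constant composition, which reduces to the straightforward substitution $\gamma\delta=2\ln 4+2\epsilon'$ above together with the choice of $\epsilon'$ just indicated. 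With these in place, the chain ellipsoid $\to$ fractional $(\alpha,\beta)$-robust $\to$ Theorem~\ref{thm:generalrounding} $\to$ integer $(2\alpha\ln 4+\epsilon,\,2\beta\ln 4+\ln 4+\epsilon)$-robust completes the proof, and Theorem~\ref{thm:steiner-main} follows by plugging in concrete values of $\Gamma',\kappa,\epsilon$.
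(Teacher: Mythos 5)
Your proof is correct and takes essentially the same route as the paper: run ellipsoid with \textsc{RRST-Oracle} (Lemma~\ref{lemma:steinersep2}) to get an $(\alpha,\beta)$-robust fractional solution (since the LP optimum is at most $\mr$), then apply Theorem~\ref{thm:generalrounding} with $\gamma=\ln 4+\epsilon'$ and $\delta=2$, rescaling the error parameter to match the stated constants. Your write-up is simply a more explicit spelling-out of the same composition.
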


\begin{proof}[Proof of Theorem~\ref{thm:steiner}]
By using the ellipsoid method with Lemma~\ref{lemma:steinersep2} we can compute a feasible $(\alpha, \beta)$-robust fractional solution to the Steiner tree LP (as the robust Steiner tree LP has optimum at most $\mr$). Then, the theorem follows from Theorem~\ref{thm:generalrounding}, and the fact that the polytope in Fig.~\ref{fig:rrst} has integrality gap $\delta = 2$ and there is a $\gamma = (\ln 4 + \epsilon)$-approximation for the Steiner tree problem due to \cite{ByrkaGRS10} (The error parameters can be rescaled appropriately to get the approximation guarantee in the theorem statement).
\end{proof}

Optimizing for $\alpha$ in Theorem~\ref{thm:steiner} subject to the constraints in \eqref{eq:maincond}, we get that for a fixed (small) $\epsilon$, $\alpha$ is minimized by setting $\Gamma \approx 9.284 + f_1(\epsilon), \Gamma' \approx 5.621 + f_2(\epsilon), \kappa \approx 2.241 + f_3(\epsilon)$ (for monotonically increasing $f_1, f_2, f_3$ which approach $0$ as $\epsilon$ approaches 0). Plugging in these values gives Theorem~\ref{thm:steiner-main}.

\subsection{Proof of Lemma~\ref{lemma:differenceapprox2}}\label{section:differenceapprox2}

We will again use local search to find moves that are improving with respect to $c$. However, now our goal is to show that we can do this without blowing up the cost with respect to $c'$. We can start to show this via the following lemma, which generalizes the arguments in Section~\ref{section:localsearch}. Informally, it says that as long as a significant fraction $(1/\theta)$ of the swaps (rather than all the swaps) that the local search algorithm can make between its solution $A$ and an adversarial solution $F$ do not improve its objective by some factor $\lambda$ (rather than by any amount at all), $A$'s cost can still be bounded by $4 \lambda \theta$ times $F$'s cost.

From Lemma~\ref{lemma:approxlocal} to Lemma~\ref{lemma:apxsp} we will refer to the cost functions on edges by $w, w'$ instead of $c, c'$. This is because these lemmas are agnostic to the cost functions they are applied to and will be applied with both $w = c, w' = c'$ and $w = c', w' = c$ in our algorithm. We also define $\mathcal{A}, \mathcal{F}, \alpha, N(\cdot), N^{-1}(\cdot)$ as in the proof of Theorem~\ref{thm:localsearch} for these lemmas.

\begin{lemma}\label{lemma:approxlocal}
Let $A$ and $F$ be solutions to an instance of Steiner tree with edge costs $w$ such that if all edges in $A \cap F$ have their costs set to 0, then for $\lambda \geq 1, \theta \geq 1$, we have

$$\sum_{a \in \mathcal{A}, f \in N(a): w(a) \leq \lambda w(f)} \alpha(a, f)w(a) \geq \frac{1}{\theta} \sum_{a \in \mathcal{A}, f \in N(a)} \alpha(a, f)w(a).$$

Then $w(A \setminus F) \leq 4\lambda\theta w(F \setminus A)$.
\end{lemma}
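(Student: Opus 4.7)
The plan is to mimic the proof of Theorem~\ref{thm:localsearch} with two modifications: first, observe that the Hall-type weighting $\alpha$ survives the weakening of the hypothesis because its existence only uses the combinatorial condition $|N(X)| \geq |X|/2$ for all $X\subseteq \mathcal{A}$, which is shown in the proof of Theorem~\ref{thm:localsearch} by a purely structural argument about the subpath decomposition (no use of local optimality). Thus, after decomposing $A$ and $F$ into the subpath families $\mathcal{A}$ and $\mathcal{F}$, we still obtain a weight function $\alpha\colon \mathcal{A}\times\mathcal{F}\to\mathbb{R}^+$ satisfying properties (1)--(3) of that proof.

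Next, I would exploit the fact that the hypothesis allows us to set the costs of edges in $A\cap F$ to $0$. Since every edge of $A$ lies in exactly one subpath of $\mathcal{A}$, under the modified costs we have $\sum_{a\in\mathcal{A}} w(a)=w(A\setminus F)$; similarly, since every edge of $F$ lies in at most two subpaths of $\mathcal{F}$, we have $\sum_{f\in\mathcal{F}} w(f)\leq 2\,w(F\setminus A)$. Using property (2) of $\alpha$, we rewrite
$$w(A\setminus F)=\sum_{a\in\mathcal{A}} w(a)=\sum_{a\in\mathcal{A}}\sum_{f\in N(a)}\alpha(a,f)\,w(a).$$

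The new ingredient is to split this double sum according to whether $w(a)\leq \lambda w(f)$ or not. By hypothesis, the ``good'' pairs account for at least a $1/\theta$ fraction of the total, so
$$\sum_{a,f\in N(a)}\alpha(a,f)\,w(a)\;\leq\;\theta\!\!\sum_{\substack{a,f\in N(a)\\ w(a)\leq\lambda w(f)}}\!\!\alpha(a,f)\,w(a).$$
On the good pairs we replace $w(a)$ by $\lambda w(f)$ and then apply property (3) ($\sum_{a\in N^{-1}(f)}\alpha(a,f)\leq 2$) together with the edge-multiplicity bound $\sum_{f\in\mathcal{F}} w(f)\leq 2w(F\setminus A)$ to conclude
$$\sum_{\substack{a,f\in N(a)\\ w(a)\leq\lambda w(f)}}\alpha(a,f)\,w(a)\;\leq\;\lambda\sum_{f\in\mathcal{F}} w(f)\sum_{a\in N^{-1}(f)}\alpha(a,f)\;\leq\;4\lambda\,w(F\setminus A).$$
Combining the two inequalities yields $w(A\setminus F)\leq 4\lambda\theta\,w(F\setminus A)$, as required.

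I do not expect a serious obstacle: the lemma is a quantitative relaxation of Theorem~\ref{thm:localsearch}, and once one notices that the Hall-type weighting is cost-agnostic, the only real step is the good/bad split of the double sum controlled by the $1/\theta$ hypothesis. The one subtlety to be careful about is bookkeeping with the zeroed-out $A\cap F$ costs, to make sure the sums really collapse to $w(A\setminus F)$ and $w(F\setminus A)$ under the original cost function.
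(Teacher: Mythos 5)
Your proof is correct and follows essentially the same approach as the paper's: both reuse the subpath decomposition and the Hall-type weighting $\alpha$ from the proof of Theorem~\ref{thm:localsearch} (which, as you observe, is purely combinatorial and independent of local optimality), zero out $A\cap F$ costs so the subpath sums collapse to $w(A\setminus F)$ and $w(F\setminus A)$, split the double sum into good and bad pairs via the $1/\theta$ hypothesis, and use $w(a)\le\lambda w(f)$ on the good pairs together with property (3) and the factor-2 edge multiplicity in $\mathcal{F}$. The only cosmetic difference is the order in which you apply the $1/\theta$ bound and swap the summation index, which does not affect the argument.
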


\begin{proof}

This follows by generalizing the argument in the proof of Theorem~\ref{thm:localsearch}.  After setting costs of edges in $A \cap F$ to 0, note that $w(A) = w(A \setminus F)$ and $w(F) = w(F \setminus A)$. Then:

\begin{align*}
w(A \setminus F) &= \sum_{a \in \mathcal{A}} w(a)\\
&\leq \sum_{a \in \mathcal{A}} \sum_{f \in N(a)} \alpha(a, f) w(a) \\
&= \sum_{f \in F} \sum_{a \in N^{-1}(f)} \alpha(a, f) w(a) \\
&\leq \theta \sum_{f \in F} \sum_{a \in N^{-1}(f): w(a) \leq \lambda w(f)} \alpha(a, f) w(a) \\
&\leq \lambda\theta \sum_{f \in F} \sum_{a \in N^{-1}(f): w(a) \leq \lambda w(f)} \alpha(a, f) w(f)\\ &\leq \lambda\theta \sum_{f \in F} \sum_{a \in N^{-1}(f)} \alpha(a, f) w(f) \leq 4\lambda\theta w(F \setminus A).
\end{align*}
\end{proof}

\begin{corollary}\label{cor:approxlocal}
Let $A$, $F$ be solutions to an instance of Steiner tree with edge costs $w$ such that for parameters $\lambda \geq 1, \theta \geq 1$, $w(A \setminus F) > 4\lambda\theta w(F \setminus A)$. Then after setting the cost of all edges in $A \cap F$ to 0, $$\sum_{a \in \mathcal{A}, f \in N(a): w(a) > \lambda w(f)} \alpha(a, f)w(a) > \frac{\theta - 1}{\theta} \sum_{a \in \mathcal{A}, f \in N(a)} \alpha(a, f)w(a).$$
\end{corollary}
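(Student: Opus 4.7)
The plan is to observe that Corollary~\ref{cor:approxlocal} is essentially just the contrapositive of Lemma~\ref{lemma:approxlocal}, combined with the trivial fact that the sum $\sum_{a \in \mathcal{A}, f \in N(a)} \alpha(a,f) w(a)$ decomposes as the sum over pairs with $w(a) \leq \lambda w(f)$ plus the sum over pairs with $w(a) > \lambda w(f)$.

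Concretely, I would proceed as follows. First, set the cost of all edges in $A \cap F$ to $0$ as in the statement (this matches the normalization used in Lemma~\ref{lemma:approxlocal}). Second, form the contrapositive of Lemma~\ref{lemma:approxlocal}: if $w(A \setminus F) > 4\lambda \theta\, w(F \setminus A)$, then it cannot be the case that
\[
\sum_{a \in \mathcal{A}, f \in N(a):\, w(a) \leq \lambda w(f)} \alpha(a,f) w(a) \;\geq\; \frac{1}{\theta}\sum_{a \in \mathcal{A}, f \in N(a)} \alpha(a,f) w(a),
\]
so the reverse strict inequality must hold. Third, use the partition
\[
\sum_{a,f} \alpha(a,f) w(a) \;=\; \sum_{w(a) \leq \lambda w(f)} \alpha(a,f) w(a) \;+\; \sum_{w(a) > \lambda w(f)} \alpha(a,f) w(a),
\]
to conclude that
\[
\sum_{w(a) > \lambda w(f)} \alpha(a,f) w(a) \;>\; \left(1 - \tfrac{1}{\theta}\right)\sum_{a,f} \alpha(a,f) w(a) \;=\; \tfrac{\theta - 1}{\theta}\sum_{a,f} \alpha(a,f) w(a),
\]
which is exactly the claimed inequality.

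There is no real obstacle here: the corollary is a one-line rearrangement and does not require any new combinatorial argument beyond what was already established in Lemma~\ref{lemma:approxlocal}. The only subtlety worth mentioning explicitly is that the zeroing of $A \cap F$ edges preserves the quantities $w(A \setminus F)$ and $w(F \setminus A)$ that appear in the hypothesis, so there is no conflict between the hypothesis (stated in terms of the original $w$) and the conclusion (stated after zeroing), since both $w(A \setminus F)$ and $w(F \setminus A)$ are unaffected by changes on $A \cap F$.
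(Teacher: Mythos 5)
Your proposal is correct and is precisely the intended argument; the paper leaves the corollary unproved, treating it as an immediate consequence of Lemma~\ref{lemma:approxlocal}, and your one-line contrapositive plus the decomposition of $\sum_{a,f}\alpha(a,f)w(a)$ into the two sub-sums is exactly what that omission presupposes. The remark that zeroing the costs on $A\cap F$ leaves $w(A\setminus F)$ and $w(F\setminus A)$ unchanged is a worthwhile clarification of a point the paper glosses over.
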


The corollary effectively tells us that if $w(A \setminus F)$ is sufficiently larger than $w(F \setminus A)$, then there are many local swaps between $S$ in $A$ and $f$ in $F$ that decrease $w(A)$ by a large fraction of $w(a)$. The next lemma then shows that one of these swaps also does not increase $w'(A)$ by a large factor (even if instead of swapping in $f$, we swap in an approximation of $f$), and reduces $w(A)$ by a non-negligible amount.

\begin{lemma}\label{lemma:greedyswap}
Let $A$ and $F$ be solutions to an instance of Steiner tree with two sets of edge costs, $w$ and $w'$, such that for parameter $\Gamma > 1$, $w(A \setminus F) > 4 \Gamma \cdot w(F \setminus A)$. Fix any $0 < \epsilon < \sqrt{\Gamma} - 1$. Then there exists a swap between $a \in \mathcal{A}$ and a path $f$ between two vertices in $A$ such that $\frac{(1+\epsilon)w'(f) - w'(a)}{w(a) - (1+\epsilon)w(f)} \leq \frac{4(1+\epsilon)\Gamma}{(\sqrt{\Gamma}-1)(\sqrt{\Gamma}-1-\epsilon)} \cdot \frac{w'(F \setminus A)}{w(A \setminus F)}$ and $w(a) - (1+\epsilon)w(f) \geq \frac{1}{n^2} w(A \setminus F)$.
\end{lemma}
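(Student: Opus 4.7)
The plan is to invoke Corollary~\ref{cor:approxlocal} with the parameter choice $\lambda = \theta = \sqrt{\Gamma}$, so that $4\lambda\theta = 4\Gamma$ exactly matches the hypothesis $w(A \setminus F) > 4\Gamma \cdot w(F \setminus A)$. I would first zero the $w$-costs of edges in $A \cap F$ (so that $w(A) = w(A \setminus F)$ and $w(F) = w(F \setminus A)$), and then apply the corollary to obtain the Hall-matching weights $\alpha(\cdot,\cdot)$ from Theorem~\ref{thm:localsearch} together with the guarantee that the ``good'' pairs $(a,f)$ with $w(a) > \sqrt{\Gamma}\, w(f)$ account for at least a $(\sqrt{\Gamma}-1)/\sqrt{\Gamma}$ fraction of $\sum_{a,f}\alpha(a,f)w(a) = w(A \setminus F)$.

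Next, I would run a two-sided averaging argument over the good pairs. For the denominator in the claimed ratio, every good pair satisfies
$$w(a) - (1+\epsilon)w(f) > w(a)\bigl(1 - (1+\epsilon)/\sqrt{\Gamma}\bigr) = \tfrac{\sqrt{\Gamma}-1-\epsilon}{\sqrt{\Gamma}}\, w(a) > 0,$$
using $\epsilon < \sqrt{\Gamma}-1$, so summing yields the lower bound $\tfrac{(\sqrt{\Gamma}-1)(\sqrt{\Gamma}-1-\epsilon)}{\Gamma}\, w(A \setminus F)$. For the numerator, I would drop the nonnegative $w'(a)$ term and combine $\sum_{a \in N^{-1}(f)}\alpha(a,f) \le 2$ with the fact that the $\mathcal{F}$-decomposition of $F$ covers each edge at most twice. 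The naive version of this gives an upper bound of $4(1+\epsilon)w'(F)$; tightening $w'(F)$ to $w'(F \setminus A)$ requires also zeroing $w'$ on $A \cap F$ and handling the resulting trivial $\mathcal{F}$-paths (single edges of $A \cap F$) by matching them against the $\mathcal{A}$-path through the same edge in the Hall construction, so their $w'$-mass is absorbed into the discarded $w'(a)$ terms. Dividing the upper bound by the lower bound and averaging then yields a good pair whose ratio is at most the target $\tfrac{4(1+\epsilon)\Gamma}{(\sqrt{\Gamma}-1)(\sqrt{\Gamma}-1-\epsilon)} \cdot \tfrac{w'(F \setminus A)}{w(A \setminus F)}$.

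For the additional requirement $w(a) - (1+\epsilon)w(f) \geq w(A \setminus F)/n^2$, I would restrict the averaging to good pairs that satisfy it. Since there are only $|\mathcal{A}| \cdot |\mathcal{F}| = O(n^2)$ candidate pairs and each $\alpha$-value is $O(1)$, the total denominator mass contributed by discarded pairs is at most $O(w(A \setminus F))$, which is dominated by the $\Omega_{\Gamma}(w(A \setminus F))$ lower bound established above, so the averaging still delivers a good pair satisfying both conditions simultaneously. The main technical obstacle is the $w'(F) \to w'(F \setminus A)$ refinement mentioned in the previous paragraph, which requires the more careful Hall-matching argument; all the remaining steps are routine two-sided pigeonhole.
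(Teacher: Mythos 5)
Your proposal takes the same overall route as the paper: invoke Corollary~\ref{cor:approxlocal} with $\lambda=\theta=\sqrt{\Gamma}$, handle $A\cap F$ edges via the Hall matching $\alpha$ so that $w'(F)$ can be replaced by $w'(F\setminus A)$, upper-bound the numerator by $4(1+\epsilon)w'(F\setminus A)$, and lower-bound the denominator by $\tfrac{(\sqrt{\Gamma}-1)(\sqrt{\Gamma}-1-\epsilon)}{\Gamma}w(A\setminus F)$. Those pieces are all correct and match the paper. Two comments, the second of which is a genuine gap.

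First, a minor imprecision: you identify the $w'(F)\to w'(F\setminus A)$ refinement as ``the main technical obstacle'' and say the $w'$-mass of trivial $\mathcal{F}$-paths is ``absorbed into the discarded $w'(a)$ terms.'' That is not quite the right mechanism. The clean reason (which the paper uses) is that for each edge $e\in A\cap F$, the only admissible swap partner is $e$ itself, so one may set $\alpha(e,e)=1$ and zero out all other $\alpha$-entries involving $e$; but the pair $(e,e)$ fails the strict inequality $w(a)>\sqrt{\Gamma}w(f)$ and therefore never appears in either sum, so zeroing $w$ and $w'$ on $A\cap F$ does not affect the averaged quantity at all. There is no absorption into discarded terms — these pairs simply never enter the computation.

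Second, and more seriously, the last step is wrong as stated. To restrict to pairs with $w(a)-(1+\epsilon)w(f)\geq\tfrac{1}{n^2}w(A\setminus F)$, you bound the denominator mass lost to discarded pairs by $|\mathcal{A}|\cdot|\mathcal{F}|\cdot O(1)\cdot\tfrac{1}{n^2}w(A\setminus F)=O(w(A\setminus F))$ and then assert this is ``dominated by the $\Omega_{\Gamma}(w(A\setminus F))$ lower bound.'' But an $O(w(A\setminus F))$ upper bound is the \emph{same order} as the $\Omega(w(A\setminus F))$ lower bound — neither dominates the other — so the argument does not close. You are overcounting by a factor of $n$: the correct bound is not ``number of pairs times maximum $\alpha$-value'' but rather $\sum_{a\in\mathcal{A}}\sum_{f}\alpha(a,f)=|\mathcal{A}|\leq n$, since for each $a$ the $\alpha$-weights sum to exactly one. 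This gives discarded mass at most $n\cdot\tfrac{1}{n^2}w(A\setminus F)=\tfrac{1}{n}w(A\setminus F)=o(w(A\setminus F))$, which \emph{is} negligible against the $\Theta_\Gamma(w(A\setminus F))$ lower bound for large $n$, and this is exactly why the threshold is set to $1/n^2$ rather than, say, $1/n$. With that one correction the argument goes through and coincides with the paper's.
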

\begin{proof}
We use an averaging argument to prove the lemma. Consider the quantity 
$$R = \frac{\sum_{a \in \mathcal{A}, f \in N(a): w(a) > \sqrt{\Gamma} w(f), w(a) -(1+\epsilon)w(f) \geq \frac{1}{n^2} w(A \setminus F)} \alpha(a, f) [(1+\epsilon)w'(f) - w'(a)]}{\sum_{a \in \mathcal{A}, f \in N(a): w(a) > \sqrt{\Gamma} w(f), w(a) - (1+\epsilon)w(f) \geq \frac{1}{n^2} w(A \setminus F)} \alpha(a, f) [w(a) - (1+\epsilon)w(f)]},$$ 
which is the ratio of the weighted average of increase in $c'$ to the weighted average of decrease in $c$ over all swaps where $w(a) > \sqrt{\Gamma} w(f)$ and $w(a) -(1+\epsilon)w(f) \geq \frac{1}{n^2} w(A \setminus F)$.

For any edge $e$ in $A \cap F$, it is also a subpath $f \in \mathcal{A} \cap \mathcal{F}$ for which the only $a \in \mathcal{A}$ such that $A \cup f \setminus a$ is feasible is $a = f$. So for all such $e$ we can assume that $\alpha$ is defined such that $\alpha(e, e) = 1$, $\alpha(e, f) = 0$ for $f \neq e$, $\alpha(a, e) = 0$ for $a \neq e$. Clearly $w(a) > \sqrt{\Gamma}w(a)$ does not hold, so no swap with a positive $\alpha$ value in either sum involves edges in $A \cap F$. So we can now set the cost with respect to both $c, c'$ of edges in $A \cap F$ to 0, and doing so does not affect the quantity $R$.

Then, the numerator can be upper bounded by $4(1+\epsilon)w'(F \setminus A)$. For the denominator, we first observe that
\begin{multline}\label{eq:greedyswap}
\sum_{a \in \mathcal{A}, f \in N(a): w(a) > \sqrt{\Gamma} w(f), w(a) -(1+\epsilon)w(f) \geq \frac{1}{n^2} w(A \setminus F)} \alpha(a, f) [w(a) - (1+\epsilon)w(f)] \geq \\
\sum_{a \in \mathcal{A}, f \in N(a): w(a) > \sqrt{\Gamma} w(f)} \alpha(a, f) [w(a) - (1+\epsilon)w(f)] - \\
\sum_{a \in \mathcal{A}, f \in N(a): w(a) -(1+\epsilon)w(f) < \frac{1}{n^2} w(A \setminus F)} \alpha(a, f) [w(a) - (1+\epsilon)w(f)].
\end{multline}

The second term on the right-hand side of \eqref{eq:greedyswap} is upper bounded by:

$$\sum_{a \in \mathcal{A}, f \in N(a): w(a) -(1+\epsilon)w(f) < \frac{1}{n^2} w(A \setminus F)} \alpha(a, f) [w(a) - (1+\epsilon)w(f)] \leq $$ 
$$\frac{1}{n^2} \sum_{a \in \mathcal{A}, f \in N(a): w(a) -(1+\epsilon)w(f) < \frac{1}{n^2} w(A \setminus F)} \alpha(a, f) w(A \setminus F)\leq \frac{1}{n}w(A \setminus F).$$

The inequality follows because there are at most $n$ different $a \in \mathcal{A}$, and for each one $\sum_{f \in F} \alpha(a, f) = 1$. We next use Corollary~\ref{cor:approxlocal} (setting both parameters to $\sqrt{\Gamma}$) to get the following lower bound on the first term in~\eqref{eq:greedyswap}:

\begin{align*}
&\sum_{a \in \mathcal{A}, f \in N(a): w(a) > \sqrt{\Gamma} w(f)} \alpha(a, f) [w(a) - (1+\epsilon)w(f)]\\
&\geq \sum_{a \in \mathcal{A}, f \in N(a): w(a) > \sqrt{\Gamma} w(f)} \alpha(a, f) [w(a) - \frac{1+\epsilon}{\sqrt{\Gamma}}w(a)]\\ 
&=\frac{\sqrt{\Gamma}-1-\epsilon}{\sqrt{\Gamma}} \sum_{a \in \mathcal{A}, f \in N(a): w(a) > \sqrt{\Gamma} w(f)} \alpha(a, f) w(a)\\
&\geq 
\frac{(\sqrt{\Gamma}-1)(\sqrt{\Gamma}-1-\epsilon)}{\Gamma} \sum_{a \in \mathcal{A}, f \in N(a)} \alpha(a, f) w(a) \\
&=\frac{(\sqrt{\Gamma}-1)(\sqrt{\Gamma}-1-\epsilon)}{\Gamma} w(A \setminus F).
\end{align*}

Which lower bounds the denominator of $R$ by $\left(\frac{(\sqrt{\Gamma}-1)(\sqrt{\Gamma}-1-\epsilon)}{\Gamma} - 1/n\right) \cdot w(A \setminus F)$. By proper choice of $\epsilon$, for sufficiently large $n$ we can ignore the $1/n$ term. Then, combining the bounds implies that $R$ is at most $\frac{4(1+\epsilon')\Gamma}{(\sqrt{\Gamma}-1)(\sqrt{\Gamma}-1-\epsilon')} \cdot \frac{w'(F \setminus A)}{w(A \setminus F)}$. In turn, one of the swaps being summed over in $R$ satisfies the lemma statement.
\end{proof}

We now almost have the tools to prove Lemma~\ref{lemma:differenceapprox2}. However, the local search process is now concerned with two edge costs, so just considering adding the shortest path with respect to $c$ between each pair of vertices and deleting a subset of vertices in the induced cycle will not suffice. We instead use the following lemma:

\begin{lemma}\label{lemma:apxsp}
Given a graph $G=(V, E)$ with edge costs $w$ and $w'$, two vertices $s$ and $t$, and input parameter $W'$, let $p$ be the shortest path from $s$ to $t$ with respect to $w$ whose cost with respect to $c'$ is at most $W'$. For all $\epsilon > 0$, there is a polynomial time algorithm that finds a path from $s$ to $t$ whose cost with respect to $w$ is at most $w(p)$ and whose cost with respect to $w'$ is at most $(1+\epsilon)W'$.
\end{lemma}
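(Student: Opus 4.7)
The statement is an instance of the classical Restricted Shortest Path (RSP) problem, and the plan is to adapt the standard FPTAS (Hassin, Lorenz--Raz) via scaling and dynamic programming, with the twist that we want the $w$-objective matched exactly (not merely within a $(1+\epsilon)$ factor), while the $w'$-constraint is the one we relax. First I would discard every edge $e$ with $w'_e > (1+\epsilon)W'$, since no such edge can appear in a feasible output path.

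Next I would scale and round the $w'$ costs \emph{downward}: set $\mu = \epsilon W'/n$ and define $\tilde{w}'_e = \lfloor w'_e/\mu \rfloor$, which is an integer in $\{0,1,\dots,\lceil(1+\epsilon)n/\epsilon\rceil\}$. Rounding down (rather than up) is the key asymmetric choice: it guarantees that the unknown optimum path $p$ still satisfies $\tilde{w}'(p) \leq w'(p)/\mu \leq n/\epsilon$, so $p$ remains feasible in the discretized problem and the FPTAS will never lose on the $w$-objective. The discretization error on any simple $s$-$t$ path $q$ is controlled by $\mu \cdot \tilde{w}'(q) \leq w'(q) \leq \mu \cdot \tilde{w}'(q) + (n-1)\mu$, i.e., at most $\epsilon W'$ of slack.

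The dynamic program then maintains $D[v, K]$, the minimum $w$-cost of an $s$-to-$v$ path whose rounded $w'$-budget is at most $K$, for $v \in V$ and $K \in \{0,1,\dots,\lceil n/\epsilon\rceil\}$, using the standard relaxation $D[v, K] = \min_{(u,v) \in E}\bigl(D[u, K - \tilde{w}'_{(u,v)}] + w_{(u,v)}\bigr)$ with $D[s, 0] = 0$. The table has $O(n^2/\epsilon)$ entries each computable in $O(n)$ time, so the procedure runs in polynomial time. I return the path realizing $D[t, \lceil n/\epsilon\rceil]$; feasibility of $p$ in the DP gives $w(p') \leq D[t, \lceil n/\epsilon\rceil] \leq w(p)$, and the rounding bound gives $w'(p') \leq \mu \cdot \lceil n/\epsilon\rceil + \epsilon W' \leq (1 + 2\epsilon)W'$, after which rescaling $\epsilon$ by a constant factor yields the advertised $(1+\epsilon)W'$ bound.

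The hard part will really just be the bookkeeping of rounding direction and parameter calibration; there is no deeper algorithmic obstacle, since this is a textbook FPTAS. The only subtlety worth flagging is that the standard RSP FPTAS is usually phrased the other way around (approximate the objective, satisfy the constraint exactly), so I need to be careful that rounding \emph{down} on $w'$ is what preserves $w(p') \leq w(p)$ exactly while absorbing all the slack into the $(1+\epsilon)W'$ constraint relaxation.
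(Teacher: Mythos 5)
Your proof is correct and follows essentially the same route as the paper's: round $w'$ \emph{down} to a grid of size $\Theta(\epsilon W'/n)$ so the target path $p$ remains feasible in the discretized instance, run the standard dynamic program indexed by (vertex, rounded $w'$-budget), and charge the at most $\epsilon W'$ of rounding slack to the $(1+\epsilon)$ factor on the $w'$-constraint while keeping $w(p)$ as an exact upper bound on the $w$-objective. The additional pruning of edges with $w'_e > (1+\epsilon)W'$ and the explicit ceiling bookkeeping are harmless refinements of the same argument.
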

\begin{proof}
If all edge lengths with respect to $w'$ are multiples of $\Delta$, an optimal solution can be found in time $\textnormal{poly}(|V|, |E|, W'/\Delta)$ via dynamic programming: Let $\ell(v, i)$ be the length of the shortest path from $s$ to $v$ with respect to $w$ whose cost with respect to $w'$ is at most $i \cdot \Delta$. Using $\ell(s, i) = 0$ for all $i$ and the recurrence $\ell(v, i) \leq \ell(u, i-(w'_{uv}/\Delta)) + w_{uv}$ for edge $(u, v)$, we can compute $\ell(v, i)$ for all $v, i$ and use backtracking from $\ell(t, W')$ to retrieve $p$ in $\textnormal{poly}(|V|, |E|, W'/\Delta)$ time.

To get the runtime down to polynomial, we use a standard rounding trick, rounding each $w_e'$ down to the nearest multiple of $\epsilon W' / |V| $. After rounding, the runtime of the dynamic programming algorithm is $\textnormal{poly}(|V|, |E|, \frac{W'}{\epsilon W'/|V|}) = \textnormal{poly}(|V|, |E|, \frac{1}{\epsilon})$. Any path has at most $|V|$ edges, and so its cost decreases by at most $\epsilon W'$ in this rounding process, i.e., all paths considered by the algorithm have cost with respect to $w'$ of at most $(1+\epsilon)W'$. Lastly, since $p$'s cost with respect to $w'$ only decreases, $w(p)$ still upper bounds the cost of the shortest path considered by the algorithm with respect to $w$.
\end{proof}

The idea is to run a local search with respect to $c$ starting with a good approximation with respect to $c'$. Our algorithm alternates between a ``forward'' and ``backward'' phase. In the forward phase, we use Lemma~\ref{lemma:apxsp} to decide which paths can be added to the solution in local search moves. The local search takes any swap that causes both $c(\alg)$ and $c'(\alg)$ to decrease if any exists. Otherwise, it picks the swap between $S \in \alg$ and $f$ that among all swaps where $c(f) < c(a)$ and  $c'(f) \leq c'(\sol)$ minimizes the ratio $\frac{c'(f) - c'(a)}{c(a) - c(f)}$ (we assume we know the value of $c'(\sol)$, as can guess many values, and our algorithm will work for the right value for $c'(\sol)$). 

If the algorithm only made swaps of this form, however, $c'(\alg)$ might become a very poor approximation of $c'(\sol)$. To control for this, when $c'(\alg)$ exceeds $(4\Gamma'+\kappa) \cdot c'(\sol)$ for some constant $\Gamma' > 1$, we begin a ``backward phase'': We take the opposite approach, greedily choosing either swaps that improve both $c$ and $c'$ or that improve $c'$ and minimize the ratio $\frac{c(f) - c(a)}{c'(a) - c'(f)}$, until $c'(\alg)$ has been reduced by at least $\kappa \cdot c'(\sol)$. At this point, we begin a new forward phase. 

The intuition for the analysis is 
as follows: If, throughout a forward phase, $c(\alg \setminus \sol) \geq 4\Gamma \cdot c(\sol \setminus \alg)$, Lemma~\ref{lemma:greedyswap} tells us that there is  swap where the increase in $c'(\alg)$ will be very small relative to the decrease in $c(\alg)$.  (Note that our goal is to reduce the cost of $c(\alg \setminus \sol)$ to something below $4\Gamma \cdot c(\sol \setminus \alg)$.)
Throughout the subsequent backward phase, we have $c'(\alg) > 4\Gamma' \cdot c'(\sol)$, which implies $c'(\alg \setminus \sol) > 4\Gamma' \cdot c'(\sol \setminus \alg)$. So Lemma~\ref{lemma:greedyswap} also implies that the total increase in $c(\alg)$ will be very small relative to the decrease in $c'(\alg)$. Since the absolute change in $c'(\alg)$ is similar between the two phases, one forward and one backward phase should decrease $c(\alg)$ overall. 

\begin{figure}[H]
\fbox{\begin{minipage}{.98\textwidth}
\textsc{DoubleApprox}($G=(V, E)$, $\chi$, $\chi'$, $\Gamma$, $\Gamma'$, $\kappa$)

\begin{algorithm}[H]
\KwData{A graph $G=(V, E)$ with terminal set $T$ and cost functions $c, c'$ for which all $c_e'$ are a multiple of $\frac{\epsilon}{n}\chi'$, $\chi$ such that $\chi \in [c(\sol), (1+\epsilon)\cdot c(\sol)]$, $\chi'$ such that $\chi' \in [c'(\sol), (1+\epsilon)\cdot c'(\sol)]$, constants $\Gamma, \Gamma', \kappa$.}
$i \leftarrow 0$\;
$\alg^{(0)} \leftarrow $ $\alpha$-approximation of optimal Steiner tree with respect to $c'$ for $\alpha < 4\Gamma'$ \label{line:approx}\;
\While{$i = 0$ or $c(\alg^{(i)}) < c(\alg^{(i-2)})$}{\label{line:mainwhile} \tcc{Iterate over all guesses $\rho$ for $c(\alg^{(i)} \setminus \sol)$}
\For{$\rho \in \{\min_e c_e, (1+\epsilon)\min_e c_e, \ldots (1+\epsilon)^{\log_{1+\epsilon} n \frac{\max_e c_e}{\min_e c_e}} \min_e c_e\}$}{
$\alg^{(i+1)}_\rho \leftarrow \alg^{(i)}$\;
\tcc{Forward phase}
\While{$c'(\alg^{(i+1)}_\rho) \leq (4\Gamma'+\kappa)\chi'$ and $c(\alg^{(i+1)}_\rho) > c(\alg^{(i)}) - \rho/2$}{\label{line:firstwhile} 
$(\alg^{(i+1)}_\rho, stop)\leftarrow \textsc{GreedySwap}(\alg^{(i+1)}_\rho, c, c', \chi', \frac{1}{10n^2}\rho)$\;
\If{$stop = 1$}{\label{line:stopcheckfirst}\
\textbf{break} while loop starting on line~\ref{line:firstwhile}\;
}\label{line:stopchecklast}
}
$\alg^{(i+2)}_\rho \leftarrow \alg^{(i+1)}_\rho$\;
\tcc{Backward phase}
\While{$c'(\alg^{(i+2)}_\rho) \geq 4\Gamma'\chi'$}{\label{line:secondwhile} 
\label{line:secondswap} $(\alg^{(i+2)}_\rho, \sim) \leftarrow \textsc{GreedySwap}(\alg^{(i+2)}_\rho, c', c, \chi, \frac{\epsilon}{n}\chi')$\;
}
}
$\alg^{(i+2)} \leftarrow \argmin_{\alg^{(i+2)}_\rho} c(\alg^{(i+2)}_\rho)$\;
$i \leftarrow i + 2$\;
}
\textbf{return} all values of $\alg^{(i)}_\rho$ stored for any value of $i, \rho$\;
\end{algorithm}
\end{minipage}}
\caption{Algorithm \textsc{DoubleApprox}, which finds $\alg$ such that $c(\alg \setminus \sol) \leq O(1) \cdot c(\sol \setminus \alg)$ and $c'(\alg) \leq O(1) \cdot c'(\sol)$}
\label{fig:doubleapprox}
\end{figure}

\begin{figure}[H]
\fbox{\begin{minipage}{.98\textwidth}
$\textsc{GreedySwap}(\alg, w, w', \chi', \rho)$

\begin{algorithm}[H]
\KwData{Solution $\alg$, cost functions $w, w'$ on the edges, $\chi' \in [w'(\sol), (1+\epsilon)w'(\sol)]$, minimum improvement per swap $\rho$}
$swaps \leftarrow \emptyset$\;
\For{$W' \in \{1, 1+\epsilon, (1+\epsilon)^2, \ldots, (1+\epsilon)^{\lceil \log_{1 + \epsilon} \chi' \rceil}\}$}{
\For{$s, t \in \alg$}{
Find a $(1+\epsilon)$-approximation $f$ of the shortest path $\hat{f}$ from $s$ to $t$ with respect to $w$ such that $w'(\hat{f}) \leq W'$, $\hat{f} \cap \alg = \{s, t\}$ (via Lemma~\ref{lemma:apxsp})\;
\label{line:increasebound}
\For{all maximal $a \subseteq \alg$ such that $\alg \cup a \setminus f$ is feasible, $w(a) - w(f) \geq \rho$}{
$swaps \leftarrow swaps \cup \{(a, f)\}$\;
}
}
}
\If{$swaps = \emptyset$}{
\textbf{return} $(\alg, 1)$\;
}
$(a^*, f^*) \leftarrow \argmin_{(a, f) \in swaps} \frac{w'(f)-w'(a)}{w(a)-w(f)}$\;
\textbf{return} $(\alg \cup a^* \setminus f^*, 0)$\;
\end{algorithm}
\end{minipage}}
\caption{Algorithm \textsc{GreedySwap}, which finds a swap with the properties described in Lemma~\ref{lemma:greedyswap}}
\label{fig:greedyswap}
\end{figure}

The formal description of the backward and forward phase is given as algorithm \textsc{DoubleApprox} in Figure~\ref{fig:doubleapprox}. 

For the lemmas/corollaries stated, we implicitly assume that we know values of $\chi$ and $\chi'$ satisfying the conditions of \textsc{DoubleApprox}. When we conclude by proving Lemma~\ref{lemma:differenceapprox2}, we will simply call \textsc{DoubleApprox} for every reasonable value of $\chi, \chi'$ that is a power of $1+\epsilon$, and one of these runs will have $\chi, \chi'$ satisfying the conditions. Furthermore, there are multiple error parameters in our algorithm and its analysis. For simplicity of presentation, we will use the same value $\epsilon$ for all error parameters. We now begin the analysis. We first make some observations. The first lets us relate the decrease in cost of a solution $\alg$ to the decrease in the cost of $\alg \setminus \sol$.

\begin{observation}\label{obs:swap}
Let $\alg, \alg', \sol$ be any Steiner tree solutions to a given instance. Then 

$$c(\alg) - c(\alg') = [c(\alg \setminus \sol) - c(\sol \setminus \alg)] - [c(\alg' \setminus \sol) - c(\sol \setminus \alg')].$$
\end{observation}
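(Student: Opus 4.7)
The observation is a clean set-theoretic identity about edge-weight sums, so the proof should be a short calculation rather than anything structural. The plan is to decompose both $c(\alg)$ and $c(\alg')$ relative to the fixed set $\sol$ and then use that $c(\sol)$ is a single well-defined quantity that admits two such decompositions.

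First I would write $c(\alg) = c(\alg \cap \sol) + c(\alg \setminus \sol)$ and $c(\alg') = c(\alg' \cap \sol) + c(\alg' \setminus \sol)$. Next, I would express $c(\sol)$ in two ways by partitioning $\sol$ against $\alg$ and against $\alg'$:
\[
c(\sol) = c(\alg \cap \sol) + c(\sol \setminus \alg) = c(\alg' \cap \sol) + c(\sol \setminus \alg').
\]
Rearranging gives the key identity
\[
c(\alg \cap \sol) - c(\alg' \cap \sol) = c(\sol \setminus \alg') - c(\sol \setminus \alg).
\]

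Finally, I would subtract the two decompositions from the first step,
\[
c(\alg) - c(\alg') = \bigl[c(\alg \cap \sol) - c(\alg' \cap \sol)\bigr] + \bigl[c(\alg \setminus \sol) - c(\alg' \setminus \sol)\bigr],
\]
and substitute the identity for the first bracket on the right. Grouping the resulting terms by which of $\alg,\alg'$ they involve yields exactly
\[
c(\alg) - c(\alg') = \bigl[c(\alg \setminus \sol) - c(\sol \setminus \alg)\bigr] - \bigl[c(\alg' \setminus \sol) - c(\sol \setminus \alg')\bigr],
\]
as claimed. There is no real obstacle here; the only thing to be a bit careful about is that $c$ is additive over disjoint edge sets, which is immediate since $c$ is a sum of per-edge weights. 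This observation will later be invoked to convert bounds on how local swaps change $c(\alg)$ into bounds on how they change the symmetric difference quantity $c(\alg \setminus \sol) - c(\sol \setminus \alg)$, since $\sol$ is held fixed across the two solutions being compared.
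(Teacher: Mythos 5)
Your proof is correct, but it takes a genuinely different route from the paper's. The paper proves the identity by a per-edge case analysis: it first observes that edges in $\alg \cap \alg'$ and edges outside both $\alg$ and $\alg'$ contribute zero to both sides, then checks each edge $e \in \alg \setminus \alg'$ (subcases $e \in \sol$ vs.\ $e \notin \sol$) and each $e \in \alg' \setminus \alg$ (same two subcases) and verifies that its contribution to the left-hand side equals its contribution to the right-hand side. Your argument instead treats $c$ globally: you partition $\alg$, $\alg'$, and $\sol$ against each other and manipulate the resulting sums, with the crucial observation being that $c(\sol)$ admits two decompositions (one against $\alg$, one against $\alg'$), yielding $c(\alg \cap \sol) - c(\alg' \cap \sol) = c(\sol \setminus \alg') - c(\sol \setminus \alg)$. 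Your version is shorter and avoids case-splitting, which makes the underlying algebra more transparent; the paper's version is more mechanical but leaves no room to worry about whether all cases have been covered, since it enumerates them explicitly. Both rest only on additivity of $c$ over disjoint edge sets, which you correctly flag as the one thing that needs to hold.
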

\begin{proof}
By symmetry, the contribution of edges in $\alg \cap \alg'$ and edges in neither $\alg$ nor $\alg'$ to both the left and right hand side of the equality is zero, so it suffices to show that all edges in $\alg \oplus \alg'$ contribute equally to the left and right hand side.

Consider any $e \in \alg \setminus \alg'$. Its contribution to $c(\alg) - c(\alg')$ is $c(e)$. If $e \in \alg \setminus \sol$, then $e$ contributes $c(e)$ to $c(\alg \setminus \sol) - c(\sol \setminus \alg)$ and 0 to $-[c(\alg' \setminus \sol) - c(\sol \setminus \alg')]$. If $e \in \alg \cap \sol$, then $e$ contributes $0$ to $c(\alg \setminus \sol) - c(\sol \setminus \alg)$ and $c(e)$ to $- [c(\alg' \setminus \sol) - c(\sol \setminus \alg')]$. So the total contribution of $e$ to $[c(\alg \setminus \sol) - c(\sol \setminus \alg)] - [c(\alg' \setminus \sol) - c(\sol \setminus \alg')]$ is $c(e)$.

Similarly, consider $e \in \alg' \setminus \alg$. Its contribution to $c(\alg) - c(\alg')$ is $-c(e)$. If $e \in \sol \setminus \alg$, then $e$ contributes $-c(e)$ to $c(\alg \setminus \sol) - c(\sol \setminus \alg)$ and $0$ to $[c(\alg' \setminus \sol) - c(\sol \setminus \alg')]$. If $e \notin \sol$, then $e$ contributes $0$ to $c(\alg \setminus \sol) - c(\sol \setminus \alg)$ and $-c(e)$ to $-[c(\alg' \setminus \sol) - c(\sol \setminus \alg')]$. So the total contribution of $e$ to $[c(\alg \setminus \sol) - c(\sol \setminus \alg)] - [c(\alg' \setminus \sol) - c(\sol \setminus \alg')]$ is $-c(e)$.
\end{proof}

Observation~\ref{obs:swap} is useful because Lemma~\ref{fig:greedyswap} relates the ratio of change in $c, c'$ to the $c(\alg \setminus \sol)$, but it is difficult to track how $c(\alg \setminus \sol)$ changes as we make swaps that improve $c(\alg)$. For example, $c(\alg \setminus \sol)$ does not necessarily decrease with swaps that cause $c(\alg)$ to decrease (e.g. consider a swap that adds a light edge not in $\sol$ and removes a heavy edge in $\sol$). Whenever $c(\alg \setminus \sol) \gg c(\sol \setminus \alg)$ (if this doesn't hold, we have a good approximation and are done), $c(\alg \setminus \sol)$ and $c(\alg \setminus \sol) - c(\sol \setminus \alg)$ are off by a multiplicative factor that is very close to 1, and thus we can relate the ratio of changes in Lemma~\ref{fig:greedyswap} to $c(\alg \setminus \sol) - c(\sol \setminus \alg)$ instead at a small loss in the constant, and by Observation~\ref{obs:swap} changes in this quantity are much easier to track over the course of the algorithm, simplifying our analysis greatly.

The next observation lets us assume that any backward phase requires only polynomially many calls to \textsc{GreedySwap}.

\begin{observation}\label{obs:rounding}
Let $\chi'$ be any value such that $\chi' \in [c'(\sol), (1+\epsilon)c'(\sol)]$, and suppose we round all $c_e'$ up to the nearest multiple of $\frac{\epsilon}{n}\chi'$ for some $0 < \epsilon < 1$. Then any $\gamma$-approximation of $\sol$ with respect to $c'$ using the rounded $c_e'$ values is an $\gamma(1+2\epsilon)$-approximation of $\sol$ with respect to $c'$ using the original edge costs.
\end{observation}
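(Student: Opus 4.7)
}
The plan is a direct two-line chain of inequalities: relate rounded and original costs on a per-tree basis, then apply the approximation guarantee. Let $\tilde{c}'_e$ denote the rounded cost on edge $e$, so by construction $c'_e \le \tilde{c}'_e \le c'_e + \tfrac{\epsilon}{n}\chi'$. For any Steiner tree $T$, since $T$ has at most $n-1$ edges, summing this per-edge bound gives
$$c'(T) \;\le\; \tilde{c}'(T) \;\le\; c'(T) + (n-1)\cdot\tfrac{\epsilon}{n}\chi' \;\le\; c'(T) + \epsilon\chi'.$$
This is the only structural fact I need; the remainder is algebra.

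Now let $T$ be a $\gamma$-approximation of $\sol$ with respect to the rounded costs, i.e.\ $\tilde{c}'(T)\le \gamma \cdot \tilde{c}'(\sol)$. I would then chain the inequalities:
$$c'(T) \;\le\; \tilde{c}'(T) \;\le\; \gamma\,\tilde{c}'(\sol) \;\le\; \gamma\bigl(c'(\sol)+\epsilon\chi'\bigr) \;\le\; \gamma\bigl(c'(\sol)+\epsilon(1+\epsilon)\,c'(\sol)\bigr),$$
where the first step uses the lower half of the per-tree bound applied to $T$, the third step uses its upper half applied to $\sol$, and the fourth step uses the hypothesis $\chi'\le(1+\epsilon)c'(\sol)$. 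Simplifying yields $c'(T)\le\gamma(1+\epsilon+\epsilon^2)c'(\sol)$, and since $0<\epsilon<1$ we have $\epsilon+\epsilon^2 \le 2\epsilon$, so $c'(T)\le\gamma(1+2\epsilon)c'(\sol)$, which is the claimed bound.

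There is essentially no obstacle here: the argument is a standard rounding/scaling estimate, and the only thing to be careful about is counting at most $n$ (rather than unboundedly many) edges in a Steiner tree so that the $n$ factor in the rounding granularity cancels. I would present the proof in just the two displays above together with a short remark on the edge count.
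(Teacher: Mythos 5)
Your proof is correct and follows exactly the same route as the paper's: the paper also argues that rounding can only increase any solution's cost, that the increase is at most $\epsilon\chi'$ because a Steiner tree has at most $n$ edges, and then bounds $\epsilon\chi' \le \epsilon(1+\epsilon)c'(\sol) \le 2\epsilon\,c'(\sol)$. Yours is merely the more explicit write-up of the same chain of inequalities.
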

\begin{proof}
This follows because the rounding can only increase the cost of any solution, and the cost increases by at most $\epsilon \chi' \leq \epsilon(1 + \epsilon)c'(\sol) \leq 2 \epsilon c'(\sol)$.
\end{proof}

Via this observation, we will assume all $c'_e$ are already rounded.

\begin{lemma}[Forward Phase Analysis]\label{lemma:forwardphase}
For any even $i$ in algorithm \textsc{DoubleApprox}, let $\rho$ be the power of $(1+\epsilon)$ times $\min_e c_e$ such that $\rho \in [c(\alg^{(i)} \setminus \sol),(1+\epsilon)c(\alg^{(i)} \setminus \sol) ]$. Suppose all values of $\alg^{(i+1)}_\rho$ and the final value of $\alg^{(i)}$ in \textsc{DoubleApprox} satisfy $c(\alg^{(i+1)}_\rho \setminus \sol) > 4\Gamma \cdot c(\sol \setminus \alg^{(i+1)}_{\rho})$ and $c(\alg^{(i)} \setminus \sol) > 4\Gamma \cdot c(\sol \setminus \alg^{(i)})$. Then for $0 < \epsilon < 2/3 -5/12\Gamma$, the final values of $\alg^{(i)}, \alg^{(i+1)}_\rho$ satisfy 

$$c(\alg^{(i)}) - c(\alg^{(i+1)}_\rho) \geq \min\left\{\frac{4\Gamma - 1}{8\Gamma} ,\frac{(4\Gamma - 1)(\sqrt{\Gamma}-1)(\sqrt{\Gamma}-1-\epsilon)\kappa}{16(1+\epsilon)\Gamma^2}\right\} \cdot c(\alg^{(i+1)}_\rho \setminus \sol).$$
\end{lemma}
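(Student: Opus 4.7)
Write $A := c(\alg^{(i)}\setminus\sol)$, $B := c(\sol\setminus\alg^{(i)})$, $A' := c(\alg^{(i+1)}_\rho\setminus\sol)$, $B' := c(\sol\setminus\alg^{(i+1)}_\rho)$, and let $\alg_j$ denote the intermediate Steiner tree after $j$ calls to \textsc{GreedySwap} within this forward phase (so $\alg_0 = \alg^{(i)}$ and the final $\alg_j$ equals $\alg^{(i+1)}_\rho$). By hypothesis and the choice of $\rho$ we have $A > 4\Gamma B$, $A' > 4\Gamma B'$, $c(\alg_j\setminus\sol) > 4\Gamma\cdot c(\sol\setminus\alg_j)$ at every $j$, and $\rho \in [A, (1+\epsilon)A]$. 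Observation~\ref{obs:swap} supplies two facts that drive the argument: the total $c$-decrease in the phase equals $(A-B) - (A'-B')$, and $c(\alg_j\setminus\sol) - c(\sol\setminus\alg_j)$ drops by exactly as much as $c(\alg_j)$ does at each swap. Since every \textsc{GreedySwap} step strictly decreases $c(\alg_j)$, the latter quantity is monotonically non-increasing, hence bounded below by its final value $A'-B' > \frac{4\Gamma-1}{4\Gamma}A'$, which yields the uniform lower bound $c(\alg_j\setminus\sol) > \frac{4\Gamma-1}{4\Gamma}A'$ across the entire phase. The plan is to analyze the three ways the inner while loop on Line~\ref{line:firstwhile} can exit: (i) the $c$-decrease reaches $\rho/2$; (ii) $c'(\alg^{(i+1)}_\rho)$ exceeds $(4\Gamma'+\kappa)\chi'$; or (iii) \textsc{GreedySwap} returns $stop=1$ and we break on Line~\ref{line:stopcheckfirst}. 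I will show that case~(i) yields the first term in the min, case~(ii) yields the second, and case~(iii) dominates both.

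\textbf{Case (i): exit via the $c$-threshold.} Substituting $(A-B) \leq A$ and $(A'-B') > \frac{4\Gamma-1}{4\Gamma}A'$ into the identity from Observation~\ref{obs:swap}, together with the exit condition $(A-B)-(A'-B') \geq \rho/2$, gives $\frac{4\Gamma-1}{4\Gamma}A' < A - \rho/2 \leq A/2$, so $A \geq \frac{4\Gamma-1}{2\Gamma}A'$. Therefore the total decrease is at least $\rho/2 \geq A/2 \geq \frac{4\Gamma-1}{4\Gamma}A'$, comfortably beyond the first term $\frac{4\Gamma-1}{8\Gamma}A'$ of the min (the factor-of-two slack absorbs the $(1+\epsilon)$ slop between $\rho$ and $A$).

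\textbf{Case (ii): exit via the $c'$-threshold.} The preceding backward phase (or Line~\ref{line:approx} for $i=0$) leaves $c'(\alg^{(i)}) < 4\Gamma'\chi'$, so the net $c'$-increase across the forward phase exceeds $\kappa\chi'$. Partition the swaps into those with $c'(f_j) \geq c'(a_j)$ and those with $c'(f_j) < c'(a_j)$; the latter strictly decrease both costs and can only help, while the former alone must account for at least $\kappa\chi'$ of $c'$-increase. For each swap in the first group, Lemma~\ref{lemma:greedyswap} (applied with $w = c$, $w' = c'$) yields
$$\frac{c'(f_j)-c'(a_j)}{c(a_j)-c(f_j)} \;\leq\; R_j \;:=\; \frac{4(1+\epsilon)\Gamma}{(\sqrt{\Gamma}-1)(\sqrt{\Gamma}-1-\epsilon)} \cdot \frac{c'(\sol\setminus\alg_j)}{c(\alg_j\setminus\sol)},$$
where the $(1+\epsilon)$ losses from approximate shortest paths (Lemma~\ref{lemma:apxsp}) are already absorbed into the stated constants. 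Using the uniform bounds $c'(\sol\setminus\alg_j) \leq c'(\sol) \leq \chi'$ and $c(\alg_j\setminus\sol) > \frac{4\Gamma-1}{4\Gamma}A'$ from the setup, inverting the per-swap ratio, summing $c(a_j)-c(f_j) \geq (c'(f_j)-c'(a_j))/R_j$ over the first group, and plugging in the $\kappa\chi'$ lower bound on their total $c'$-increase delivers a total $c$-decrease of at least $\frac{(4\Gamma-1)(\sqrt{\Gamma}-1)(\sqrt{\Gamma}-1-\epsilon)\kappa}{16(1+\epsilon)\Gamma^2}A'$, matching the second term of the min.

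\textbf{Case (iii) and main obstacle.} If $stop=1$ fires at some intermediate $\alg_j$, then no swap with $c(a)-c(f) \geq \rho/(10n^2)$ exists there. But under the hypothesis $c(\alg_j\setminus\sol) > 4\Gamma\cdot c(\sol\setminus\alg_j)$, Lemma~\ref{lemma:greedyswap} produces a swap with $c(a) - (1+\epsilon)c(f) \geq \frac{1}{n^2}c(\alg_j\setminus\sol)$, so $stop=1$ forces $c(\alg_j\setminus\sol) < \rho/10 \leq (1+\epsilon)A/10$; combined with the setup's lower bound $c(\alg_j\setminus\sol) > \frac{4\Gamma-1}{4\Gamma}A'$, this gives $A' \leq O(A/\Gamma)$, and Observation~\ref{obs:swap} then produces a decrease dwarfing both terms of the min. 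The main technical obstacle is case~(ii): the per-swap ratio $R_j$ depends on $c(\alg_j\setminus\sol)$, which can fluctuate up or down along the phase, so a naive sum cannot be controlled by the final value $A'$. The key trick, enabled by Observation~\ref{obs:swap}, is to replace $c(\alg_j\setminus\sol)$ with the monotone quantity $c(\alg_j\setminus\sol) - c(\sol\setminus\alg_j)$ and re-convert using the $4\Gamma$-ratio hypothesis at a cost of $\frac{4\Gamma-1}{4\Gamma}$, which is precisely what yields the extra $(4\Gamma-1)$ factor in the second term of the min.
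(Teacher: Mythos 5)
Your proof is correct and takes essentially the same approach as the paper: both rely on Observation~\ref{obs:swap} to work with the monotone quantity $c(\alg_j\setminus\sol)-c(\sol\setminus\alg_j)$, both invoke Lemma~\ref{lemma:greedyswap} (chained through Lemma~\ref{lemma:apxsp}) for the per-swap rate-of-exchange bound, and both split on the two exit conditions of the forward while loop to produce the two terms in the min. The one small difference is your case (iii): the paper simply shows that the bound $\epsilon < 2/3 - 5/12\Gamma$ forces $c(\alg^{(i+1)}_\rho\setminus\sol) \geq \rho/10$ at every call, so $stop=1$ never fires and the case vanishes; your route reaches the same end but the intermediate claim ``$A' \leq O(A/\Gamma)$'' is imprecise — the derived bound is $A' < (1+\epsilon)A/10$, with no $\Gamma$-dependence — though this does not affect the final conclusion.
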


\begin{proof}

Let $\alg^{(i+1)}_{\rho, j}$ denote the value of $\alg^{(i+1)}_\rho$ after $j$ calls to \textsc{GreedySwap} on $\alg^{(i+1)}_\rho$, and let $J$ be the total number of calls of \textsc{GreedySwap} on $\alg^{(i+1)}_\rho$. Then $\alg^{(i+1)}_{\rho, 0}$ is the final value of $\alg^{(i)}$, and the final value of $\alg^{(i+1)}_\rho$ is $\alg^{(i+1)}_{\rho, J}$. 
Any time \textsc{GreedySwap} is invoked on $\alg^{(i+1)}_\rho$, by line~\ref{line:firstwhile} of \textsc{DoubleApprox} and the assumption  $\rho \leq (1+\epsilon)c(\alg^{(i)} \setminus \sol)$ in the lemma statement, we have:

$$c(\alg^{(i+1)}_\rho) > c(\alg^{(i)}) - \rho/2 \geq  c(\alg^{(i)}) - \frac{1+\epsilon}{2}c(\alg^{(i)} \setminus \sol).$$

Then, by Observation~\ref{obs:swap} and the assumption  $c(\alg^{(i)} \setminus \sol) > 4\Gamma \cdot c(\sol \setminus \alg^{(i)})$ in the lemma statement, we have:

\begin{align*}
c(\alg^{(i+1)}_\rho \setminus \sol) &\geq c(\alg^{(i+1)}_\rho \setminus \sol) - c(\sol \setminus \alg^{(i+1)}_{\rho}) \\
&=c(\alg^{(i)} \setminus \sol) - c(\sol \setminus \alg^{(i)}) + c(\alg^{(i+1)}_\rho) - c(\alg^{(i)}) \\
&\geq c(\alg^{(i)} \setminus \sol) - c(\sol \setminus \alg^{(i)}) - \frac{1+\epsilon}{2}c(\alg^{(i)} \setminus \sol) \\
& \geq \left(\frac{1-\epsilon}{2} - \frac{1}{4\Gamma}\right)c(\alg^{(i)} \setminus \sol),
\end{align*}

For $\epsilon < 2/3 - 5/12\Gamma$,  $c(\alg^{(i+1)}_{\rho} \setminus \sol)/n^2 \geq \frac{1}{10n^2}\rho$. So by Lemma~\ref{lemma:greedyswap} \textsc{GreedySwap} never outputs a tuple where $stops = 1$, and thus we can ignore lines~\ref{line:stopcheckfirst}-\ref{line:stopchecklast} of \textsc{DoubleApprox} under the conditions in the lemma statement.

Suppose $\alg^{(i+1)}_{\rho, J}$ satisfies $c(\alg^{(i+1)}_{\rho, J}) \leq c(\alg^{(i)}) - \rho/2$, a condition that causes the while loop at line~\ref{line:firstwhile} of \textsc{DoubleApprox} to exit and the forward phase to end. Then

\begin{align*}
c(\alg^{(i)}) - c(\alg^{(i+1)}_{\rho, J}) &\geq \rho/2 \geq \frac{1}{2} c(\alg^{(i)} \setminus \sol)\\
&\geq \frac{1}{2} [c(\alg^{(i)} \setminus \sol) - c(\sol \setminus \alg^{(i)})]\\
&= \frac{1}{2} [c(\alg^{(i+1)}_{\rho,0} \setminus \sol) - c(\sol \setminus \alg^{(i+1)}_{\rho,0})]\\
&\geq \frac{1}{2}[c(\alg^{(i+1)}_{\rho,J} \setminus \sol) - c(\sol \setminus \alg^{(i+1)}_{\rho,J}) ]\\
&\geq \frac{4\Gamma - 1}{8\Gamma} c(\alg^{(i+1)}_{\rho, J}\setminus \sol).
\end{align*}

The second-to-last inequality is using Observation~\ref{obs:swap}, which implies $c(\alg^{(i+1)}_{\rho, j} \setminus \sol) - c(\sol \setminus \alg^{(i+1)}_{\rho, j})$ is decreasing with swaps, and the last inequality holds by the assumption $c(\alg^{(i+1)}_\rho \setminus \sol) > 4\Gamma \cdot c(\sol \setminus \alg^{(i+1)}_{\rho})$ in the lemma statement. Thus if $c(\alg^{(i+1)}_{\rho, J}) \leq c(\alg^{(i)}) - \rho/2$, the lemma holds.

Now assume instead that $c(\alg^{(i+1)}_{\rho, J}) > c(\alg^{(i)}) - \rho/2$ when the forward phase ends. We want a lower bound on
$$c(\alg^{(i+1)}_{\rho, 0}) - c(\alg^{(i+1)}_{\rho, J})= \sum_{j = 0}^{J-1}[c(\alg^{(i+1)}_{\rho, j}) - c(\alg^{(i+1)}_{\rho, j+1})].$$

We bound each $c(\alg^{(i+1)}_{\rho, j}) - c(\alg^{(i+1)}_{\rho, j+1})$ term using Lemma~\ref{lemma:greedyswap} and Lemma~\ref{lemma:apxsp}. By Lemma~\ref{lemma:greedyswap} and the assumption in the lemma statement that  $c(\alg^{(i+1)}_\rho \setminus \sol) > 4\Gamma \cdot c(\sol \setminus \alg^{(i+1)}_{\rho})$, we know there exists a swap between $a \in \alg_{\rho, j}^{(i+1)}$ and $f \in \sol$ such that

$$\frac{(1+\epsilon)c'(f) - c'(a)}{c(a) - (1+\epsilon)c(f)} \leq \frac{4(1+\epsilon)\Gamma}{(\sqrt{\Gamma}-1)(\sqrt{\Gamma}-1-\epsilon)} \cdot \frac{c'(\sol \setminus \alg^{(i+1)}_{\rho, j})}{c(\alg^{(i+1)}_{\rho, j} \setminus \sol)}.$$ 

By Lemma~\ref{lemma:apxsp}, we know that when $G'$ is set to a value in $[c'(f), (1+\epsilon) \cdot c'(f)]$ in line 2 of in \textsc{GreedySwap}, the algorithm finds a path $f'$ between the endpoints of $f$ such that $c(f') \leq (1+\epsilon) c(f)$ and $c'(f') \leq (1+\epsilon) c'(f)$. Thus $(a, f') \in swaps$ and the swap $(a^*, f^*)$ chosen by the $(j+1)$th call to \textsc{GreedySwap} satisfy:

$$\frac{c'(f^*) - c'(a^*)}{c(a^*) - c(f^*)}
\leq \frac{c'(f') - c'(a)}{c(a) - c(f)} 
\leq \frac{(1+\epsilon)c'(f) - c'(a)}{c(a) - (1+\epsilon)c(f)} 
\leq$$
$$\frac{4(1+\epsilon)\Gamma}{(\sqrt{\Gamma}-1)(\sqrt{\Gamma}-1-\epsilon)} \cdot \frac{c'(\sol \setminus \alg^{(i+1)}_{\rho, j})}{c(\alg^{(i+1)}_{\rho, j} \setminus \sol)}.$$

Rearranging terms and observing that $c'(\sol) \geq c'(\sol \setminus \alg^{(i+1)}_{\rho,j})$ gives:

\begin{align*}
c(\alg^{(i+1)}_{\rho, j}) - c(\alg^{(i+1)}_{\rho, j+1}) &= c(a^*) - c(f^*)\\
&\geq \frac{(\sqrt{\Gamma}-1)(\sqrt{\Gamma}-1-\epsilon)}{4(1+\epsilon)\Gamma} \cdot c(\alg^{(i+1)}_{\rho,j} \setminus \sol) \frac{c'(f^*)-c'(a^*)}{c'(\sol)}\\
&= \frac{(\sqrt{\Gamma}-1)(\sqrt{\Gamma}-1-\epsilon)}{4(1+\epsilon)\Gamma} \cdot c(\alg^{(i+1)}_{\rho,j} \setminus \sol) \frac{c'(\alg^{(i+1)}_{\rho,j+1})-c'(\alg^{(i+1)}_{\rho,j})}{c'(\sol)}.
\end{align*}

This in turn gives:

\begin{align*}
c(\alg^{(i+1)}_{\rho, 0}) - c(\alg^{(i+1)}_{\rho,J})&= \sum_{j = 0}^{J-1}[c(\alg^{(i+1)}_{\rho, j}) - c(\alg^{(i+1)}_{\rho, j+1})] \\
&\geq \sum_{j=0}^{J-1} \frac{(\sqrt{\Gamma}-1)(\sqrt{\Gamma}-1-\epsilon)}{4(1+\epsilon)\Gamma} \cdot c(\alg^{(i+1)}_{\rho,j} \setminus \sol) \frac{c'(\alg^{(i+1)}_{\rho, j+1})-c'(\alg^{(i+1)}_{\rho, j})}{c'(\sol)} \\
& \geq \frac{(\sqrt{\Gamma}-1)(\sqrt{\Gamma}-1-\epsilon)}{4(1+\epsilon)\Gamma} \sum_{j=0}^{J-1} [c(\alg^{(i+1)}_{\rho, j} \setminus \sol) - c(\sol \setminus \alg^{(i+1)}_{\rho, j})]\\
&\qquad\qquad \cdot \frac{c'(\alg^{(i+1)}_{\rho, j+1})-c'(\alg^{(i+1)}_{\rho, j})}{c'(\sol)} \\
&\geq \frac{(\sqrt{\Gamma}-1)(\sqrt{\Gamma}-1-\epsilon)}{4(1+\epsilon)\Gamma} \sum_{j=0}^{J-1} [c(\alg^{(i+1)}_{\rho, J} \setminus \sol) - c(\sol \setminus \alg^{(i+1)}_{\rho, J})]\\
&\qquad\qquad\cdot \frac{c'(\alg^{(i+1)}_{\rho, j+1})-c'(\alg^{(i+1)}_{\rho, j})}{c'(\sol)}\\
&\geq \frac{(4\Gamma - 1)(\sqrt{\Gamma}-1)(\sqrt{\Gamma}-1-\epsilon)}{16(1+\epsilon)\Gamma^2} c(\alg^{(i+1)}_{\rho, J} \setminus \sol) \\
&\qquad\qquad \cdot \sum_{j=0}^{J-1}  \frac{c'(\alg^{(i+1)}_{\rho, j+1})-c'(\alg^{(i+1)}_{\rho, j})}{c'(\sol)}\\
&=\frac{(4\Gamma - 1)(\sqrt{\Gamma}-1)(\sqrt{\Gamma}-1-\epsilon)}{16(1+\epsilon)\Gamma^2} c(\alg^{(i+1)}_{\rho, J} \setminus \sol)  \frac{c'(\alg^{(i+1)}_{\rho, J})-c'(\alg^{(i+1)}_{\rho, 0})}{c'(\sol)}\\
&\geq \frac{(4\Gamma - 1)(\sqrt{\Gamma}-1)(\sqrt{\Gamma}-1-\epsilon)\kappa}{16(1+\epsilon)\Gamma^2} c(\alg^{(i+1)}_{\rho, J} \setminus \sol).
\end{align*}
The third-to-last inequality is using Observation~\ref{obs:swap}, which implies $c(\alg^{(i+1)}_{\rho, j} \setminus \sol) - c(\sol \setminus \alg^{(i+1)}_{\rho, j})$ is decreasing with swaps. The second-to-last inequality is using the assumption $c(\alg^{(i+1)}_\rho \setminus \sol) > 4\Gamma \cdot c(\sol \setminus \alg^{(i+1)}_{\rho})$ in the statement the lemma. The last inequality uses the fact that the while loop on line~\ref{line:firstwhile} of \textsc{DoubleApprox} terminates because $c'(\alg^{(i+1)}_{\rho, J}) > (4\Gamma' + \kappa)\chi'$ (by the assumption that $c(\alg^{(i+1)}_{\rho, J}) > c(\alg^{(i)}) - \rho /2$), and lines~\ref{line:approx} and~\ref{line:secondwhile} of \textsc{DoubleApprox} give that  $c'(\alg^{(i+1)}_{\rho, 0}) \leq 4\Gamma' \chi'$.
\end{proof}

\begin{lemma}[Backward Phase Analysis]\label{lemma:backwardphase}
Fix any even $i+2$ in algorithm \textsc{DoubleApprox} and any value of $\rho$. Suppose all values of $\alg^{(i+2)}_{\rho}$ satisfy $c(\alg^{(i+2)}_{\rho} \setminus \sol) > 4\Gamma \cdot c(\sol \setminus \alg^{(i+2)}_{\rho})$. Let $T =$ $\frac{c'(\alg^{(i+1)}_{\rho}) - c'(\alg^{(i+2)}_{\rho})}{c'(\sol)}$. Then for 

$$\zeta' = \frac{4(1+\epsilon)\Gamma'}{(\sqrt{\Gamma'}-1)(\sqrt{\Gamma'}-1-\epsilon)(4\Gamma'-1)(4\Gamma-1)}, $$ 

the final values of $\alg^{(i+1)}_\rho, \alg^{(i+2)}_\rho$ satisfy

$$c(\alg^{(i+2)}_\rho) - c(\alg^{(i+1)}_\rho) \leq (e^{\zeta' T} - 1) \cdot c(\alg^{(i+1)}_\rho \setminus \sol).$$
\end{lemma}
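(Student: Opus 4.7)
The plan is to mirror the structure of the forward-phase analysis (Lemma~\ref{lemma:forwardphase}), but with the roles of $c$ and $c'$ swapped, and to convert the per-swap bound from Lemma~\ref{lemma:greedyswap} into a multiplicative rather than additive recursion so that the exponential $e^{\zeta' T}$ appears. Let $\alg^{(i+2)}_{\rho,j}$ denote the state after $j$ calls to \textsc{GreedySwap} in the backward phase, so $\alg^{(i+2)}_{\rho,0} = \alg^{(i+1)}_{\rho}$ and $\alg^{(i+2)}_{\rho,J} = \alg^{(i+2)}_{\rho}$. For swap $j$, let $\Delta c_j$ and $\Delta c'_j$ be the increase in $c$ and decrease in $c'$, respectively. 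Since \textsc{GreedySwap} is called with arguments $(\cdot, c', c, \chi, \cdot)$, Lemma~\ref{lemma:greedyswap} applied with $w=c'$, $w'=c$, and $\Gamma$ replaced by $\Gamma'$ (together with the Lemma~\ref{lemma:apxsp} approximation argument from the forward analysis, which costs a $(1+\epsilon)$ factor absorbed into the definition of $\zeta'$) yields
$$\Delta c_j \;\le\; \frac{4(1+\epsilon)\Gamma'}{(\sqrt{\Gamma'}-1)(\sqrt{\Gamma'}-1-\epsilon)} \cdot \frac{c(\sol \setminus \alg^{(i+2)}_{\rho,j-1})}{c'(\alg^{(i+2)}_{\rho,j-1} \setminus \sol)} \cdot \Delta c'_j.$$

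The key idea is to introduce the potential $z_j := c(\alg^{(i+2)}_{\rho,j} \setminus \sol) - c(\sol \setminus \alg^{(i+2)}_{\rho,j})$. By Observation~\ref{obs:swap}, $z_j - z_{j-1} = c(\alg^{(i+2)}_{\rho,j}) - c(\alg^{(i+2)}_{\rho,j-1}) = \Delta c_j$, so $z_j$ is monotonically nondecreasing during the backward phase. Now I estimate the two factors in the ratio. Using the lemma's hypothesis $c(\alg^{(i+2)}_{\rho,j-1} \setminus \sol) > 4\Gamma \cdot c(\sol \setminus \alg^{(i+2)}_{\rho,j-1})$, I get $z_{j-1} > (4\Gamma-1)\,c(\sol \setminus \alg^{(i+2)}_{\rho,j-1})$, hence $c(\sol \setminus \alg^{(i+2)}_{\rho,j-1}) \le z_{j-1}/(4\Gamma-1)$. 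For the denominator, the backward-phase invariant $c'(\alg^{(i+2)}_{\rho,j-1}) \ge 4\Gamma' \chi' \ge 4\Gamma'\, c'(\sol)$ implies $c'(\alg^{(i+2)}_{\rho,j-1} \setminus \sol) \ge c'(\alg^{(i+2)}_{\rho,j-1}) - c'(\sol) \ge (4\Gamma'-1)\, c'(\sol)$. Substituting both bounds, the per-swap inequality becomes
$$\Delta c_j \;\le\; \zeta' \cdot z_{j-1} \cdot \frac{\Delta c'_j}{c'(\sol)}, \qquad\text{i.e.,}\qquad z_j \;\le\; z_{j-1}\left(1 + \zeta' \cdot \frac{\Delta c'_j}{c'(\sol)}\right) \;\le\; z_{j-1}\, e^{\zeta' \Delta c'_j / c'(\sol)}.$$

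Telescoping the multiplicative inequality over $j = 1,\ldots,J$ and using $\sum_j \Delta c'_j = c'(\alg^{(i+1)}_{\rho}) - c'(\alg^{(i+2)}_{\rho}) = T\, c'(\sol)$ gives $z_J \le z_0 \cdot e^{\zeta' T}$, so
$$c(\alg^{(i+2)}_\rho) - c(\alg^{(i+1)}_\rho) \;=\; z_J - z_0 \;\le\; z_0 \bigl(e^{\zeta' T} - 1\bigr) \;\le\; c(\alg^{(i+1)}_\rho \setminus \sol)\bigl(e^{\zeta' T} - 1\bigr),$$
where the final inequality uses $z_0 \le c(\alg^{(i+1)}_\rho \setminus \sol)$.

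The main obstacle I anticipate is justifying that the per-swap bound from Lemma~\ref{lemma:greedyswap} genuinely applies to the swap \textsc{GreedySwap} actually returns: \textsc{GreedySwap} does not search over arbitrary paths in $\sol$ but over approximations found via Lemma~\ref{lemma:apxsp}, and the step size cutoff $\frac{\epsilon}{n}\chi'$ on line~\ref{line:secondswap} could in principle discard the swap guaranteed by Lemma~\ref{lemma:greedyswap}. This must be reconciled in the same manner as in the forward-phase proof, noting that $c(\alg^{(i+2)}_{\rho,j-1} \setminus \sol)/n^2$ dominates the cutoff $\frac{\epsilon}{n}\chi'$ under the invariants maintained during the backward phase. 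A secondary subtlety is that the ratio from Lemma~\ref{lemma:greedyswap} involves $c'(\alg \setminus \sol)$ in the denominator rather than $z_{j-1}^{c'} := c'(\alg \setminus \sol) - c'(\sol \setminus \alg)$, but replacing the former by the simpler and slightly larger bound $(4\Gamma'-1)c'(\sol)$ (which is free from $\alg$) is what allows the clean exponential telescoping; the $4\Gamma'-1$ factor is precisely where the corresponding term in the definition of $\zeta'$ enters.
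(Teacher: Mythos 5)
Your proof is correct in substance and follows essentially the same approach as the paper's: apply Lemma~\ref{lemma:greedyswap} with the roles of $c$ and $c'$ swapped, bound the per-swap ratio, introduce the potential $z_j = c(\alg^{(i+2)}_{\rho,j}\setminus\sol) - c(\sol\setminus\alg^{(i+2)}_{\rho,j})$ whose increments equal the $c$-increments (Observation~\ref{obs:swap}), and telescope. Your presentation is actually cleaner: the paper gets the same per-swap inequality $z_j \le z_{j-1}\left(1 + \zeta'\tau(j)\right)$ but then detours through a ``continuous timeline'' interpretation, rewriting the sum $\sum_j z_j\tau(j)$ as an integral $\int_0^{T}\Phi(t)\,dt$, proving $\Phi(t)\le e^{\zeta' t}z_0$ by induction with $1+x\le e^x$, and integrating. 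Your multiplicative telescoping $z_J \le z_0\prod_j e^{\zeta'\tau(j)} = z_0 e^{\zeta' T}$ reaches the same endpoint more directly and is a genuine simplification over the paper's exposition.

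The one place where your reasoning goes astray is the reconciliation you sketch for the step-size cutoff. In the backward phase, \textsc{GreedySwap} is called with $w=c'$, so the cutoff $\frac{\epsilon}{n}\chi'$ constrains the decrease in $c'$, not in $c$; your claim that ``$c(\alg^{(i+2)}_{\rho,j-1}\setminus\sol)/n^2$ dominates the cutoff'' both compares quantities in the wrong cost function and, in any case, a $1/n^2$-scale bound would not generically dominate a $1/n$-scale cutoff. The paper handles this differently and more simply: Observation~\ref{obs:rounding} ensures all $c'_e$ are already multiples of $\frac{\epsilon}{n}\chi'$, so any swap that strictly decreases $c'$ automatically decreases it by at least the cutoff, and hence \textsc{GreedySwap} never discards the swap Lemma~\ref{lemma:greedyswap} guarantees. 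This is a small repair and the rest of your argument is self-contained and correct, but you should make this fix rather than importing the forward-phase mechanism.
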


\begin{proof}
Because $c'(\alg^{(i+2)}_\rho) > 4\Gamma'\chi'$ in a backwards phase and $\chi' \geq c'(\sol)$,  by Lemma~\ref{lemma:greedyswap} whenever \textsc{GreedySwap} is called on $\alg^{(i+2)}_\rho$ in line~\ref{line:secondswap} of \textsc{DoubleApprox}, at least one swap is possible.  Since all edge costs are multiples of $\frac{\epsilon}{n}\chi'$, and  the last argument to \textsc{GreedySwap} is $\frac{\epsilon}{n}\chi'$ (which lower bounds the decrease in $c'(\alg^{(i+2)}_\rho)$ due to any improving swap), \textsc{GreedySwap} always makes a swap.

Let $\alg^{(i+2)}_{\rho, j}$ denote the value of $\alg^{(i+2)}$ after $j$ calls to \textsc{GreedySwap} on $\alg^{(i+2)}$, and let $J$ be the total number of calls of \textsc{GreedySwap} on $\alg^{(i+2)}$. Then $\alg^{(i+2)}_{\rho, 0}$ is the final value of $\alg^{(i+1)}$ and the final value of $\alg^{(i+2)}$ is $\alg^{(i+2)}_{\rho, J}$.  We want to show that
$$c(\alg^{(i+2)}_{\rho, J}) - c(\alg^{(i+2)}_{\rho, 0}) = \sum_{j = 0}^{J-1}[c(\alg^{(i+2)}_{\rho, j+1}) - c(\alg^{(i+2)}_{\rho, j})] \leq (e^{\zeta' T} - 1)  c(\alg^{(i+2)}_{\rho, 0} \setminus \sol).$$

We bound each $c(\alg^{(i+2)}_{j+1}) - c(\alg^{(i+2)}_{j})$ term using Lemma~\ref{lemma:greedyswap} and Lemma~\ref{lemma:apxsp}.  Since  $c'(\alg^{(i+2)}_\rho) > 4\Gamma' c'(\sol)$ in a backwards phase, by Lemma~\ref{lemma:greedyswap} we know there exists a swap between $a \in \alg^{(i+2)}_{\rho, j}$ and $f \in \sol$ such that

$$\frac{(1+\epsilon)c(f) - c(a)}{c'(a) - (1+\epsilon)c'(f)} \leq \frac{4(1+\epsilon)\Gamma'}{(\sqrt{\Gamma'}-1)(\sqrt{\Gamma'}-1-\epsilon)} \cdot \frac{c(\sol \setminus \alg^{(i+2)}_{\rho, j})}{c'(\alg^{(i+2)}_{\rho, j} \setminus \sol)}.$$ 

By Lemma~\ref{lemma:apxsp}, we know that when $G'$ is set to the value in $[c(f), (1+\epsilon) \cdot c(f)]$ in line 2 of in \textsc{GreedySwap}, the algorithm finds a path $f'$ between the endpoints of $f$ such that $c'(f') \leq (1+\epsilon) c'(f)$ and $c(f') \leq (1+\epsilon) c(f)$. Thus $(a, f') \in swaps$ and we get that the swap $(a^*, f^*)$ chosen by the $(j+1)$th call to \textsc{GreedySwap} satisfies:

\begin{align*}
\frac{c(f^*) - c(a^*)}{c'(a^*) - c'(f^*)}
&\leq \frac{c(f') - c(a)}{c'(a) - c'(f)} 
\leq \frac{(1+\epsilon)c(f) - c(a)}{c'(a) - (1+\epsilon)c'(f)} \\
&\leq
\frac{4(1+\epsilon)\Gamma'}{(\sqrt{\Gamma'}-1)(\sqrt{\Gamma'}-1-\epsilon)} \cdot \frac{c(\sol \setminus \alg^{(i+2)}_{\rho, j})}{c'(\alg^{(i+2)}_{\rho, j} \setminus \sol)}\\
&\leq\frac{4(1+\epsilon)\Gamma'}{(\sqrt{\Gamma'}-1)(\sqrt{\Gamma'}-1-\epsilon)(4\Gamma'-1)(4\Gamma)} \cdot \frac{c(\alg^{(i+2)}_{\rho, j} \setminus \sol)}{c'(\sol)}.\\
\end{align*}

The last inequality is derived using the assumption  $c(\alg^{(i+2)}_{\rho} \setminus \sol) > 4\Gamma \cdot c(\sol \setminus \alg^{(i+2)}_{\rho})$ in the statement of the lemma, as well as the fact that for all $j < J$,  $c'(\alg^{(i+2)}_{\rho, j}) \geq 4\Gamma c'(\sol) \implies c'(\alg^{(i+2)}_{\rho, j} \setminus \sol) \geq c'(\alg^{(i+2)}_{\rho, j}) - c'(\sol) \geq (4\Gamma'-1) c'(\sol)$.
This in turn gives:
\begin{align}
c(\alg^{(i+2)}_{\rho, J}) - c(\alg^{(i+2)}_{\rho, 0}) &= \sum_{j = 0}^{J-1}[c(\alg^{(i+2)}_{\rho, j+1}) - c(\alg^{(i+2)}_{\rho, j})] \nonumber \\
&= \sum_{j = 0}^{J-1}\frac{c(\alg^{(i+2)}_{\rho, j+1}) - c(\alg^{(i+2)}_{\rho, j})}{c'(\alg^{(i+2)}_{\rho, j}) - c'(\alg^{(i+2)}_{\rho, j+1})} \cdot [c'(\alg^{(i+2)}_{\rho, j}) - c'(\alg^{(i+2)}_{\rho, j+1})] \nonumber \\
&\leq  \frac{4(1+\epsilon)\Gamma'}{(\sqrt{\Gamma'}-1)(\sqrt{\Gamma'}-1-\epsilon)(4\Gamma'-1)(4\Gamma)} \sum_{j = 0}^{J-1} [c(\alg^{(i+2)}_{\rho, j} \setminus \sol)] \nonumber\\
&\qquad\qquad\cdot \frac{c'(\alg^{(i+2)}_{\rho, j}) - c'(\alg^{(i+2)}_{\rho, j+1})}{c'(\sol)} \nonumber \\
&\leq \frac{4(1+\epsilon)\Gamma'}{(\sqrt{\Gamma'}-1)(\sqrt{\Gamma'}-1-\epsilon)(4\Gamma'-1)(4\Gamma-1)} \nonumber\\
&\qquad\qquad \cdot \sum_{j = 0}^{J-1}  [c(\alg^{(i+2)}_{\rho, j} \setminus \sol) - c(\sol \setminus \alg^{(i+2)}_{\rho, j})] \nonumber\\
&\qquad\qquad  \cdot\frac{c'(\alg^{(i+2)}_{\rho, j}) - c'(\alg^{(i+2)}_{\rho, j+1})}{c'(\sol)} \nonumber \\
&= \zeta' \sum_{j = 0}^{J-1} [c(\alg^{(i+2)}_{\rho, j} \setminus \sol) - c(\sol \setminus \alg^{(i+2)}_{\rho, j})] \cdot \frac{c'(\alg^{(i+2)}_{\rho, j}) - c'(\alg^{(i+2)}_{\rho, j+1})}{c'(\sol)}.\label{eq:swapbound2}
\end{align}
The last inequality is proved using the assumption $c(\alg^{(i+2)}_{\rho} \setminus \sol) > 4\Gamma \cdot c(\sol \setminus \alg^{(i+2)}_{\rho})$ in the statement of the lemma, which implies
\begin{align*}
c(\alg^{(i+2)}_{\rho, j} \setminus \sol) &= \frac{4\Gamma}{4\Gamma-1}c(\alg^{(i+2)}_{\rho, j} \setminus \sol) - \frac{1}{4\Gamma-1}c(\alg^{(i+2)}_{\rho, j} \setminus \sol)\\
&< \frac{4\Gamma}{4\Gamma-1}c(\alg^{(i+2)}_{\rho, j} \setminus \sol) - \frac{4\Gamma }{4\Gamma-1}c(\sol \setminus \alg^{(i+2)}_{\rho, j}).
\end{align*}
It now suffices to show 

$$\sum_{j = 0}^{J-1} [c(\alg^{(i+2)}_{\rho, j} \setminus \sol) - c(\sol \setminus \alg^{(i+2)}_{\rho, j})] \cdot \frac{c'(\alg^{(i+2)}_{\rho, j}) - c'(\alg^{(i+2)}_{\rho, j+1})}{c'(\sol)} \leq $$
$$\frac{e^{\zeta' T} - 1}{\zeta'} c(\alg^{(i+2)}_{\rho, 0} \setminus \sol).$$ 

To do so, we view the series of swaps as occurring over a continuous timeline, where for $j = 0, 1, \ldots J-1$ the $(j+1)$th swap takes time $\tau(j) = \frac{c'(\alg^{(i+2)}_{\rho, j}) - c'(\alg^{(i+2)}_{\rho, j+1})}{c'(\sol)}$, i.e., occurs from time $\sum_{j' < j} \tau(j')$ to time $\sum_{j' \leq j} \tau(j')$. The total time taken to perform all swaps in the sum is the total decrease in $c'$ across all swaps, divided by $c'(\sol)$, i.e., exactly $T$. Using this definition of time, let $\Phi(t)$ denote $c(\alg^{(i+2)}_{\rho, j} \setminus \sol) - c(\sol \setminus \alg^{(i+2)}_{\rho, j})$ for the value of $j$ satisfying $\Phi(t) \in [\sum_{j' < j} \tau(j'), \sum_{j' \leq j} \tau(j'))$. Using this definition, we get: 

$$ \sum_{j = 0}^{J-1} [c(\alg^{(i+2)}_{\rho, j} \setminus \sol) - c(\sol \setminus \alg^{(i+2)}_{\rho, j})] \cdot \frac{c'(\alg^{(i+2)}_{\rho, j}) - c'(\alg^{(i+2)}_{\rho, j+1})}{c'(\sol)} = \int_0^{\rightarrow T} \Phi(t)\ dt.$$

We conclude by claiming $\Phi(t) \leq e^{\zeta' t} c(\alg^{(i+2)}_{\rho, 0} \setminus \sol)$. Given this claim, we get:

$$\int_0^{\rightarrow T} \Phi(t)\ dt \leq c(\alg^{(i+2)}_{\rho, 0} \setminus \sol) \int_0^{\rightarrow T} e^{\zeta' t}\ dt = \frac{e^{\zeta' T} - 1}{\zeta'}c(\alg^{(i+2)}_{\rho, 0} \setminus \sol).$$

Which completes the proof of the Lemma. We now focus on proving the claim. Since $\Phi(t)$ is fixed in the interval $[\sum_{j' < j} \tau(j'), \sum_{j' \leq j} \tau(j'))$, it suffices to prove the claim only for $t$ which are equal to $\sum_{j' < j} \tau(j')$ for some $j$, so we proceed by induction on $j$. The claim clearly holds for $j = 0$ since $\sum_{j' < 0} \tau(j') = 0$ and $\Phi(0) = c(\alg^{(i+2)}_{\rho, 0} \setminus \sol) - c(\sol \setminus \alg^{(i+2)}_{\rho, 0}) \leq c(\alg^{(i+2)}_{\rho, 0} \setminus \sol)$. 

Assume that for $t' = \sum_{j' < j} \tau(j')$, we have $\Phi(t') \leq e^{\zeta' t'} c(\alg^{(i+2)}_{\rho, 0} \setminus \sol)$. For $t'' = t' + \tau(j)$, by induction we can prove the claim by showing  $\Phi(t'') \leq e^{\zeta' \tau(j)} \Phi(t')$. 

To show this, we consider the quantity 

\begin{align*}
\Phi(t'') - \Phi(t') &= 
[c(\alg^{(i+2)}_{\rho, j+1} \setminus \sol) - c(\sol \setminus \alg^{(i+2)}_{\rho, j+1})] - [c(\alg^{(i+2)}_{\rho, j} \setminus \sol) - c(\sol \setminus \alg^{(i+2)}_{\rho, j})] \\
&= [c(\alg^{(i+2)}_{\rho, j+1} \setminus \sol) - c(\alg^{(i+2)}_{\rho, j} \setminus \sol)] + [c(\sol \setminus \alg^{(i+2)}_{\rho, j}) - c(\sol \setminus \alg^{(i+2)}_{\rho, j+1})].
\end{align*}

By Observation~\ref{obs:swap} and reusing the bound in \eqref{eq:swapbound2}, we have:

\begin{align*}
\Phi(t'') - \Phi(t') &= c(\alg^{(i+2)}_{\rho,j+1}) - c(\alg_{\rho,j}^{(i+2)}) \\
&\leq  \zeta' \frac{[c(\alg^{(i+2)}_{\rho, j} \setminus \sol) - c(\sol \setminus \alg^{(i+2)}_{\rho, j})]}{c'(\sol)} [c'(\alg^{(i+2)}_{\rho, j}) - c'(\alg^{(i+2)}_{\rho, j+1})]\\
&= \zeta' \cdot [c(\alg^{(i+2)}_{\rho, j} \setminus \sol) - c(\sol \setminus \alg^{(i+2)}_{\rho, j})] \cdot \tau(j) = \zeta' \cdot \Phi(t') \cdot \tau(j).
\end{align*}

Rearranging terms we have:
$$\Phi(t'') \leq \left(1 + \zeta' \cdot \tau(j)\right)\Phi(t')\leq e^{\zeta' \tau(j)} \Phi(t'),$$
where we use the inequality $1+x \leq e^x$. This completes the proof of the claim.
\end{proof}

\begin{corollary}\label{cor:swapgain}
Fix any positive even value of $i+2$ in algorithm \textsc{DoubleApprox}, and let $\rho$ be the power of $(1+\epsilon)$ times $\min_e c_e$ such that $\rho \in [c(\alg^{(i)} \setminus \sol), (1+\epsilon)c(\alg^{(i)} \setminus \sol)]$.  Suppose all values of $\alg^{(i+1)}_\rho$ and the final value of $\alg^{(i)}$ in \textsc{DoubleApprox} satisfy $c(\alg^{(i+1)}_\rho \setminus \sol) > 4\Gamma \cdot c(\sol \setminus \alg^{(i+1)}_{\rho})$ and $c(\alg^{(i)} \setminus \sol) > 4\Gamma \cdot c(\sol \setminus \alg^{(i)})$. Then for $0 < \epsilon < 2/3 -5/12\Gamma$ and $\zeta'$ as defined in Lemma~\ref{lemma:backwardphase}, the final values of $\alg^{(i+2)}, \alg^{(i)}$ satisfy

$$c(\alg^{(i)}) - c(\alg^{(i+2)}) \geq$$
$$\left[\min\left\{\frac{4\Gamma - 1}{8\Gamma} ,\frac{(4\Gamma - 1)(\sqrt{\Gamma}-1)(\sqrt{\Gamma}-1-\epsilon)\kappa}{16(1+\epsilon)\Gamma^2}\right\} - (e^{\zeta' (4\Gamma'+\kappa+1+\epsilon)} - 1)\right] \cdot c(\alg^{(i+1)}_\rho \setminus \sol).$$

\end{corollary}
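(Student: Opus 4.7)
The plan is to obtain the bound as a direct combination of Lemma~\ref{lemma:forwardphase} and Lemma~\ref{lemma:backwardphase} applied to the specific $\rho$ in the hypothesis, together with the fact that \textsc{DoubleApprox} defines $\alg^{(i+2)}$ by taking the $\argmin$ over $\rho$-guesses. Concretely, I would begin by noting that $c(\alg^{(i+2)}) \leq c(\alg^{(i+2)}_\rho)$ for the particular $\rho$ in the corollary, and then telescoping
\[
c(\alg^{(i)}) - c(\alg^{(i+2)}) \;\geq\; \bigl[c(\alg^{(i)}) - c(\alg^{(i+1)}_\rho)\bigr] \;-\; \bigl[c(\alg^{(i+2)}_\rho) - c(\alg^{(i+1)}_\rho)\bigr].
\]
The hypotheses of the corollary (the specific $\rho$, the $4\Gamma$-difference approximation failure throughout both phases, and the range $0<\epsilon < 2/3 - 5/12\Gamma$) are exactly the hypotheses of the two lemmas, so Lemma~\ref{lemma:forwardphase} lower-bounds the first bracketed term by $M\cdot c(\alg^{(i+1)}_\rho\setminus\sol)$ (where $M$ is the min expression in the corollary), and Lemma~\ref{lemma:backwardphase} upper-bounds the second bracketed term by $(e^{\zeta' T}-1)\cdot c(\alg^{(i+1)}_\rho\setminus\sol)$, where $T = (c'(\alg^{(i+1)}_\rho)-c'(\alg^{(i+2)}_\rho))/c'(\sol)$.

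The step that requires actual work, and which I expect to be the main obstacle, is bounding $T$ by $4\Gamma' + \kappa + 1 + \epsilon$. The forward phase's while loop on line~\ref{line:firstwhile} tests $c'(\alg^{(i+1)}_\rho)\leq (4\Gamma'+\kappa)\chi'$ at the top, so just before the final forward swap the value is at most $(4\Gamma'+\kappa)\chi'$. The final swap adds an approximate shortest path $f$ returned by \textsc{GreedySwap}, for which $c'(f) \leq (1+\epsilon) W' \leq (1+\epsilon)\chi'$ by the geometric grid over $W'$ paired with the $(1+\epsilon)$-approximation of Lemma~\ref{lemma:apxsp}. Thus $c'(\alg^{(i+1)}_\rho) \leq (4\Gamma'+\kappa+1+\epsilon)\chi'$, and combined with $c'(\alg^{(i+2)}_\rho)\geq 0$ and $\chi' \leq (1+\epsilon)c'(\sol)$, this yields $T \leq (4\Gamma'+\kappa+1+\epsilon)(1+\epsilon)$; the stray $(1+\epsilon)$ factor is absorbed by an initial rescaling of $\epsilon$, so we may use $T \leq 4\Gamma'+\kappa+1+\epsilon$ in the final expression.

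Since $e^{\zeta' T}-1$ is monotone in $T$, substituting this bound into the backward-phase inequality and combining with the forward-phase inequality gives
\[
c(\alg^{(i)}) - c(\alg^{(i+2)}) \;\geq\; \Bigl[\,M - (e^{\zeta'(4\Gamma'+\kappa+1+\epsilon)}-1)\,\Bigr] \cdot c(\alg^{(i+1)}_\rho\setminus\sol),
\]
which is exactly the claimed bound. The only subtleties beyond plugging the two lemmas together are (i) translating the forward-exit and backward-exit thresholds into the clean bound on $T$ above, and (ii) respecting the $\argmin$ definition of $\alg^{(i+2)}$ when it is not necessarily equal to $\alg^{(i+2)}_\rho$ for our distinguished $\rho$; both are handled by the observations recorded in the first two paragraphs.
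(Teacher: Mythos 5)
Your proposal is correct and follows essentially the same route as the paper's proof: use the $\argmin$ to pass from $\alg^{(i+2)}$ to $\alg^{(i+2)}_\rho$, decompose $c(\alg^{(i)}) - c(\alg^{(i+2)}_\rho)$ into the forward-phase and backward-phase contributions, bound $T$ by noting that the $c'$-threshold $(4\Gamma'+\kappa)\chi'$ can be overshot by at most one swap's worth of $c'$-increase (roughly $(1+\epsilon)\chi'$), and absorb the stray $(1+\epsilon)$ factors by a rescaling of $\epsilon$, then plug into Lemmas~\ref{lemma:forwardphase} and~\ref{lemma:backwardphase}. The paper presents the telescoping as an exact identity for $\alg^{(i+2)}_\rho$ rather than the one-sided inequality you write, but this is a cosmetic difference and the argument is the same.
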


\begin{proof}
It suffices to lower bound $c(\alg^{(i)}) - c(\alg^{(i+2)}_\rho)$ for this value of $\rho$, since $c(\alg^{(i)}) - c(\alg^{(i+2)})$ must be at least this large. After rescaling $\epsilon$ appropriately, we have
$$c'(\alg^{(i+1)}_\rho) - c'(\alg^{(i+2)}_\rho) \leq c'(\alg^{(i+1)}_\rho) \leq (4\Gamma' + \kappa + 1 + \epsilon)c'(\sol),$$
because the algorithm can increase its cost with respect to $c'$ by at most $(1+\epsilon)c'(\sol)$ in any swap in the forward phase (by line~\ref{line:increasebound} of \textsc{GreedySwap}, which bounds the increase $w'(\hat{f}) \leq W \leq (1+\epsilon)w'(\sol)$), so it exceeds the threshold $(4\Gamma'+\kappa)\chi' \leq (4\Gamma'+\kappa)(1+\epsilon)c'(\sol)$ on line~\ref{line:secondwhile} of \textsc{DoubleApprox} by at most this much. Then applying Lemma~\ref{lemma:forwardphase} to $c(\alg^{(i)}) - c(\alg^{(i+1)}_\rho)$ and Lemma~\ref{lemma:backwardphase} to $c(\alg^{(i+1)}_\rho) - c(\alg^{(i+2)}_\rho)$ (using $T \leq 4\Gamma' + \kappa + 1 +\epsilon$) gives:

$$c(\alg^{(i)}) - c(\alg^{(i+2)}_\rho) = [c(\alg^{(i)}) - c(\alg^{(i+1)}_\rho)] + [c(\alg^{(i+1)}_\rho) - c(\alg^{(i+2)}_\rho)]$$
$$\geq \left[\min\left\{\frac{4\Gamma - 1}{8\Gamma} ,\frac{(4\Gamma - 1)(\sqrt{\Gamma}-1)(\sqrt{\Gamma}-1-\epsilon)\kappa}{16(1+\epsilon)\Gamma^2}\right\} - (e^{\zeta' (4\Gamma'+\kappa+1+\epsilon)} - 1)\right] \cdot c(\alg^{(i+1)}_\rho \setminus \sol).$$
\end{proof}

\begin{lemma}\label{lemma:algconverges}
Suppose $\Gamma, \Gamma', \kappa$, and $\epsilon$ are chosen such that for $\zeta'$ as defined in Lemma~\ref{lemma:backwardphase}, 
$$\min\left\{\frac{4\Gamma - 1}{8\Gamma} ,\frac{(4\Gamma - 1)(\sqrt{\Gamma}-1)(\sqrt{\Gamma}-1-\epsilon)\kappa}{16(1+\epsilon)\Gamma^2}\right\} - (e^{\zeta' (4\Gamma'+\kappa+1+\epsilon)} - 1) > 0,$$
and $0 < \epsilon < 2/3 -5/12\Gamma$. Let $\eta$ equal
$$\frac{\min\left\{\frac{4\Gamma - 1}{8\Gamma} ,\frac{(4\Gamma - 1)(\sqrt{\Gamma}-1)(\sqrt{\Gamma}-1-\epsilon)\kappa}{16(1+\epsilon)\Gamma^2}\right\} - (e^{\zeta' (4\Gamma'+\kappa+1+\epsilon)} - 1)}{1 + \frac{4\Gamma - 1}{4\Gamma} (e^{\zeta' (4\Gamma'+\kappa+1+\epsilon)} - 1)}.$$
Assume $\eta > 0$ and let $I = 2(\lceil  \log \frac{n \max_e c_e}{\min_e c_e} / \log (1+\eta) \rceil+1)$. Then there exists some intermediate value $\alg^*$ assigned to $\alg^{(i)}_\rho$ by the algorithm for some $i \leq I$ and $\rho$ such that $c(\alg^* \setminus \sol) \leq 4 \Gamma c(\sol \setminus \alg^*)$ and $c'(\alg^*) \leq (4\Gamma' +\kappa+ 1 + \epsilon) c'(\sol)$.
\end{lemma}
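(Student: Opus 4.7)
I argue by contradiction: suppose no intermediate iterate $\alg^{(i)}_\rho$ with $i\leq I$ satisfies both conditions. The second condition (on $c'$) is in fact automatic for every iterate the algorithm ever produces. Any $\alg^{(i+1)}_\rho$ that exits the forward while-loop on line~\ref{line:firstwhile} does so after one swap pushes $c'$ above the threshold $(4\Gamma'+\kappa)\chi'$, and that single swap raises $c'$ by at most $(1+\epsilon)c'(\sol)$ (the cap enforced by line~\ref{line:increasebound} of \textsc{GreedySwap}), yielding $c'(\alg^{(i+1)}_\rho)\leq(4\Gamma'+\kappa+1+\epsilon)c'(\sol)$ after rescaling $\epsilon$. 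Any backward-phase output $\alg^{(i+2)}_\rho$ has $c'(\alg^{(i+2)}_\rho)\leq 4\Gamma'\chi'$, which is strictly smaller, and the initial $\alg^{(0)}$ is a strictly-less-than-$4\Gamma'$ approximation w.r.t.~$c'$ by line~\ref{line:approx}. So under the contradiction hypothesis every intermediate iterate must violate the first condition: $c(\alg^{(j)} \setminus \sol) > 4\Gamma\cdot c(\sol \setminus \alg^{(j)})$ for every $\alg^{(j)}$ and every $\alg^{(j)}_\rho$ the algorithm considers.

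Next I extract a multiplicative decay in a suitable potential per forward/backward pair. Let $\Phi(A):=c(A \setminus \sol)-c(\sol \setminus A)$; Observation~\ref{obs:swap} gives $c(A)-c(A')=\Phi(A)-\Phi(A')$, so since $c(\alg)=\Phi(\alg)+c(\sol)$, the argmin over $\rho$ that defines $\alg^{(i+2)}$ also minimizes $\Phi$. For each even $i$, let $\rho_i$ be the power of $(1+\epsilon)\min_e c_e$ lying in $[c(\alg^{(i)} \setminus \sol),(1+\epsilon)c(\alg^{(i)} \setminus \sol)]$. Corollary~\ref{cor:swapgain} applied to this $\rho_i$ gives
\[
\Phi(\alg^{(i)})-\Phi(\alg^{(i+2)}) \;\geq\; (\mu-\tau)\cdot c(\alg^{(i+1)}_{\rho_i} \setminus \sol),
\]
where $\mu$ is the $\min\{\cdot,\cdot\}$ in the numerator of $\eta$ and $\tau=e^{\zeta'(4\Gamma'+\kappa+1+\epsilon)}-1$. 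To turn this into a multiplicative decay I would combine (i) the two-sided sandwich $\tfrac{4\Gamma-1}{4\Gamma}\,c(\alg^{(i+1)}_{\rho_i} \setminus \sol) < \Phi(\alg^{(i+1)}_{\rho_i}) \leq c(\alg^{(i+1)}_{\rho_i} \setminus \sol)$ coming from the bad condition on the forward-phase output, (ii) Lemma~\ref{lemma:backwardphase}'s bound $\Phi(\alg^{(i+2)}_{\rho_i})-\Phi(\alg^{(i+1)}_{\rho_i}) \leq \tau\cdot c(\alg^{(i+1)}_{\rho_i} \setminus \sol)$, and (iii) the argmin inequality $\Phi(\alg^{(i+2)}) \leq \Phi(\alg^{(i+2)}_{\rho_i})$. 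Arranging these carefully yields $c(\alg^{(i+1)}_{\rho_i} \setminus \sol) \geq \Phi(\alg^{(i+2)})/\bigl(1+\tfrac{4\Gamma-1}{4\Gamma}\tau\bigr)$, and substituting into the displayed inequality gives exactly $\Phi(\alg^{(i+2)}) \leq \Phi(\alg^{(i)})/(1+\eta)$.

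\textbf{Iterating and concluding.} The bad condition also gives $\Phi(\alg^{(i)}) > \tfrac{4\Gamma-1}{4\Gamma}\min_e c_e>0$ and $\Phi(\alg^{(i)}) \leq c(\alg^{(i)} \setminus \sol) \leq n\max_e c_e$, so $\Phi$ can undergo a factor-$(1+\eta)$ shrinkage at most $\lceil \log(n\max_e c_e/\min_e c_e)/\log(1+\eta)\rceil+1$ times before crossing its positive lower bound. Since the algorithm advances $i$ by $2$ per outer iteration, the choice of $I$ in the statement forces a contradiction by iteration $I$, producing the promised $\alg^*$. I expect the main obstacle to be the tight algebra in the previous paragraph that yields the exact denominator $1+\tfrac{4\Gamma-1}{4\Gamma}\tau$ in $\eta$: a naive substitution using only $\Phi\leq c(\cdot \setminus \sol)$ gives the weaker factor $(\mu-\tau)/(1+\tau)$, and sharpening it to the stated $\eta$ requires fully exploiting the two-sided sandwich together with the argmin step and the fact that both $\alg^{(i+1)}_{\rho_i}$ and $\alg^{(i+2)}_{\rho_i}$ are assumed bad.
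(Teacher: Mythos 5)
Your proposal follows essentially the same line of argument as the paper: proceed by contradiction, observe that the $c'$-bound is automatic for every iterate the algorithm produces, define the potential $\Phi(\cdot)=c(\cdot\setminus\sol)-c(\sol\setminus\cdot)$, use Observation~\ref{obs:swap} to translate decreases in $c(\alg)$ into decreases in $\Phi$, apply Corollary~\ref{cor:swapgain} with the particular $\rho_i$ that sandwiches $c(\alg^{(i)}\setminus\sol)$, derive a multiplicative decay $\Phi(i+2)\leq(1+\eta)^{-1}\Phi(i)$, and reach a contradiction against the floor $\Phi(I)>\frac{4\Gamma-1}{4\Gamma}\min_e c_e$ forced by the ``bad'' hypothesis and $c(\alg^{(I)}\setminus\sol)\geq\min_e c_e$.

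The part you flag as uncertain --- producing the exact denominator $1+\frac{4\Gamma-1}{4\Gamma}\tau$ where $\tau=e^{\zeta'(4\Gamma'+\kappa+1+\epsilon)}-1$ --- is in fact where the paper's own derivation goes wrong. The paper's chain reads, in effect,
$$\tau\,c(\alg^{(i+1)}_\rho\setminus\sol) \;<\; \tfrac{4\Gamma-1}{4\Gamma}\,\tau\,\Phi(\alg^{(i+1)}_\rho),$$
but this requires $c(\alg^{(i+1)}_\rho\setminus\sol)<\frac{4\Gamma-1}{4\Gamma}\Phi(\alg^{(i+1)}_\rho)$, which is impossible since $\Phi(\alg^{(i+1)}_\rho)\leq c(\alg^{(i+1)}_\rho\setminus\sol)$ always and $\frac{4\Gamma-1}{4\Gamma}<1$. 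The bad hypothesis only gives the reverse-direction bound $\Phi(\alg^{(i+1)}_\rho)>\frac{4\Gamma-1}{4\Gamma}c(\alg^{(i+1)}_\rho\setminus\sol)$, i.e., $c(\alg^{(i+1)}_\rho\setminus\sol)<\frac{4\Gamma}{4\Gamma-1}\Phi(\alg^{(i+1)}_\rho)$, and the sign in the ratio has been flipped. The tightest correct substitution is exactly the ``naive'' one you describe: $\Phi(i+2)\leq\Phi(\alg^{(i+1)}_\rho)+\tau\,c(\alg^{(i+1)}_\rho\setminus\sol)\leq(1+\tau)\,c(\alg^{(i+1)}_\rho\setminus\sol)$, giving denominator $1+\tau$ rather than $1+\frac{4\Gamma-1}{4\Gamma}\tau$. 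So you should not try to ``sharpen'' to the paper's stated $\eta$; instead, replace $\eta$ by $(\mu-\tau)/(1+\tau)$ (using your notation $\mu$ for the $\min\{\cdot,\cdot\}$), which is still positive under \eqref{eq:maincond}. With that substitution, and after adjusting the ceiling in $I$ to account for the actual target threshold $\frac{4\Gamma-1}{4\Gamma}\min_e c_e$ rather than $\min_e c_e$, your argument closes. This changes only the numerical constants propagated into Theorem~\ref{thm:steiner} and not the qualitative statement; the approach, the use of Corollary~\ref{cor:swapgain}, Lemma~\ref{lemma:backwardphase}, Observation~\ref{obs:swap}, and the potential-function contradiction are all as in the paper.
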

\begin{proof}
Let $\Phi(i) := c(\alg^{(i)} \setminus \sol) - c(\sol \setminus \alg^{(i)})$ for even $i$. Assume that the lemma is false.   Since algorithm \textsc{DoubleApprox} guarantees that $c'(\alg^{(i)}_{\rho}) \leq (4\Gamma' +\kappa+ 1 + \epsilon) c'(\sol)$, if the lemma is false it must be that for all $i$ and $\rho$, $c(\alg^{(i)}_{\rho} \setminus \sol) > 4 \Gamma c(\sol \setminus \alg^{(i)}_{\rho})$. By Corollary~\ref{cor:swapgain}, and the assumption $\min\left\{\frac{4\Gamma - 1}{8\Gamma} ,\frac{(4\Gamma - 1)(\sqrt{\Gamma}-1)(\sqrt{\Gamma}-1-\epsilon)\kappa}{16(1+\epsilon)\Gamma^2}\right\} - (e^{\zeta' (4\Gamma'+\kappa+1+\epsilon)} - 1) > 0$ in the statement of this lemma, for all $i$  $c(\alg^{(i)}) < c(\alg^{(i-2)})$, so the while loop on Line~\ref{line:mainwhile} of \textsc{DoubleApprox} never breaks. This means that for all even $i \leq I$, $\alg^{(i)}$ is assigned a value in \textsc{DoubleApprox}. We will show that this implies that for the final value of $\alg^{(I)}$, $\Phi(I) = c(\alg^{(I)} \setminus \sol) - c(\sol \setminus \alg^{(I)}) < \frac{4\Gamma - 1}{4\Gamma} \min_e c_e$.  In the proof of Lemma~\ref{lemma:backwardphase}, we showed how the inequality $c(\alg^{(I)} \setminus \sol) > 4 \Gamma c(\sol \setminus \alg^{(I)})$ could be used to prove
$c(\alg^{(I)} \setminus \sol) - c(\sol \setminus \alg^{(I)}) > \frac{4\Gamma - 1}{4\Gamma} c(\alg^{(I)} \setminus \sol).$
The value of $c(\alg^{(I)} \setminus \sol)$ must be positive (otherwise $c(\alg^{(I)} \setminus \sol) \leq 4\Gamma c(\sol \setminus \alg^{(I)})$ trivially), and hence it must be at least $\min_e c_e$.  These two inequalities conflict, which implies a contradiction.  Hence the lemma must be true.

We now analyze how the quantity $\Phi(i)$ changes under the assumption that the lemma is false. Of course $\Phi(0) \leq n \max_e c_e$. Observation~\ref{obs:swap} gives that $\Phi(i) - \Phi(i+2)$ is exactly equal to $c(\alg^{(i)}) - c(\alg^{(i+2)})$. For the value of $\rho$ such that $\rho \in [c(\alg^{(i)} \setminus \sol), (1+\epsilon) c(\alg^{(i)} \setminus \sol)]$, by Corollary~\ref{cor:swapgain} and the assumption that the lemma is false, for even $i$ we have 

$$\Phi(i) - \Phi(i+2) $$
$$\geq \left[\min\left\{\frac{4\Gamma - 1}{8\Gamma} ,\frac{(4\Gamma - 1)(\sqrt{\Gamma}-1)(\sqrt{\Gamma}-1-\epsilon)\kappa}{16(1+\epsilon)\Gamma^2}\right\} - (e^{\zeta' (4\Gamma'+\kappa+1+\epsilon)} - 1)\right] \cdot c(\alg^{(i+1)}_\rho \setminus \sol) $$
$$\geq \left[\min\left\{\frac{4\Gamma - 1}{8\Gamma} ,\frac{(4\Gamma - 1)(\sqrt{\Gamma}-1)(\sqrt{\Gamma}-1-\epsilon)\kappa}{16(1+\epsilon)\Gamma^2}\right\} - (e^{\zeta' (4\Gamma'+\kappa+1+\epsilon)} - 1)\right] $$
\begin{equation}\label{eq:algconverges}\cdot [c(\alg^{(i+1)}_\rho \setminus \sol) - c(\sol \setminus \alg^{(i+1)}_\rho)].
\end{equation}
Lemma~\ref{lemma:backwardphase} (using the proof from Corollary~\ref{cor:swapgain} that $T \leq 4\Gamma' + \kappa + 1 +\epsilon$), Observation~\ref{obs:swap}, and the inequality $c(\alg^{(i+1)}_{\rho} \setminus \sol) - c(\sol \setminus \alg^{(i+1)}_{\rho}) > \frac{4\Gamma - 1}{4\Gamma} c(\alg^{(i+1)}_{\rho} \setminus \sol)$ give:

\begin{align*}
&\Phi(i+2) - [c(\alg^{(i+1)}_\rho \setminus \sol) - c(\sol \setminus \alg^{(i+1)}_\rho)] \\
\leq &(e^{\zeta' (4\Gamma'+\kappa+1+\epsilon)} - 1)]) c(\alg^{(i+1)}_\rho \setminus \sol) \\
<& \frac{4\Gamma - 1}{4\Gamma} (e^{\zeta' (4\Gamma'+\kappa+1+\epsilon)} - 1)]) [c(\alg^{(i+1)}_\rho \setminus \sol) - c(\sol \setminus \alg^{(i+1)})]\\
\implies &[c(\alg^{(i+1)}_\rho \setminus \sol) - c(\sol \setminus \alg^{(i+1)}_\rho)] > \frac{1}{1 + \frac{4\Gamma - 1}{4\Gamma} (e^{\zeta' (4\Gamma'+\kappa+1+\epsilon)} - 1))}\Phi(i+2).
\end{align*}

Plugging this into \eqref{eq:algconverges} gives:

$$\Phi(i+2) < \left(1 +  \frac{\min\{\frac{4\Gamma - 1}{8\Gamma} ,\frac{(4\Gamma - 1)(\sqrt{\Gamma}-1)(\sqrt{\Gamma}-1-\epsilon)\kappa}{16(1+\epsilon)\Gamma^2}\} - (e^{\zeta' (4\Gamma'+\kappa+1+\epsilon)} - 1)}{1 + \frac{4\Gamma - 1}{4\Gamma} (e^{\zeta' (4\Gamma+\kappa+1+\epsilon)} - 1)}\right)^{-1} \Phi(i) $$
$$= (1+\eta)^{-1}\Phi(i).$$

Applying this inductively gives:

$$\Phi(i) \leq (1+\eta)^{-i/2} \Phi(0) \leq (1+\eta)^{-i/2} n \max_e c_e.$$

Plugging in $i = I =  2(\lceil  \log \frac{n \max_e c_e}{\min_e c_e} / \log (1+\eta) \rceil+1)$ gives $\Phi(I) \leq (1 + \eta)^{-1} \min_e c_e < \min_e c_e$ as desired.
\end{proof}

\begin{proof}[Proof of Lemma~\ref{lemma:differenceapprox2}]
If we have  $\chi \in [c(\sol), (1+\epsilon)\cdot c(\sol)]$ and $\chi' \in [c'(\sol), (1+\epsilon)\cdot c'(\sol)]$,
and the $c_e'$ values are multiples of $\frac{\epsilon}{n}\chi'$, then the conditions of \textsc{DoubleApprox} are met. As long as \eqref{eq:maincond} holds, that is:

\begin{equation*}
\min\left\{\frac{4\Gamma - 1}{8\Gamma} ,\frac{(4\Gamma - 1)(\sqrt{\Gamma}-1)(\sqrt{\Gamma}-1-\epsilon)\kappa}{16(1+\epsilon)\Gamma^2}\right\} - (e^{\zeta' (4\Gamma'+\kappa+1+\epsilon)} - 1) > 0,\tag{\ref{eq:maincond}}
\end{equation*}
then we have $\eta > 0$ in Lemma~\ref{lemma:algconverges}, thus giving the approximation guarantee in Lemma~\ref{lemma:differenceapprox2}. For any positive $\epsilon, \kappa, \Gamma'$, there exists a sufficiently large value of $\Gamma$ for \eqref{eq:maincond} to hold, since as $\Gamma \rightarrow \infty$, we have $\zeta' \rightarrow 0$, $(e^{\zeta' (4\Gamma'+\kappa+1+\epsilon)} - 1) \rightarrow 0$, and $\min\left\{\frac{4\Gamma - 1}{8\Gamma} ,\frac{(4\Gamma - 1)(\sqrt{\Gamma}-1)(\sqrt{\Gamma}-1-\epsilon)\kappa}{16(1+\epsilon)\Gamma^2}\right\} \rightarrow \min\{1/2, \kappa/(4+4\epsilon)\}$, so for any fixed choice of $\epsilon, \kappa, \Gamma'$, a sufficiently large value of $\Gamma$ causes $\eta > 0$ to hold as desired.

Some value in $\{\min_e c_e, (1+\epsilon)\min_e c_e, \ldots (1+\epsilon)^{\lceil \log_{1+\epsilon}\frac{n \max_e c_e}{\min_e c_e} \rceil} \min_e c_e\}$ satisfies the conditions for $\chi$, and there are polynomially many values in this set. The same holds for $\chi'$ in $\{\min_e c'_e, (1+\epsilon)\min_e c'_e, \ldots (1+\epsilon)^{\lceil \log_{1+\epsilon}\frac{n \max_e c'_e}{\min_e c'_e} \rceil} \min_e c'_e\}$. So we can run $\textsc{DoubleApprox}$ for all pairs of $\chi, \chi'$ (paying a polynomial increase in runtime), and output the union of all outputs, giving the guarantee of Lemma~\ref{lemma:differenceapprox2} by Lemma~\ref{lemma:algconverges}. For each $\chi'$ we choose, we can round the edge costs to the nearest multiple of $\frac{\epsilon}{n}\chi'$ before running \textsc{DoubleApprox}, and by Observation~\ref{obs:rounding} we only pay an additive $O(\epsilon)$ in the approximation factor with respect to $c'$. Finally, we note that by setting $\epsilon$ appropriately in the statement of Lemma~\ref{lemma:algconverges}, we can achieve the approximation guarantee stated in Lemma~\ref{lemma:differenceapprox2} for a different value of $\epsilon$. 

Then, we just need to show \textsc{DoubleApprox} runs in polynomial time. Lemma~\ref{lemma:algconverges} shows that the while loop of Line~\ref{line:mainwhile} only needs to be run a polynomial number ($I$) of times. The while loop for the forward phase runs at most $O(n^2)$ times since each call to \textsc{GreedySwap} decreases the cost with respect to $c$ by at least $\frac{1}{10n^2} \rho$, and once the total decrease exceeds $\rho/2$ the while loop breaks. The while loop for the backward phase runs at most $(\kappa+1+\epsilon)\frac{n}{\epsilon}$ times, since the initial cost with respect to $c'$ is at most $(4\Gamma + \kappa + 1 + \epsilon)\chi'$, the while loop breaks when it is less than $4\Gamma'\chi'$, and each call to \textsc{GreedySwap} improves the cost by at least $\frac{\epsilon}{n} \chi'$. Lastly, \textsc{GreedySwap} can be run in polynomial time as the maximal $a$ which need to be enumerated can be computed in polynomial time as described in Section~\ref{section:localsearch}.
\end{proof}
\section{Hardness Results for Robust Problems}\label{sec:hardness}

We give the following general hardness result for a family of problems that includes many graph optimization problems:

\begin{theorem}\label{thm:hardness}
Let $\mathcal{P}$ be any robust covering problem whose input includes a weighted graph $G$ where the lengths $d_e$ of the edges are given as ranges $[\ell_e, u_e]$ and for which the non-robust version of the problem, $\mathcal{P}'$, has the following properties:

\begin{itemize}
\item A solution to an instance of $\mathcal{P}'$ can be written as a (multi-)set $S$ of edges in $G$, and has cost $\sum_{e \in S} d_e$.
\item Given an input including $G$ to $\mathcal{P}'$, there is a polynomial-time approximation-preserving reduction from solving $\mathcal{P}'$ on this input to solving $\mathcal{P}'$ on some input including $G'$, where $G'$ is the graph formed by taking $G$, adding a new vertex $v^*$, and adding a single edge from $v^*$ to some $v \in V$ of weight 0.
\item For any input including $G$ to $\mathcal{P}'$, given any spanning tree $T$ of $G$, there exists a feasible solution only including edges from $T$.
\end{itemize}

Then, if there exists a polynomial time $(\alpha, \beta)$-robust algorithm for $\mathcal{P}$, there exists a polynomial-time $\beta$-approximation algorithm for $\mathcal{P}$.
\end{theorem}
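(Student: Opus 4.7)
The plan is to convert any $(\alpha,\beta)$-robust algorithm for $\mathcal{P}$ into a polynomial-time $\beta$-approximation for $\mathcal{P}'$. Given a $\mathcal{P}'$-instance on $G$, I first invoke the approximation-preserving reduction of the second bullet to obtain an equivalent $\mathcal{P}'$-instance on $G' = G \cup \{v^*,(v^*,v)\}$; since a $\beta$-approximation on $G'$ pulls back to one on $G$, it suffices to argue on $G'$. I then construct a robust $\mathcal{P}$-instance on $G'$ by setting each $e \in E(G)$ to the range $[\ell_e,u_e] = [0,d_e]$ (so the original weight is the upper bound) and giving the reduction edge $(v^*,v)$ the degenerate range $[0,0]$. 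Feeding this instance to the $(\alpha,\beta)$-robust algorithm yields some feasible $\sol^*$; my claim is that $\sol^*$ is already a $\beta$-approximation to $\opt(\mathcal{P}',G')$.

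The first step in the analysis is to upper-bound $\mr$. Let $\sol_0$ be an optimum of $\mathcal{P}'$ on $G'$, which has cost $\opt(\mathcal{P}',G)$ because the reduction edge contributes nothing. For any realization $\bd$, $\sol_0(\bd) \leq \opt(\mathcal{P}',G)$ (since every edge's realized weight is at most its upper bound) while $\opt(\bd) \geq 0$, so the regret of $\sol_0$ is at most $\opt(\mathcal{P}',G)$. Hence $\mr \leq \opt(\mathcal{P}',G)$.

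The heart of the argument is an adversarial realization $\bd^*$ tailored to $\sol^*$: set $d^*_e = d_e$ for every $e \in \sol^*$, $d^*_e = 0$ for every $e \in E(G) \setminus \sol^*$, and $d^*_{(v^*,v)} = 0$. Then $\sol^*(\bd^*) = c(\sol^*)$, the $\mathcal{P}'$-cost of $\sol^*$ on $G'$. The crucial claim is that $\opt(\bd^*) = 0$. The zero-weight edges in $\bd^*$ are exactly $\{(v^*,v)\} \cup (E(G) \setminus \sol^*)$; if these edges contain a spanning tree of $G'$, then property 3 delivers a feasible $\mathcal{P}'$-solution using only these edges, giving $\opt(\bd^*) = 0$. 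Granting the claim, the $(\alpha,\beta)$-robust guarantee at $\bd^*$ yields
\[
c(\sol^*) = \sol^*(\bd^*) \;\leq\; \alpha\cdot\opt(\bd^*) + \beta\cdot\mr \;\leq\; 0 + \beta\cdot\opt(\mathcal{P}',G),
\]
so $\sol^*$ is the required $\beta$-approximation on $G'$, which transfers to $G$ via the reduction.

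The main obstacle I anticipate is justifying $\opt(\bd^*) = 0$ uniformly: the set $(E(G)\setminus\sol^*) \cup \{(v^*,v)\}$ must span $V(G')$, which can fail when $G$ is sparse and $\sol^*$ happens to contain cut edges of $G$. I see two likely routes to close this gap. One is to exploit the reduction more aggressively, applying it iteratively to attach enough additional zero-weight leaves to $G$ so that for every possible $\sol^*$ there is a spanning tree avoiding $\sol^*$'s $G$-edges; since the reduction is approximation-preserving, iterating it polynomially many times preserves the $\beta$-bound. The other is to replace the single realization $\bd^*$ by a family $\{\bd_T\}_T$ indexed by spanning trees $T$ of $G'$ (with $d_e = 0$ on $T$ and $d_e = d_e$ off $T$), using property 3 to ensure $\opt(\bd_T) = 0$ for each $T$ and then averaging the bounds $\sum_{e\in\sol^*\setminus T} d_e \leq \beta\,\mr$ over a carefully chosen distribution on $T$ to recover $c(\sol^*) \leq \beta\cdot\opt(\mathcal{P}',G)$. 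Both routes rely essentially on the same two ingredients---the range $[0,d_e]$ trick and property 3---but differ in how they certify the existence of a zero-cost alternative to $\sol^*$.
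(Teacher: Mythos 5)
You have correctly located the gap in your own argument: with ranges $[0,d_e]$ on the original edges and the single reduction edge $(v^*,v)$ fixed to zero, the claim $\opt(\bd^*)=0$ requires $(E(G)\setminus\sol^*)\cup\{(v^*,v)\}$ to span $G'$, which fails whenever $\sol^*$ contains a bridge of $G$. Unfortunately, neither of your proposed repairs closes it. Attaching more degree-one vertices via iterated use of the reduction never changes which $G$-edges are bridges: every spanning tree of the augmented graph must still cross every bridge of $G$, so if $\sol^*$ uses a bridge the set of zero-weight edges can never span. The averaging route has the same obstruction in probabilistic form: for a bridge $e\in\sol^*$ you have $\Pr_T[e\notin T]=0$ over \emph{any} distribution on spanning trees $T$, so $\mathbb{E}_T\bigl[\sum_{e\in\sol^*\setminus T}d_e\bigr]$ gives no control on the contribution of bridges to $c(\sol^*)$ and cannot recover $c(\sol^*)\le\beta\cdot\opt$. (Indeed, without the extra idea below, the best you can extract from your instance directly is the realization with $d_e=u_e$ for all $e$, which only yields an $(\alpha+\beta)$-approximation.)

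The paper's proof uses a qualitatively different device that bypasses the connectivity issue: it keeps every edge of $G'$ at its \emph{fixed} $\mathcal{P}'$-weight (so $\ell_e=u_e=d_e$), and instead \emph{adds} a ``special'' edge from $v^*$ to every vertex other than $v$, each with range $[0,\infty]$. These special edges together with $(v^*,v)$ form a star that spans $G'$ regardless of what solution the algorithm returns. Since $\opt(\bd)$ and $\mr$ are finite (the $\mathcal{P}'$-optimum, using only fixed-cost edges, witnesses both), a solution using any special edge would violate the $(\alpha,\beta)$-robust guarantee in the realization where that edge is $\infty$; hence the returned $\sol$ uses no special edges and is a genuine $\mathcal{P}'$-solution on $G'$. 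In the realization where all special edges are $0$, property~3 applied to the star gives $\opt(\bd)=0$, and therefore $\sol(\bd)\le\beta\cdot\mr\le\beta\cdot O$, where $O$ is the optimum of $\mathcal{P}'$ on $G'$. Your plan shares the high-level shape (bound $\mr$, pick an adversarial zero-cost realization), but the essential missing idea is to introduce new, always-available, infinite-upper-bound edges rather than trying to coax a zero-cost spanning structure out of $G$ itself.
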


Before proving Theorem~\ref{thm:hardness}, we note that robust traveling salesman and robust Steiner tree are examples of problems that Theorem~\ref{thm:hardness} implicitly gives lower bounds for. For both problems, the first property clearly holds.

For traveling salesman, given any input $G$, any solution to the problem on input $G'$ as described in Theorem~\ref{thm:hardness} can be turned into a solution of the same cost on input $G$ by removing the new vertex $v^*$ (since $v^*$ was distance 0 from $v$, removing $v^*$ does not affect the length of any tour), giving the second property. For any spanning tree of $G$, a walk on the spanning tree gives a valid TSP tour, giving the third property.

For Steiner tree, for the input with graph $G'$ and the same terminal set, for any solution containing the edge $(v, v^*)$ we can remove this edge and get a solution for the input with graph $G$ that is feasible and of the same cost. Otherwise, the solution is already a solution for the input with graph $G$ that is feasible and of the same cost, so the second property holds. Any spanning tree is a feasible Steiner tree, giving the third property.

We now give the proof of Theorem~\ref{thm:hardness}.

\begin{proof}[Proof of Theorem~\ref{thm:hardness}]
Suppose there exists a polynomial time $(\alpha, \beta)$-robust algorithm $A$ for $\mathcal{P}$. The $\beta$-approximation algorithm for $\mathcal{P}'$ is as follows:

\begin{enumerate}
\item From the input instance $\mathcal{I}$ of $\mathcal{P}$ where the graph is $G$, use the approximation-preserving reduction (that must exist by the second property of the theorem) to construct instance $\mathcal{I}'$ of $\mathcal{P'}$ where the graph is $G'$.
\item Construct an instance $\mathcal{I}''$ of $\mathcal{P}$ from $\mathcal{I}'$ as follows: For all edges in $G'$, their length is fixed to their length in $\mathcal{I'}$. In addition, we add a ``special'' edge from $v^*$ to all vertices besides $v$ with length range $[0, \infty]$\footnote{ $\infty$ is used to simplify the proof, but can be replaced with a sufficiently large finite number. For example, the total weight of all edges in $G$ suffices and has small bit complexity.}.
\item Run $A$ on $\mathcal{I}''$ to get a solution $\sol$. Treat this solution as a solution to $\mathcal{I}'$ (we will show it only uses edges that appear in $\mathcal{I}$). Use the approximation-preserving reduction to convert $\sol$ into a solution for $\mathcal{I}$ and output this solution.
\end{enumerate}

Let $O$ denote the cost of the optimal solution to $\mathcal{I}'$. Then, $\mr \leq O$. To see why, note that the optimal solution to $\mathcal{I}'$ has cost $O$ in all realizations of demands since it only uses edges of fixed cost, and thus its regret is at most $O$. This also implies that for all $\bd$, $\opt(\bd)$ is finite. Then for all $\bd$, $\sol(\bd) \leq \alpha \cdot \opt(\bd) + \beta \cdot \mr$, i.e. $\sol(\bd)$ is finite in all realizations of demands, so $\sol$ does not include any special edges, as any solution with a special edge has infinite cost in some realization of demands.

Now consider the realization of demands $\bd$ where all special edges have length 0. The special edges and the edge $(v, v^*)$ span $G'$, so by the third property of $\mathcal{P}'$ in the theorem statement there is a solution using only cost $0$ edges in this realization, i.e. $\opt(\bd)$ = 0. Then in this realization, $\sol(\bd) \leq \alpha \cdot \opt(\bd) + \beta \cdot \mr \leq \beta \cdot O$. But since $\sol$ does not include any special edges, and all edges besides special edges have fixed cost and their cost is the same in $\mathcal{I}''$ as in $\mathcal{I}'$, $\sol(\bd)$ also is the cost of $\sol$ in instance $\mathcal{I}'$, i.e. $\sol(\bd)$ is a $\beta$-approximation for $\mathcal{I}'$. Since the reduction from $\mathcal{I}$ to $\mathcal{I}'$ is approximation-preserving, we also get a $\beta$-approximation for $\mathcal{I}$.

\end{proof}

From \cite{MiroslavJMM02, KarpinskiLS13} we then get the following hardness results:

\begin{corollary}
Finding an $(\alpha, \beta)$-robust solution for Steiner tree where $\beta < 96/95$ is NP-hard.
\end{corollary}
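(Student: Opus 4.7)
The plan is to derive the corollary as a direct application of Theorem~\ref{thm:hardness}, combined with the known inapproximability threshold for (non-robust) Steiner tree. First I would verify that the three structural properties required by Theorem~\ref{thm:hardness} are satisfied by Steiner tree, largely reusing the observations the authors make immediately after that theorem. Any Steiner tree solution can be expressed as a set of edges with the usual additive cost, so the first property is immediate. For the second property, given an input $(G, T)$ and the modified graph $G'$ obtained by attaching a new vertex $v^*$ to some $v \in V$ via a zero-cost edge (with the terminal set unchanged), any feasible solution on $G'$ either avoids the edge $(v, v^*)$, in which case it is already feasible for $G$ with the same cost, or uses this zero-cost edge, in which case we can drop it without changing cost or feasibility; the reduction is therefore approximation-preserving. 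For the third property, any spanning tree of $G$ is itself a feasible Steiner tree, since it connects all vertices and hence all terminals.

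Next I would invoke Theorem~\ref{thm:hardness} to conclude that any polynomial-time $(\alpha, \beta)$-robust algorithm for robust Steiner tree yields a polynomial-time $\beta$-approximation algorithm for (non-robust) Steiner tree. The key mechanism embedded in that theorem is to construct a robust instance where certain ``special'' edges incident to $v^*$ have length ranges $[0, \infty]$, forcing the robust algorithm to avoid them (else its cost would be unbounded in some realization), and then to consider the realization in which all special edges have length $0$. In that realization $\opt = 0$ because the special edges together with $(v, v^*)$ span $G'$, so the robust guarantee collapses to $\sol(\bd) \le \beta \cdot \mr \le \beta \cdot O$, where $O$ is the optimum of the original Steiner tree instance.

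Finally, I would combine this with the inapproximability result of Chleb\'ik--Chleb\'ikov\'a (the reference \cite{MiroslavJMM02}) and its strengthening by Karpinski--Lampis--Schmied \cite{KarpinskiLS13}, which together establish that Steiner tree cannot be approximated in polynomial time within a factor better than $96/95$ unless $\poly = \np$. Hence a polynomial-time $(\alpha, \beta)$-robust algorithm with $\beta < 96/95$ would yield a polynomial-time $(96/95 - \epsilon)$-approximation for Steiner tree, a contradiction. No step here is especially delicate; the only bookkeeping issue is confirming that the reduction in Theorem~\ref{thm:hardness} preserves the \emph{approximation factor} $\beta$ rather than introducing any multiplicative slack, which follows since the reduction produces a solution whose cost in the constructed realization equals its cost in the original Steiner tree instance.
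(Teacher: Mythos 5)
Your proposal is correct and matches the paper's argument exactly: verify the three structural hypotheses of Theorem~\ref{thm:hardness} for Steiner tree (which the paper does immediately after stating that theorem), invoke the theorem to turn an $(\alpha,\beta)$-robust algorithm into a $\beta$-approximation for non-robust Steiner tree, and then apply the $96/95$ inapproximability bound. The only small imprecision is the attribution: the $96/95$ Steiner tree hardness is due to Chleb\'ik--Chleb\'ikov\'a \cite{MiroslavJMM02}, while \cite{KarpinskiLS13} gives the $121/120$ bound used in the TSP corollary, not a strengthening of the Steiner tree bound.
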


\begin{corollary}
Finding an $(\alpha, \beta)$-robust solution for TSP where $\beta < 121/120$ is NP-hard.
\end{corollary}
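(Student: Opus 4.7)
The plan is to deduce this corollary by applying Theorem~\ref{thm:hardness} to the (non-robust) traveling salesman problem $\mathcal{P}'$, where $\mathcal{P}$ is robust TSP, and then invoking the hardness of approximation result of Karpinski, Lampis, and Schmied~\cite{KarpinskiLS13}, which shows that TSP (in the version that allows repeated edges / equivalently, metric TSP) is NP-hard to approximate within any factor strictly less than $121/120$. Concretely, the contrapositive will say: if we had a polynomial-time $(\alpha, \beta)$-robust algorithm for robust TSP with $\beta < 121/120$, then by Theorem~\ref{thm:hardness} we would obtain a polynomial-time $\beta$-approximation for TSP, contradicting $\poly \neq \np$.

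The main work is therefore to verify that TSP satisfies the three structural properties required by Theorem~\ref{thm:hardness}. The first property is immediate: a TSP tour can be represented as the multiset of edges traversed, and its cost is the sum of the realized edge weights. For the second property, given any TSP instance on graph $G$, consider the instance on $G'$ obtained by attaching a new vertex $v^*$ to some $v \in V$ via a zero-cost edge; any tour on $G'$ must visit $v^*$, but since $(v, v^*)$ has weight $0$ and every visit to $v^*$ can be implemented by traversing $(v, v^*)$ twice, removing these traversals from an optimal $G'$-tour yields a $G$-tour of the same cost, and vice versa. Thus the two optimum values coincide and solutions are convertible in polynomial time, giving an approximation-preserving reduction. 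For the third property, we use that in our formulation edges can be repeated, so for any spanning tree $T$ of $G$ a closed Eulerian walk on the doubled tree $T$ visits every vertex and therefore is a feasible tour using only edges of $T$.

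With these three properties in hand, Theorem~\ref{thm:hardness} applies to $\mathcal{P} = $ robust TSP. Suppose, for contradiction, that there is a polynomial-time $(\alpha, \beta)$-robust algorithm with $\beta < 121/120$ for some $\alpha$. Theorem~\ref{thm:hardness} then produces a polynomial-time $\beta$-approximation algorithm for TSP with $\beta < 121/120$, contradicting the hardness lower bound from~\cite{KarpinskiLS13} under $\poly \neq \np$. The only subtlety worth flagging in carrying this out is to ensure the TSP formulation in~\cite{KarpinskiLS13} matches the one used in this paper (repeated edges allowed, no triangle inequality assumed); as noted in the paper, the two variants of TSP are equivalent for approximation purposes, so the $121/120$ lower bound carries over unchanged. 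I do not anticipate any hard step here — the entire proof is a clean verification of the hypotheses of Theorem~\ref{thm:hardness} followed by plugging in the known hardness constant.
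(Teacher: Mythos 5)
Your proof is correct and follows essentially the same route as the paper: verify the three hypotheses of Theorem~\ref{thm:hardness} for TSP (multiset-of-edges solution, cost-preserving removal of the zero-weight pendant vertex $v^*$, and feasibility of a spanning-tree walk), then invoke the $121/120$ inapproximability bound of Karpinski, Lampis, and Schmied~\cite{KarpinskiLS13}. The only thing you added beyond the paper's brief exposition is the explicit remark that the version of TSP with repeated edges allowed matches the metric version used in~\cite{KarpinskiLS13}, which is a reasonable point to flag but is already addressed earlier in the paper.
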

\section{Conclusion}
\label{sec:conclusion}

In this paper, we designed constant approximation algorithms for the robust Steiner tree (\stt) and traveling salesman problems (\tsp). More precisely, our algorithms take as input a range of possible edge lengths in a graph and obtain a single solution for the problem at hand that can be compared to the optimal solution for any realization of edge lengths in the given ranges. While our approximation bounds for \tsp are small constants, that for \stt are very large constants. A natural question is whether these constants can be made smaller, e.g., of the same scale as classic approximation bounds for \stt. While we did not seek to optimize our constants, obtaining truly small constants for \stt appears to be beyond our techniques, and is an interesting open question. 

More generally, robust algorithms are a key component in the area of optimization under uncertainty that is of much practical and theoretical significance. Indeed, as mentioned in our survey of related work, several different models of robust algorithms have been considered in the literature. 
\eat{
For example, rather than try to optimize for all realizations of costs within the ranges as we do, Bertsimas and Sims~\cite{BertsimasS2003} considered a model where the realization of edge costs can set at most $k$ edge costs to their maximum value (the remaining are set to their minimum value) and the objective is to minimize the maximum cost over all realizations. A somewhat similar setting is the \textit{data-robust} model, where the input includes (a polynomial number of) discrete ``scenarios'' for edge costs, with the goal of finding a solution that is approximately optimal for all the given scenarios~\cite{DhamdhereGRS05}. 
That is, in the input one receives a graph and a polynomial number of scenarios $\bd^{(1)}, \bd^{(2)} \ldots \bd^{(k)}$ and the goal is to find $\alg$ maximum cost across all scenarios is at most some approximation factor times $\min_{\sol} \max_{i \in [k]} \sum_{e \in \sol} d_{e}^{(i)}$. 
In our setting, there are exponentially many scenarios, albeit with a clean structure to them, and we look at the maximum of $\alg(\bd) - \opt(\bd)$ rather than $\alg(\bd)$. 
A variant of the data-robust model is the \textit{recoverable robust} model \cite{ChasseinG15}, where after seeing the chosen scenario the algorithm is allowed to ``recover'' by making a small set of changes to its original solution. This, in turn, was inspired by the two-stage stochastic optimization model (e.g., \cite{GuptaPRS04, DhamdhereGRS05, SwamyS06}) where the scenario is chosen according to a distribution rather than by an adversary. 
Dhamdhere~{\em et al.}~\cite{DhamdhereGRS05} also studies the \textit{demand-robust model}, where edge costs are fixed but the scenarios specify the connectivity requirements, e.g. for demand-robust \stt, there are $k$ sets of terminals $T_1 \ldots T_k$, each with a scaling cost $\sigma_k > 1$. The algorithm now operates in two phases: In the first phase, the algorithm builds a partial solution $T'$ and then one of the scenarios (sets of terminals) $T_i$ is revealed to the algorithm. In the second phase, the algorithm then adds edges to $T'$ to build a solution $T$, but must pay a $\sigma_k$ multiplicative cost on edges added in the second phase. 
}
Optimizing over input ranges is one of the most natural models in robust optimization, but has been restricted in the past to polynomial-time solvable problems because of definitional limitations. We circumvent this by setting regret minimization as our goal, and creating the $(\alpha, \beta)$-approximation framework, which then allows us to consider a large variety of interesting combinatorial optimization problems in this setting.
We hope that our work will lead to more research in robust algorithms for other fundamental problems in combinatorial optimization, particularly in algorithmic graph theory.


\bibliographystyle{abbrv}
\bibliography{ref}

\end{document}